\newcounter{comments}
\newenvironment{displaycomment}{\begin{list}{}{\rightmargin=1cm\leftmargin=1cm}\item\sf\begin{small}}{\end{small}\end{list}}
\newtheorem{thm}{Theorem}[section]
\newtheorem{prop}[thm]{Proposition}
\newtheorem{lem}[thm]{Lemma}
\newtheorem{cor}[thm]{Corollary}
\theoremstyle{definition}
\newtheorem{dfn}[thm]{Definition}
\theoremstyle{remark}
\newtheorem{rem}[thm]{Remark}
\newtheorem{example}[thm]{Example}
\DeclareMathOperator{\diam}{\mathrm{diam}}
\newcommand{\Tr}{\mathrm{Tr}}
\newcommand{\CC}{\mathbb{C}}
\newcommand{\RR}{\mathbb{R}}
\newcommand{\ZZ}{\mathbb{Z}}
\newcommand{\Supp}{\mathrm{Supp}}
\newcommand{\sB}{\mathscr{B}}
\newcommand{\vol}{\mathrm{vol}}
\newcommand{\threebar}{\vert\kern-0.25ex\vert\kern-0.25ex\vert}
\title{Quantization of conductance and the coarse cohomology of partitions}
\author[1]{Matthias Ludewig}
\author[2]{Guo Chuan Thiang}
\affil[1]{University of Regensburg, Germany}
\affil[2]{Beijing International Center for Mathematical Research, Peking University, China}
\date{\today}
\begin{document}
\maketitle
\begin{abstract}
We demonstrate how integer quantization of Hall conductance arises from the large-scale geometry of the sample. Specifically, the Hall conductance is a higher-trace pairing of the Fermi projection with a coarse cohomology class coming from a partition of the geometric sample, whose integrality is proved.
\end{abstract}

\tableofcontents

\section{Introduction}
We introduce coarse cohomology techniques to the study of ``topological phases'' in physics, with particular emphasis on quantization of Hall conductance. 
It is built on two independently developed sets of ideas. The physical one concerns a position-space notion of Chern number for gapped 2D systems, proposed in Appendix C of \cite{Kitaev}. The mathematical one is coarse geometry, or ``large-scale geometry'', as developed by J.\ Roe, \cite{Roe-partition, Roe-cohom}, with a view to index theory on non-compact manifolds.

The main physical application is to the 2D quantum Hall effect (QHE), which is the paradigmatic and experimentally established example of a ``topological phase''. The adjective ``topological'' is a bit of a misnomer, because the phenomenon \emph{ignores} the topology of the sample surface $M$ (holes, lattice versus manifold, etc.), at least on the small scale. What is meant, in practice, is the robustness of the integer quantization of the Hall conductance, even as various laboratory parameters, including the surface geometry, are varied, often quite violently. A standard theoretical explanation of the quantization proceeds by identifying, via a Kubo formula, the Hall conductance with the Chern number of a certain vector bundle of occupied electron states fibred over the momentum space 2-torus \cite{TKNN,Simon}. However, this requires translation invariant setups in Euclidean plane geometry, whereas the QHE does not require such geometric idealizations. Indeed, the precision of the quantization (better than $\sim 10^{-9}$ \cite{JJ}) far exceeds the flatness of any 2D interfaces that can be fabricated \cite{Cheng}. For discretized models, one can even realize \emph{amorphous} topological insulators \cite{Mitchell, Agar} and topological metamaterials on aperiodic patterns \cite{AQPP}. It would seem, then, that no symmetry whatsoever is required to ``protect'' such topological phases --- the \emph{less} symmetry required, the \emph{more} robust and universal.

What, if not symmetry or topology, should underlie the quantization phenomenon? A germ of an answer was given in \cite{Kitaev}, in his ``Chern number for quasidiagonal projections'' over the lattice $M=\ZZ^2$. Partition $M$ into three pieces, $A,B,C$. Consider all triangles $\Delta$, with vertices $x_1,x_2,x_3\in M$, such that exactly one vertex lies in each of $A,B,C$, see Fig.~\ref{fig:loop}. Given a projection  (or merely idempotent) operator $P$ acting on $\ell^2(M)$, take the \emph{oriented} sum of the product of its hopping elements around such triangles,
\begin{equation}
\left(\sum_{{\rm anticlockwise}\;\Delta}-\sum_{{\rm clockwise}\;\Delta}\right)P(x_1,x_2)P(x_2,x_3)P(x_3,x_1),\label{eqn:Kitaev.sum}
\end{equation}
where $P(x,y)$ denotes the matrix element of $P$ from $y$ to $x$.

Closely related to Eq.\ \eqref{eqn:Kitaev.sum} is the following commutator-trace,
\begin{equation}
\Tr[P\chi_XP,P\chi_YP],\label{eqn:Kitaev.commutator.trace}
\end{equation}
where $X,Y\subseteq M$ are transverse half-spaces, and $\chi_X, \chi_Y$ are the respective multiplication operators by their characteristic functions. The Hall conductance is given by an expression such as Eq.\ \eqref{eqn:Kitaev.sum} or \eqref{eqn:Kitaev.commutator.trace}, see Section  \ref{sec:Hall.conductance}. 
Our main Theorem \ref{thm:quantization.RD} establishes the following surprising properties of this conductance:
\begin{itemize}
\item It is finite, for a large class of infinite-rank $P$. [Measurable quantity]
\item It can be non-zero. [Chiral asymmetry]
\item It is an integer multiple of an $M$-independent constant. [Universal quantization]
\item It is stable against variations of the partition choice. [Cobordism invariance]
\item It is stable against continuous variations of $P$. [Stability of quantization]
\end{itemize}
Specifically, we will explain how Eq.\ \eqref{eqn:Kitaev.sum}--\eqref{eqn:Kitaev.commutator.trace} are partition-idempotent pairings in a coarse (co)homology sense. Thus the pairing depends only on the \emph{large-scale} geometry of $X,Y\subseteq M$, and large-scale aspects of $P$. 
Although large-scale geometry is, informally, about geometry ``at infinity'', our work has implications for approximate quantization results on large but bounded $M$ as well, as discussed in Section \ref{sec:finite.size}.
The filtering-out of small-scale details should be compared with effective theory approaches, where it is implicit that large-scale contributions are dominant, and constrained by gauge-invariance principles to be governed by topological field theories \cite{Frolich}.

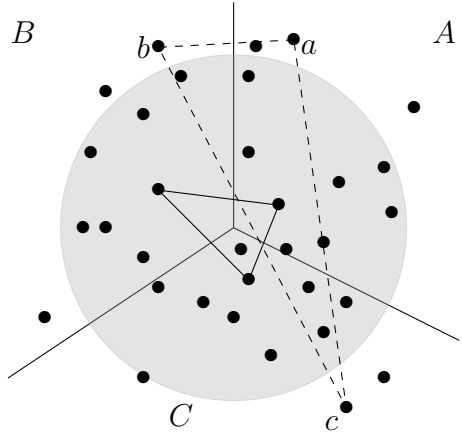
\begin{figure}
\centering
\begin{tikzpicture}
\draw (-3,-2)--(0,0);
\draw (0,0)--(3,-1.5);
\draw (0,0)--(0,3);
\node at (2.8,2.6) {$A$};
\node at (-2.8,2.6) {$B$};
\node at (-0.7,-2.5) {$C$};
\filldraw[color=gray!70,opacity=0.3] (0,0) circle (2.3cm);
%\filldraw[color=gray!50,opacity=0.3] (0,0) circle (2.7cm);
\node at (1,2.4) {$a$};
\node at (-1.2,2.4) {$b$};
\node at (1.3,-2.6) {$c$};
\draw (0.6,0.3) -- (-1,0.5) -- (0.2,-0.7) -- cycle;
\draw[dashed] (0.8,2.5) -- (-1,2.4) -- (1.5,-2.4) -- cycle;
\node at (0.6,0.3) {$\bullet$};
\node at (-1,0.5) {$\bullet$};
\node at (0.2,-0.7) {$\bullet$};
\node at (0.8,2.5) {$\bullet$};
\node at (-1,2.4) {$\bullet$};
\node at (1.5,-2.4) {$\bullet$};
\node at (0.2,1) {$\bullet$};
\node at (0.2,2) {$\bullet$};
\node at (0.3,2.4) {$\bullet$};
\node at (1.4,0.6) {$\bullet$};
\node at (2,0.8) {$\bullet$};
\node at (2.1,0.2) {$\bullet$};
\node at (2.4,1.6) {$\bullet$};
\node at (1.2,-0.2) {$\bullet$};
\node at (1.5,-1) {$\bullet$};
\node at (2,-2) {$\bullet$};
\node at (1.2,-1.4) {$\bullet$};
\node at (1,-0.8) {$\bullet$};
\node at (0.7,-0.3) {$\bullet$};
\node at (0.5,-1.7) {$\bullet$};
\node at (0.1,-0.3) {$\bullet$};
\node at (-0.4,-1) {$\bullet$};
\node at (0,-1.2) {$\bullet$};
\node at (-1,-0.8) {$\bullet$};
\node at (-1.2,-0.4) {$\bullet$};
\node at (-1.2,-2) {$\bullet$};
\node at (-2.51,-1.2) {$\bullet$};
\node at (-1.7,0) {$\bullet$};
\node at (-2,0) {$\bullet$};
\node at (-1.9,1) {$\bullet$};
\node at (-1.7,1.8) {$\bullet$};
\node at (-1.2,1.5) {$\bullet$};
\node at (-0.7,2) {$\bullet$};
\end{tikzpicture}
\caption{Contributions to the pairing $\langle A,B,C;P\rangle$ come from triangular loops, with one vertex lying in each component of the partition of $M$. Triangles with short edges lie within the shaded area near the triple intersection of the partition. Triangles with long edges have highly suppressed contributions, because the hopping terms of $P$ decrease rapidly with respect to hopping distance.}\label{fig:loop}
\end{figure}

\paragraph{Historical remarks.}
A cyclic cohomology--$K$-theory pairing was used in the influential papers of Bellissard et al \cite{BES}, which introduced Connes' noncommutative differential geometry techniques to the study of the QHE. There, the focus was rather on proving integrality for a fixed Euclidean sample geometry. A hyperbolic plane geometry version was similarly considered in \cite{CHMM}. The work of Connes was partly influenced by the classic Helton--Howe work on traces of commutators \cite{HH}, and commutator-traces are closely related to Hall conductance. The development of coarse (co)homology and index theory followed soon after \cite{Roe-cohom, Roe-partition}. 

We mention \cite{HL} which studies a certain Bott index for non translation-invariant lattice models on compact surfaces. 
For lattice models on $\ZZ^2$, standard half-spaces, and finite-range Hamiltonians, an argument in the spirit of Section \ref{sec:PI.pairing} was sketched in \S6 of \cite{Ton}, for the study of the QHE Chern numbers.
The papers \cite{KS, KS2} apply coarse cohomology ideas, but in a very different way, for the study of other types of topological phases.

\section{Partition-idempotent pairing}\label{sec:PI.pairing}

\subsection{An abstract quantization argument}
Let $\mathscr{H}$ be a (complex) Hilbert space.
If $P$ and $X$ are bounded operators on $\mathscr{H}$, we will write
\[
P_X:=PXP.
\]

\begin{prop}\label{prop:minimal.quantization}
Suppose $P$, $X$ and $Y$ are bounded operators on $\mathscr{H}$ such that
\begin{equation}
\label{eqn:minimal.tc.assumptions}
[P_X,P_Y],\qquad \text{and} 
\qquad 
\begin{cases}
(P_X-P_X^2)(P_Y-P_Y^2)\\
(P_Y-P_Y^2)(P_X-P_X^2)\\
(P_X-P_X^2)^*(P_Y-P_Y^2)\\
(P_Y-P_Y^2)(P_X-P_X^2)^*
\end{cases}
\qquad 
\text{are trace class.}
\end{equation}
 Then
\[
2\pi i\cdot\Tr[P_X,P_Y]\;\in\;\ZZ.
\]
\end{prop}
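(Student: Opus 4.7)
My plan is to realize $2\pi i\,\Tr[P_X,P_Y]$ as a Fredholm-theoretic integer invariant. The guiding heuristic is that $\Tr[P_X,P_Y]$ would vanish trivially if $P_X$ and $P_Y$ were honest commuting idempotents; so every contribution must come from either the two idempotency defects $P_X - P_X^2$ and $P_Y - P_Y^2$ or from the commutator $[P_X,P_Y]$, and the four trace-class hypotheses in \eqref{eqn:minimal.tc.assumptions} are calibrated precisely so that these contributions control one another in trace norm.

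The most direct route is via the Fedosov index formula, applied to the auxiliary operator
$$V \;:=\; P_X P_Y + (1 - P_X)(1 - P_Y),$$
which reduces to the identity on commuting idempotents. Direct expansion should express $1 - VV^*$ and $1 - V^*V$ as sums of products involving $[P_X,P_Y]$, the two defects, and their adjoints; the four combinations in \eqref{eqn:minimal.tc.assumptions} --- including the two adjoint-containing ones, which arise exactly on the $V^*V$ side in the non-self-adjoint setting --- guarantee that all such terms lie in $\mathcal{L}^1(\mathscr{H})$. Then $V$ is Fredholm with
$$\operatorname{ind}(V) \;=\; \Tr(1 - V^*V) - \Tr(1 - VV^*) \;\in\; \ZZ,$$
and the remaining step is to identify this integer with $2\pi i\,\Tr[P_X,P_Y]$ by cyclically rearranging the traces. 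An alternative route, which more transparently produces the $2\pi i$, is the Fredholm-determinant approach: set $U := \exp(2\pi i P_X)\exp(2\pi i P_Y)$, which reduces to $I$ whenever $P_X,P_Y$ are honest idempotents (since $e^{2\pi i A} = I$ when $A^2 = A$). Expanding $U - I$ term by term using $P_X^2 = P_X - (P_X - P_X^2)$ and the analogous identity for $P_Y$, every term contains either $[P_X,P_Y]$ or one of the four products in \eqref{eqn:minimal.tc.assumptions}, so that $U - I \in \mathcal{L}^1$. The determinant $\det U$ is then defined, and $\Tr\log U$ can be computed via a homotopy along $U_t := \exp(2\pi i t P_X)\exp(2\pi i t P_Y)$, $U_0 = I$. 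Via Baker-Campbell-Hausdorff and cyclicity of the trace, one expects all nested-commutator terms to collapse, leaving a scalar multiple of $\Tr[P_X,P_Y]$; since $\Tr\log U \in 2\pi i\,\ZZ$ for unitaries with trace-class defect, the conclusion follows.

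The main obstacle in either route is the algebraic bookkeeping: one may never invoke trace-class conditions on the individual defects $P_X - P_X^2$ or $P_Y - P_Y^2$, only on the four specific products. The presence of the two adjoint-containing conditions in \eqref{eqn:minimal.tc.assumptions} is a strong hint that the proof must handle the non-self-adjoint case by tracking $V^*V$ alongside $VV^*$ (or, equivalently, the holomorphic and anti-holomorphic contributions to $\log U$), and this is where the most care will be needed to ensure that every manipulation respects the symmetric structure of the hypotheses and never smuggles in an individual defect as trace class.
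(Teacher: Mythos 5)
Your general instinct---exponentiate $P_X$, $P_Y$ and extract the integer from a Fredholm determinant---points in the right direction, but both of your concrete routes break down at the first analytic step. For the Fedosov route: $V := P_XP_Y+(1-P_X)(1-P_Y)$ does \emph{not} reduce to the identity on commuting idempotents (take $P_X=1$, $P_Y=0$, which gives $V=0$), and even when $P_X,P_Y$ are orthogonal projections one computes $1-V^*V = P_Y(1-P_X)P_Y+(1-P_Y)P_X(1-P_Y)$, which contains no defect factors at all and is emphatically not trace class in the intended applications (transverse half-space compressions give infinite-rank positive operators here). The relative-index machinery behind $\operatorname{ind}(V)=\Tr(1-V^*V)-\Tr(1-VV^*)$ requires $P_X-P_Y$ to be compact with a trace-class power, which is neither hypothesized nor true in this setting. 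For the determinant route: $U-I = (e^{2\pi iP_X}-1)+(e^{2\pi iP_Y}-1)+(e^{2\pi iP_X}-1)(e^{2\pi iP_Y}-1)$, and the two linear terms factor as $\psi(P_X)(P_X-P_X^2)$ and $\psi(P_Y)(P_Y-P_Y^2)$ (writing $e^{2\pi iz}-1=\psi(z)\,z(1-z)$ with $\psi$ holomorphic), i.e.\ each involves a \emph{single} defect; the hypotheses only make \emph{products of the two defects} trace class, so $U-I\notin\mathcal{L}^1$ and $\det U$ is undefined. Moreover $U$ is not unitary (no self-adjointness is assumed), and even a unitary with trace-class defect need not have determinant $1$, so ``$\Tr\log U\in 2\pi i\ZZ$'' is not available as a black box.

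The object whose determinant is actually defined is the multiplicative commutator $e^{2\pi iP_X}e^{2\pi iP_Y}e^{-2\pi iP_X}e^{-2\pi iP_Y}$: setting $S=e^{2\pi iP_X}$, $T=e^{2\pi iP_Y}$, the factorization above shows that $(S-1)(T-1)$, $(T-1)(S-1)$, $(S^*-1)(T-1)$ and $(T-1)(S^*-1)$ are trace class --- this is exactly what the four product hypotheses in \eqref{eqn:minimal.tc.assumptions} are calibrated for. From there the argument rests on two genuinely nontrivial external inputs that your ``BCH and cyclicity collapse'' step silently replaces: Kitaev's determinant formula (proved rigorously by Elgart--Fraas), which asserts that under precisely these four conditions $\det(STS^{-1}T^{-1})=1$, and the Pincus--Helton--Howe identity $\det(e^{A}e^{B}e^{-A}e^{-B})=\exp(\Tr[A,B])$. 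Combining them with $A=2\pi iP_X$, $B=2\pi iP_Y$ gives $(2\pi i)^2\Tr[P_X,P_Y]\in 2\pi i\ZZ$. Neither input follows from a formal homotopy or BCH manipulation: the nested commutators do not collapse at the operator level, and the triviality of the determinant of the multiplicative commutator is the heart of the matter.
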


\begin{proof} 
Consider the holomorphic function $\phi:z\mapsto e^{2\pi iz}-1$ on $\CC$, which has zeroes at integer values of $z$. We can factorize $\phi$ as
\[
\phi(z)=\psi(z)\cdot z(1-z)=\psi(z)\cdot(z-z^2)=(z-z^2)\cdot\psi(z)
\]
for some other holomorphic $\psi$. Functional calculus of $P_X$ and $P_Y$ gives
\begin{align*}
e^{2\pi iP_X}-1&=\phi(P_X)=\psi(P_X)(P_X-P_X^2)=(P_X-P_X^2)\psi(P_X),\\
e^{2\pi iP_Y}-1&=\phi(P_Y)=\psi(P_Y)(P_Y-P_Y^2)=(P_Y-P_Y^2)\psi(P_Y),\\
\end{align*}
Then, due to assumption \eqref{eqn:minimal.tc.assumptions}, the operators
\begin{equation}
\label{eqn:product.exponential.trace.class}
\begin{aligned}
(e^{2\pi i P_X}-1)(e^{2\pi i P_Y}-1)&=\psi(P_X)\cdot (P_X-P_X^2)(P_Y-P_Y^2)\cdot \,\psi(P_Y),\\
(e^{2\pi i P_Y}-1)(e^{2\pi i P_X}-1)&=\psi(P_Y)\cdot (P_Y-P_Y^2)(P_X-P_X^2)\cdot \,\psi(P_X),\\
(e^{2\pi i P_X}-1)^*(e^{2\pi i P_Y}-1)&=\psi(P_X)^*\cdot (P_X-P_X^2)^*(P_Y-P_Y^2) \cdot \,\psi(P_Y),\\
(e^{2\pi i P_Y}-1)(e^{2\pi i P_X}-1)^*&=\psi(P_Y)\cdot (P_Y-P_Y^2)(P_X-P_X^2)^* \cdot \,\psi(P_X)^*,
\end{aligned}
\end{equation}
are trace class.

By a remarkable observation of Kitaev (recalled in Prop.\ \ref{prop:Kitaev.conjecture} below), the trace class properties in Eq.\ \eqref{eqn:product.exponential.trace.class} imply the triviality of the following Fredholm determinant\footnote{This is the well-known generalization of the determinant to operators of the form 1+(trace class).},
\begin{equation}
\det\big(e^{2\pi iP_X}e^{2\pi iP_Y}e^{-2\pi iP_X}e^{-2\pi iP_Y}\big)=1.\label{eqn:vanishing.Fredholm.det}
\end{equation}
Then, by a well-known identity of Pincus--Helton--Howe (Prop.\ 10.1 of \cite{HH}, Eq.\ (1.3) of \cite{EF}), Eq.\ \eqref{eqn:vanishing.Fredholm.det} can be rewritten as
\[
1=\det\big(e^{2\pi iP_X} e^{2\pi iP_Y} e^{-2\pi iP_X} e^{-2\pi iP_Y}\big)=\exp\bigl(\Tr[2\pi iP_X,2\pi iP_Y]\bigr).
\]
Therefore $(2\pi i)^2\cdot \Tr[P_X,P_Y]\in 2\pi i\cdot\ZZ$, as claimed.
\end{proof}

\begin{prop}[Kitaev observation]\label{prop:Kitaev.conjecture}
Let $S,T$ be invertible operators on a Hilbert space. Suppose the following products are trace class,
\[
(S-1)(T-1),\qquad (T-1)(S-1)
\qquad  (S^*-1)(T-1),\qquad (T-1)(S^*-1).
\]
Then the following Fredholm determinant is well-defined and trivial,
\[
\det(STS^{-1}T^{-1})=1.
\]
\end{prop}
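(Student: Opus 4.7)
The assertion splits into (i) well-definedness of the Fredholm determinant and (ii) its triviality.

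\emph{(i) Well-definedness.} Set $A := S - 1$ and $B := T - 1$. Direct expansion gives $[S,T] = ST - TS = AB - BA$, which is trace class by the first two hypotheses. Hence
\[
STS^{-1}T^{-1} - 1 \;=\; (ST - TS)\, S^{-1}T^{-1} \;=\; [S,T]\, S^{-1}T^{-1}
\]
is trace class (trace class times bounded), and the Fredholm determinant is well-defined.

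\emph{(ii) Triviality.} The guiding idea is that $\det$ is a group homomorphism from the invertibles of the form $1 + (\text{trace class})$ to $\CC^\times$, so any commutator of such elements has determinant $1$. The obstacle is that $S, T$ themselves may fail to be of this form; only the commutator $STS^{-1}T^{-1}$ is guaranteed to be. My plan is to approximate. Fix an increasing sequence of finite-rank orthogonal projections $P_n \uparrow 1$ and set $S_n := 1 + P_n A P_n$, $T_n := 1 + P_n B P_n$. For $n$ sufficiently large, invertibility of $S, T$ guarantees invertibility of $S_n, T_n$, which now lie in $1 + (\text{trace class})$, so that $\det(S_n T_n S_n^{-1} T_n^{-1}) = 1$ automatically by the homomorphism property applied to this commutator. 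Continuity of $\det$ in trace norm on $1 + (\text{trace class})$ then yields the desired equality, once one can pass to the limit.

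\emph{Main obstacle.} The essential technical step is the trace-norm convergence $S_n T_n S_n^{-1} T_n^{-1} - 1 \to STS^{-1}T^{-1} - 1$. Using the identity above, this reduces to trace-norm convergence $[S_n, T_n] \to [S, T]$ (the bounded correction factor $S_n^{-1}T_n^{-1} \to S^{-1}T^{-1}$ being handled separately). Expanding $[S_n, T_n] = P_n(A P_n B - B P_n A) P_n$ and comparing against $AB - BA$ produces error terms of the shape $A(1 - P_n) B$ and its transposes, which cannot be estimated from $AB, BA$ trace class alone. This is precisely where the $(S^*-1)(T-1)$ and $(T-1)(S^*-1)$ hypotheses enter: by rewriting $A(1-P_n) = ((1-P_n) A^*)^*$ and using the trace-class properties of $A^*B, BA^*$ together with a Cauchy--Schwarz-type estimate on rank-one expansions, one obtains the required trace-norm smallness. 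Executing these four-term bounds carefully is the principal technical effort of the proof.
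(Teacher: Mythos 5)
First, note what you are up against: the paper does not actually prove this proposition — its ``proof'' consists of citing Elgart--Fraas \cite{EF}, Theorem 1.2, where the full argument lives. Your part (i) is correct: with $A=S-1$, $B=T-1$ one has $ST-TS=AB-BA$, hence $STS^{-1}T^{-1}-1=(AB-BA)S^{-1}T^{-1}$ is trace class and the determinant is defined. Part (ii), however, contains two genuine gaps. The first is the claim that $S_n=1+P_nAP_n$ and $T_n=1+P_nBP_n$ are invertible for large $n$. Since $S_n$ is block diagonal with respect to $\mathrm{Ran}\,P_n\oplus\mathrm{Ran}(1-P_n)$, its invertibility is equivalent to invertibility of the compression $P_nSP_n$ on $\mathrm{Ran}\,P_n$, and compressions of an invertible operator need not be invertible: take $S$ the bilateral shift on $\ell^2(\ZZ)$ and $P_n$ the standard truncations, so that $P_nSP_n$ is a nilpotent matrix for every $n$ (and with $T=1$ the hypotheses hold trivially). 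Because $P_nAP_n\to A$ only strongly, not in operator norm, no perturbation argument rescues this step; and without invertible $S_n$, $T_n$ (together with uniform bounds on $S_n^{-1}$, $T_n^{-1}$, which you would need later to pass to the limit) the scheme stops at the first move.

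The second gap is the trace-norm convergence $A(1-P_n)B\to 0$, which you correctly identify as the crux but do not prove. It is not a routine Cauchy--Schwarz estimate: the inequality one would want, $|AQB|^2=B^*Q^*A^*AQB\le\|Q\|^2\,B^*A^*AB$, fails because $Q^*A^*AQ\not\le\|Q\|^2A^*A$ for a general projection $Q$, and expanding the trace-class operators $A^*B$ and $BA^*$ into rank-one series controls products of the form $A^*(\cdot)B$, not $A(1-P_n)B$. Showing that the four hypotheses force $AQB$ to be trace class, uniformly and with the right continuity in $Q$, is essentially the entire content of the Elgart--Fraas theorem, and their proof is organized quite differently from a finite-rank truncation of $A$ and $B$. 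In summary, your guiding picture (reduce to a multiplicative commutator of determinant-class operators, then use trace-norm continuity of $\det$) is reasonable, but both places where the hypotheses must actually do work are unproved, and the first rests on a false intermediate claim.
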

\begin{proof}
A version of this claim appeared as Eq.\ (133) of \cite{Kitaev}, with a rigorous proof recently supplied in \cite{EF}, Theorem 1.2.
\end{proof}

\begin{rem}
If $P,X,Y$ are self-adjoint, it suffices that $[P_X,P_Y]$ and $(P_X-P_X^2)(P_Y-P_Y^2)$ are trace class, to deduce $2\pi i\cdot \Tr[P_X,P_Y]\in\ZZ$.
\end{rem}

\begin{rem}
One reason to isolate the abstract Proposition \ref{prop:minimal.quantization} is to clarify that the quantization of possible values of $\Tr[P_X,P_Y]$ does \emph{not} require ``topology'' in the sense that a vector bundle Chern number approach (see Introduction) might suggest.
Instead, we will construct operator algebras whose idempotents $P$, together with certain geometrically determined projections $X,Y$, satisfy the conditions of Proposition \ref{prop:minimal.quantization}. 
Where appropriate, the operator algebra has a topology with respect to which $P\mapsto \Tr[P_X,P_Y]$ is continuous, thus locally constant. 
In applications, one would like such an operator algebra to be affiliated to the physical Hamiltonian operators at hand, in the sense of containing the spectral projections for compact, separated parts of the spectrum.
This is satisfied in relevant examples; see Section \ref{sec:Hall.conductance}.
\end{rem}

\subsection{Coarsely transverse half-spaces and partitions}
Subsequently, we assume that $(M,\mu, d)$ is a metric measure space, which is \emph{proper} in the sense that closed and bounded subsets are compact. 
Given a subset $Z\subseteq M$, the set of points lying within distance $r$ from $Z$ is denoted
\[
Z_r:=\{x\in M\,:\,d(x,Z)\leq r\}.
\]
We note that if $A, B \subseteq M$ are subsets, we have

\begin{equation*}
(A \cup B)_r = A_r\cup B_r,\qquad \text{and} \qquad (A \cap B)_r \subseteq A_r\cap B_r,
\end{equation*}
which we will use without comment throughout.

\begin{dfn}\label{dfn:transverse.subsets}
A collection of subsets $A_0,\ldots,A_q\subseteq M$ is \emph{coarsely transverse} if
\begin{equation*}
(A_0)_r\cap\ldots \cap (A_q)_r\;\;\;\mathrm{is\;bounded}\;\;\forall r>0.
%\label{eqn:coarse.partition}
\end{equation*}
\end{dfn}

\begin{dfn}\label{dfn:transverse.half.spaces}
We call a pair of subsets $X,Y\subseteq M$ \emph{coarsely transverse half-spaces} if the subsets $X,X^c,Y,Y^c$ are coarsely transverse in the sense of Definition \ref{dfn:transverse.subsets}.
\end{dfn}

\begin{example}
For $M=\RR^2$ the Euclidean plane, the Cartesian coordinate half-spaces $X=\{(x,y):x\geq 0\}$ and $Y=\{(x,y):y\geq 0\}$ are coarsely transverse. 
In \cite{SZ}, a related notion of \emph{multi-partitioned manifolds} is studied. 
\end{example}

Closely related to the notion of coarsely transverse half-spaces is that of a partition:
\begin{dfn}\label{dfn:partition}
For $q\geq 0$, a \emph{$q$-partition} of $M$ is an ordered, coarsely transverse collection $(A_0,\ldots,A_q)$ of disjoint Borel subsets whose union is $M$.
\end{dfn}

\begin{example}
Only compact spaces $M$ admit $0$-partitions, namely $(M)$ itself. The \emph{partitioned manifolds} studied by J.\ Roe in \cite{Roe-partition} provide examples of 1-partitions. This is a Riemannian manifold $M=M^+\cup M^-$ with $N=M^+\cap M^-=\partial M^+=\partial M^-$ a compact hypersurface. Then $(M^+,M^-\setminus N)$ is a 1-partition of $M$. Some 2-partitions are illustrated in Fig.\ \ref{fig:subdivision}. 
\end{example}

\begin{example}\label{ex:Euclidean.simplex}
For $M=\RR^q$, let $x_0, \dots, x_q \in \RR^q$ be $q+1$ points such that zero lies in the interior of their convex hull. Then
\begin{equation*}
A_i = \left\{\sum_{k=0}^q \lambda_k x_k \mid \lambda_k \in \RR_{\geq 0}, \lambda_i = 0\right\},\qquad i=0,\ldots, q,
\end{equation*}
gives a $q$-partition of $\RR^q$. (Points lying on an overlap may be assigned to any of the $A_i$ that they belong to.)
\end{example}

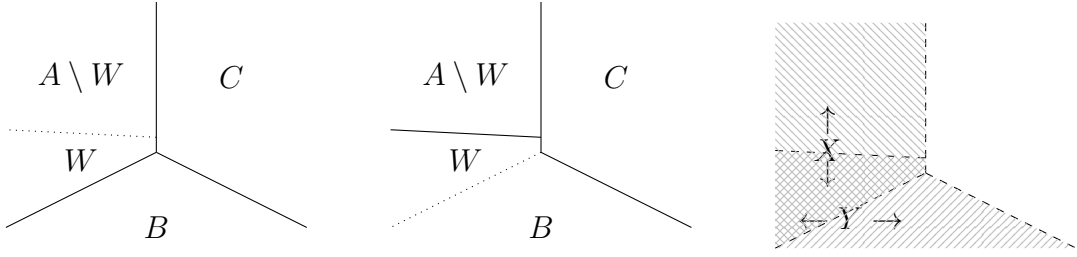
\begin{figure}
\centering
\begin{tikzpicture}
\draw (-2,-1)--(0,0);
\draw (0,0)--(2,-1);
\draw (0,0)--(0,2);
\draw[dotted] (0,0.2)--(-2,0.3);
\node at (-1,-0.1) {$W$};
\node at (-1,1) {$A\setminus W$};
\node at (0,-1) {$B$};
\node at (1,1) {$C$};
\end{tikzpicture}
\hspace{2em}
\begin{tikzpicture}
\draw (2,-1)--(0,0);
\draw[dotted] (0,0)--(-2,-1);
\draw (0,0)--(0,2);
\draw (0,0.2)--(-2,0.3);
\node at (0,-1) {$B$};
\node at (-1,1) {$A\setminus W$};
\node at (-1,-0.1) {$W$};
\node at (1,1) {$C$};
\end{tikzpicture}
\hspace{2em}
\begin{tikzpicture}
\draw[dashed] (-2,-1)--(0,0);
\draw[dashed] (0,0)--(2,-1);
\draw[dashed] (0,0.2)--(0,2);
\draw[dashed] (0,0)--(0,0.2);
\draw[dashed] (0,0.2)--(-2,0.3);
\node at (-1.3,0.3) {$X$};
\node at (-1.3,0.7) {$\uparrow$};
\node at (-1.3,0) {$\downarrow$};
\node at (-1,-0.6) {$\leftarrow Y\rightarrow$};
\fill[pattern=north west lines, pattern color=gray!50] (0,0) -- (0,2) -- (-2,2) -- (-2,-1) -- (0,0);
\fill[pattern=north east lines, pattern color=gray!50] (0,0) -- (2,-1) -- (-2,-1) -- (-2,0.3) -- (0,0.2) -- (0,0);
\end{tikzpicture}
\caption{The 2-partitions $(A,B,C)$ and $(A\setminus W,B\sqcup W,C)$ of the plane are cobordant. They are associated to the pair of coarsely transverse half-spaces $X,Y$.}\label{fig:subdivision}
\end{figure}

The following two Lemmas relate the notions of 2-partitions and coarsely transverse half-spaces.

\begin{lem}\label{lem:partition.from.multipartition}
Let $X,Y\subseteq M$ be coarsely transverse half-spaces. Then
\begin{equation*}
(X, X^c\cap Y, X^c\cap Y^c).
\end{equation*}
is a 2-partition of $M$.
\end{lem}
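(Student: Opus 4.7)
The plan is to verify the three defining properties of a 2-partition (Definition \ref{dfn:partition}) for the triple $(X, X^c\cap Y, X^c\cap Y^c)$: disjointness, covering $M$, and coarse transversality; Borel-measurability is automatic from the Borel assumption on $X,Y$.

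First I would dispense with the set-theoretic content. Pairwise disjointness follows from $X\cap(X^c\cap Y)=\emptyset$, $X\cap(X^c\cap Y^c)=\emptyset$, and $(X^c\cap Y)\cap(X^c\cap Y^c)=X^c\cap(Y\cap Y^c)=\emptyset$. The union is all of $M$ since
\[
X\cup(X^c\cap Y)\cup(X^c\cap Y^c)=X\cup\bigl(X^c\cap(Y\cup Y^c)\bigr)=X\cup X^c=M.
\]

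The substantive step is coarse transversality in the sense of Definition \ref{dfn:transverse.subsets}. Here I would use the elementary inclusion $(A\cap B)_r\subseteq A_r\cap B_r$ noted earlier in the text, applied to the last two sets of the triple, to bound
\[
X_r\cap(X^c\cap Y)_r\cap(X^c\cap Y^c)_r\;\subseteq\;X_r\cap(X^c)_r\cap Y_r\cap(Y^c)_r.
\]
The right-hand side is bounded for every $r>0$ by the hypothesis that $X, X^c, Y, Y^c$ are coarsely transverse (Definition \ref{dfn:transverse.half.spaces}), so the same holds for the left-hand side. This completes the verification.

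There is essentially no obstacle: the argument is a routine manipulation of complements and the monotone behavior of the $r$-neighborhood operation $Z\mapsto Z_r$. The only thing worth flagging is that the inclusion $(A\cap B)_r\subseteq A_r\cap B_r$ can be strict, which is why the reduction to the coarse transversality of the four half-spaces $X, X^c, Y, Y^c$ goes through in this direction but not the other; fortunately, only this direction is needed.
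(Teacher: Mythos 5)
Your proof is correct and follows exactly the paper's argument: the key step is the inclusion $X_r\cap(X^c\cap Y)_r\cap(X^c\cap Y^c)_r\subseteq X_r\cap(X^c)_r\cap Y_r\cap(Y^c)_r$, which is precisely what the paper writes down, with the set-theoretic checks (disjointness, covering) left implicit there. Nothing to add.
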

\begin{proof}
We have
\begin{equation*}
X_r\cap (X^c\cap Y)_r\cap (X^c\cap Y^c)_r\;\subseteq X_r\cap (X^c)_r\cap Y_r\cap (Y^c)_r,
\end{equation*}
with the latter bounded by definition of $X,Y$ being coarsely transverse.
\end{proof}

We can also go from a 2-partition to a pair of coarsely-transverse half-spaces:

\begin{lem}\label{lem:partition.to.multiplatition.rd}
Let $(A,B,C)$ be a 2-partition of $M$. Define
\[
W=\{a\in A\,:\,d(a,B)\leq d(a,C)\}\subseteq A.
\]
Then
\[
X=A,\qquad Y=W\sqcup B
\]
are coarsely-transverse half-spaces.
\end{lem}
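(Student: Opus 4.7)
The plan is to verify the coarse transversality condition $X_r \cap (X^c)_r \cap Y_r \cap (Y^c)_r$ is bounded for every $r > 0$, by expanding everything in terms of $A$, $B$, $C$, $W$, and then exploiting the defining property of $W$ to relate nearness-to-$B$ and nearness-to-$C$.

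First I would record the complements: $X^c = B \sqcup C$ and $Y^c = (A\setminus W) \sqcup C$. Using the identity $(U\sqcup V)_r = U_r \cup V_r$ and distributing intersections over unions, the set $X_r \cap (X^c)_r \cap Y_r \cap (Y^c)_r$ expands as a union of intersections among $A_r$, $B_r$, $C_r$, $W_r$, and $(A\setminus W)_r$. After cancelling terms that are contained in others (e.g. $B_r \cap C_r$ absorbs $B_r \cap C_r \cap W_r$), one is left with showing that the three sets
\[
A_r\cap B_r\cap C_r,\qquad A_r\cap B_r\cap (A\setminus W)_r,\qquad A_r\cap C_r\cap W_r
\]
are each bounded.

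The first is bounded directly because $(A,B,C)$ is a $2$-partition, hence coarsely transverse. For the second, let $x \in A_r \cap B_r \cap (A\setminus W)_r$; there is some $a' \in A\setminus W$ with $d(a',x)\leq r$, and some $b \in B$ with $d(b,x) \leq r$, so $d(a',B)\leq 2r$. By the very definition of $A\setminus W$, we have $d(a',C) < d(a',B) \leq 2r$, so $d(x,C)\leq d(x,a') + d(a',C) \leq 3r$. Hence $x \in A_{3r} \cap B_{3r} \cap C_{3r}$. The same kind of argument, using that $w \in W$ satisfies $d(w,B)\leq d(w,C)$, shows that any $x \in A_r \cap C_r \cap W_r$ lies in $A_{3r} \cap B_{3r} \cap C_{3r}$. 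Thus all three pieces are contained in $A_{3r}\cap B_{3r}\cap C_{3r}$, which is bounded by coarse transversality of the partition.

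The main obstacle is mostly bookkeeping: carrying out the set-theoretic expansion cleanly, and recognising that the two ``mixed'' contributions involving $W$ and $A\setminus W$ can each be converted into a statement about triple intersection with all three partition pieces by a single triangle-inequality argument based on the defining inequality for $W$. Once this conversion is done, coarse transversality of $(A,B,C)$ closes the proof.
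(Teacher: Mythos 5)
Your proof is correct and takes essentially the same route as the paper's: both reduce the quadruple intersection $X_r\cap(X^c)_r\cap Y_r\cap(Y^c)_r$ to a finite union of triple intersections and then use the defining inequality of $W$ together with the triangle inequality to show that points near both $W$ and $C$ (resp.\ near both $A\setminus W$ and $B$) are in fact close to all of $A$, $B$, $C$, so that coarse transversality of the partition applies. The only nitpick is that $x\in(A\setminus W)_r$ guarantees a point $a'\in A\setminus W$ with $d(x,a')\leq 2r$ (or $r+\epsilon$) rather than $\leq r$, since the infimum need not be attained; this only changes your constants, and the paper handles it the same way.
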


\begin{rem}
\label{RemarkBackAndForth}
Observe that if we start with a 2-partition $(A, B, C)$, form the corresponding pair of coarsely transverse half-spaces $X$, $Y$ as in Lemma~\ref{lem:partition.to.multiplatition.rd} and then take the 2-partition determined by these half-spaces as in Lemma~\ref{lem:partition.from.multipartition}, we get back $(A, B, C)$.
The converse is generally false.
\end{rem}

\begin{proof}
Let $r>0$ and suppose $x\in W_r\cap C_r$. There exists $a \in W$ such that $d(x, a) \leq 2r$.
Then
\begin{align*}
d(x,B)&\leq d(x,a)+d(a,B) \leq 2r+d(a,C) \leq 2r+d(a,x) +d(x,C)\leq 6r,
\end{align*}
thus $W_r\cap C_r \subseteq B_{6r}$.
On the other hand, the same intersection is trivially contained in $A_{6r}$ and $C_{6r}$, hence it is bounded since $(A,B,C)$ forms a 2-partition.
Similarly, 
\[
(A\setminus W)_r\cap B_r
\]
is bounded. 
We now have that
\begin{equation*}
X_r\cap(X^c)_r\cap Y_r\cap(Y^c)_r\;\subseteq\; A_r\cap(B\sqcup C)_r\cap (W\sqcup B)_r\cap ((A\setminus W)\sqcup C)_r,
\end{equation*}
which by the observations before is a finite union of bounded subsets, thus bounded.
\end{proof}

\subsection{Partition-idempotent pairing}\label{sec:fp.pairing}

With $\mu$ a Borel measure on $M$, the Hilbert space $L^2(M)\equiv L^2(M,\mu)$ comes with a representation of the algebra $B(M)$ of bounded Borel functions on $M$.
Throughout, abusing notation, we write $A$ for the operator that multiplies by the characteristic function of a Borel subset $A \subseteq M$.

\begin{dfn} We write $\sB_{\rm fin}(M)$ for the $*$-subalgebra of bounded operators on $L^2(M)$ comprising operators $L$ which are:
\begin{enumerate}[(i)]
\item 
\emph{Locally trace class}: $KL$ and $LK$ are trace class for all bounded Borel subsets $K \subseteq M$.
\item  
\emph{Finite-propagation}: there exists $r>0$ such that whenever two bounded Borel subsets $A, B \subseteq M$  have $d(A,B)>r$, then $ALB=0$. 
\end{enumerate}
\end{dfn}

Finite propagation, or ``finite hopping range'', means that the integral kernel of $L$ is supported within some finite distance $r$ from the diagonal of $M\times M$.

\begin{comment}
\begin{dfn}\todo{T: Will we remove this (affects numbering)?}
For a subset $Z\subseteq M$, we write $\sB_{\rm fin}(M;Z)$ for the ideal of operators $L\in\sB_{\rm fin}(M)$ which are furthermore supported within a finite distance from $Z$: there exists $r>0$ such that for all Borel subsets $A \subseteq M$ with $d(A, Z) > r$, we have $AL = LA = 0$.
\end{dfn}
\end{comment}

For Borel subsets $A, B, C \subseteq M$ and an idempotent $P = P^2 \in \mathcal{L}(L^2(M))$, define a new bounded operator on $L^2(M)$ by the formula\footnote{Recall that we write $A$ for the operator that multiplies by the characteristic function of $A \subseteq M$.}
\begin{equation}
\label{generalizedCommutator}
\begin{aligned}
[ A, B, C]_P &:=  APBPCP+BPCPAP+CPAPBP \\
&\qquad-CPBPAP-BPAPCP-APCPBP.
\end{aligned}
\end{equation}
By definition, the expression $[ A, B, C]_P$ is anti-symmetric in $A$, $B$ and $C$. 
In particular, $[ A, B, C]_P = 0$ whenever two of the operators $A$, $B$ and $C$ are equal.
Moreover, if $A$ and $A^\prime$ are disjoint subsets of $M$, we have
\begin{equation}
\label{AdditivityGeneralizedCommutator}
  [ A \sqcup A^\prime, B, C]_P = [ A, B, C]_P + [A^\prime, B, C]_P
\end{equation}
and similar additivity properties hold in the other entries.
Using the idempotent property of $P$, we also directly obtain the formula
\begin{equation}
\label{OneOfTheEntriesIsM}
  [A, B, M]_P = [P_A, P_B], 
\end{equation}
where $P_A = PAP$, $P_B = PBP$.

Observe that if $P$ is contained in $\sB_{\rm fin}(M)$ and has propagation bound $r$, then $[ A, B, C]_P$ has propagation bound $3r$, and hence $[ A, B, C]_P=[ A, B, C]_P K $ where
\[
K=A_{3r}\cap B_{3r}\cap C_{3r}.
\]
In particular, if $(A, B, C)$ is a 2-partition, so $K$ is bounded, we have that $[ A, B, C]_P = [ A, B, C]_P K$ is trace class, since $PK$ is, by the defining property of $\sB_{\rm fin}(M)$.

\begin{dfn}\label{dfn:partition.idempotent.pairing}
Let $(A,B,C)$ be a 2-partition of $M$, and $P=P^2\in\sB_{\rm fin}(M)$ be an idempotent. Their \emph{pairing} is
\begin{align}
\label{eqn:Kitaev.2.current}
\langle A,B,C;P\rangle&:=\Tr [A, B, C]_P.
\end{align}
\end{dfn}

\begin{lem}\label{lem:pairing.is.commutator.trace}
For any 2-partition $(A,B,C)$ of $M$, and any $P=P^2\in \sB_{\rm fin}(M)$, the operator $[P_A,P_B]$ is trace class, and
\begin{equation}
\langle A,B,C;P\rangle =\Tr[P_A,P_B].\label{eqn:commutator.trace.class}
\end{equation}
\end{lem}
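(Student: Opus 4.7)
The plan is to reduce the statement to the identities for $[A,B,C]_P$ that were already established in the text, namely the anti-symmetry, the additivity \eqref{AdditivityGeneralizedCommutator}, and the formula \eqref{OneOfTheEntriesIsM}. The essential observation is that the full space $M$ decomposes as the disjoint union $A \sqcup B \sqcup C$, and this disjointness lets us replace $M$ by $C$ in the last argument of $[A,B,M]_P$ at no cost.

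Concretely, I would first apply the additivity property \eqref{AdditivityGeneralizedCommutator} in the third slot, using $M = A \sqcup B \sqcup C$, to write
\begin{equation*}
[A, B, M]_P \;=\; [A,B,A]_P + [A,B,B]_P + [A,B,C]_P.
\end{equation*}
By the anti-symmetry of $[\,\cdot\,,\cdot\,,\cdot\,]_P$, the first two summands on the right vanish because two entries coincide, leaving $[A,B,M]_P = [A,B,C]_P$. Combining this with the formula \eqref{OneOfTheEntriesIsM} yields the operator identity
\begin{equation*}
[P_A, P_B] \;=\; [A, B, C]_P.
\end{equation*}

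The trace-class property of $[P_A, P_B]$ and the trace equality then follow immediately: the discussion preceding Definition \ref{dfn:partition.idempotent.pairing} showed that, since $(A,B,C)$ is a 2-partition and $P \in \sB_{\rm fin}(M)$, the operator $[A,B,C]_P$ factors as $[A,B,C]_P \cdot K$ with $K = A_{3r}\cap B_{3r}\cap C_{3r}$ bounded, and hence is trace class. Taking the trace of the identity above gives $\Tr[P_A,P_B] = \Tr[A,B,C]_P = \langle A,B,C;P\rangle$.

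There is no real obstacle here: the lemma is essentially a bookkeeping consequence of the algebraic identities that were recorded just above the definition. The only minor point to verify is that the additivity \eqref{AdditivityGeneralizedCommutator}, stated for a single entry in the text, applies when $M$ itself is split as $A \sqcup B \sqcup C$; this is immediate by applying the rule twice in the third slot. Everything else is a substitution and a use of anti-symmetry.
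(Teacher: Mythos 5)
Your argument is correct and is essentially identical to the paper's own proof: both decompose $M = A \sqcup B \sqcup C$ in the third slot, kill the $[A,B,A]_P$ and $[A,B,B]_P$ terms by anti-symmetry, invoke \eqref{OneOfTheEntriesIsM} to get $[A,B,C]_P = [P_A,P_B]$, and then cite the trace-class observation preceding Definition \ref{dfn:partition.idempotent.pairing}. No gaps.
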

\begin{proof}
Using the additivity property \eqref{AdditivityGeneralizedCommutator} and skew-symmetry of \eqref{generalizedCommutator}, we calculate
\begin{equation*}
[A, B, C]_P 
= [A, B, M]_P - \underbrace{[A, B, A]_P}_{=0} - \underbrace{[A, B, B]_P}_{=0}  = [P_A, P_B],
\end{equation*}
where in the last equation we also used \eqref{OneOfTheEntriesIsM}.
We saw above that the left hand side is trace class for a 2-partition $(A, B, C)$, hence the same is true for the right hand side.
Taking the trace gives \eqref{eqn:commutator.trace.class}.
\end{proof}

\begin{rem}
For $M=\ZZ^2$, the expressions \eqref{eqn:Kitaev.2.current}-\eqref{eqn:commutator.trace.class} appear in \cite[Eq.\ (125)]{Kitaev}. In this case, one can write $P$ as an infinite matrix, $P = (P_{ij})_{i, j \in \ZZ^2}$, and obtains
\begin{equation}
  \langle A,B,C;P\rangle = 3 \sum_{i \in A}\sum_{j \in B}\sum_{k \in C} (P_{ij} P_{jk} P_{ki} - P_{ik} P_{kj} P_{ji}).\label{eqn:2-current.expression}
\end{equation}
In other words, $\langle A,B,C;P\rangle$ is a signed sum over all oriented triangles such that one vertex lies in $A$, one in $B$ and one in $C$. 
The summand in \eqref{eqn:2-current.expression} was called a ``2-current'' in \cite{Kitaev}.
\end{rem}

\begin{lem}\label{lem:vanishing.pair.coarsely.transverse}
Let $P=P^2\in\sB_{\rm fin}(M)$. Suppose $A,B\subseteq M$ are coarsely transverse, then $\Tr[P_A,P_B]=0$. In particular, if any two of the subsets in a 2-partition $(A,B,C)$ are already coarsely transverse, then $\langle A,B,C;P\rangle=0$.
\end{lem}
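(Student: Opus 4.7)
The plan is to combine the finite propagation of $P$ with the coarse transversality of $A, B$ to force all relevant operators to be supported within bounded subsets of $M$, and then to invoke trace cyclicity. Let $r > 0$ be a propagation bound for $P$. The kernel of $APB$ vanishes outside $(A \cap B_r) \times (B \cap A_r)$, since the multiplications demand $x \in A, y \in B$ while $P(x,y) \neq 0$ requires $d(x,y) \leq r$. Coarse transversality of $A, B$ makes $A_r \cap B_r$ bounded, so the same is true of these two smaller sets, and we can write $APB = \chi_{A_r \cap B_r} \cdot APB \cdot \chi_{A_r \cap B_r}$, which is trace class by the locally trace class property of $P \in \sB_{\rm fin}(M)$. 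The same argument, allowing one additional propagation step on the outside, shows that $BPA$, $APBP$, $BPAP$, $PAPBP$ and $PBPAP$ are all trace class.

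Using $P^2 = P$, I would rewrite
\[
[P_A, P_B] = PAPBP - PBPAP = P(APB - BPA)P,
\]
so in particular $[P_A, P_B]$ is trace class. Standard cyclicity of the trace (for a trace-class operator against a bounded one), combined with $P^2 = P$, gives
\[
\Tr[P_A, P_B] = \Tr(APBP) - \Tr(BPAP).
\]
The final step is the identity $\Tr(APBP) = \Tr(BPAP)$, which follows from the general cyclicity $\Tr(XY) = \Tr(YX)$ applied to the bounded operators $X = AP$, $Y = BP$, whose products $XY = APBP$ and $YX = BPAP$ are both trace class by the previous step. This slightly non-elementary form of cyclicity, for bounded operators whose products in both orders are trace class, is the one genuinely delicate ingredient, though it is a standard result.

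For the ``in particular'' clause, I would combine antisymmetry of $[A, B, C]_P$ under cyclic relabellings with Lemma \ref{lem:pairing.is.commutator.trace}, applied in turn to each cyclic reordering of the 2-partition, to write
\[
\langle A, B, C; P \rangle = \Tr[P_A, P_B] = \Tr[P_B, P_C] = \Tr[P_C, P_A].
\]
The first half of the lemma then forces the pairing to vanish as soon as any one of the three pairs $\{A, B\}$, $\{B, C\}$, $\{A, C\}$ is coarsely transverse. The main obstacle throughout is careful bookkeeping of trace class membership: the individual bounded operators $P_A, P_B$ themselves need not be trace class, so cyclicity among them is not directly available; one must first identify the trace-class ``inner'' pieces such as $APB$ and $BPA$, verify that the relevant longer chains remain trace class, and only then invoke cyclicity where it is legal.
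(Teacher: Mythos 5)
Your proof is correct and follows essentially the same route as the paper: use finite propagation of $P$ plus coarse transversality of $A,B$ to show that $P_AP_B=PAPBP$ and $P_BP_A=PBPAP$ are individually trace class, then equate their traces. The ``slightly non-elementary'' cyclicity you invoke for $X=AP$, $Y=BP$ (both products trace class but neither factor trace class) is exactly the ingredient the paper makes explicit via Lidskii's theorem, namely that $XY$ and $YX$ have the same non-zero eigenvalues and hence equal traces.
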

\begin{proof}
Similar to the argument given before Definition \eqref{dfn:partition.idempotent.pairing}, one observes that finite propagation of $P$ and coarse transversality of $A,B$ imply that 
$P_AP_B=PAPBP$ and $P_BP_A=PBPAP$ are individually trace class. 
Furthermore, they have the same non-zero eigenvalues. Lidskii's trace theorem \cite{Lidskii} (trace equals the sum of eigenvalues) therefore implies that $0=\Tr(P_AP_B)-\Tr(P_BP_A)=\Tr[P_A,P_B]$.
\end{proof}

\subsection{Cobordism invariance of pairing}
\begin{lem}
\label{CobordismLemma}
Fix a $q$-partition $(A_0, \dots, A_q)$ of $M$.
For indices $i \neq j$, let $W \subseteq A_i$ be a subset such that the collection of sets
\begin{equation}
\label{change.partition.condition}
  W, A_0, \dots, \hat{A}_i, \dots, \hat{A}_j, \dots, A_q  \qquad \text{is coarsely transverse},
\end{equation}
 where the hats denote that $A_i$ and $A_j$ are left out.
Then \mbox{$(A_0, \dots, {A_i \setminus W}, \dots, {W\sqcup A_j},  \dots, A_q)$} is another $q$-partition of $M$.
\end{lem}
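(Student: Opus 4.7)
The plan is to verify the two defining properties of a $q$-partition for the new collection: (a) the sets are pairwise disjoint Borel subsets whose union equals $M$, and (b) they are coarsely transverse.

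Property (a) will be immediate bookkeeping. Assuming $W$ is Borel (which is implicit for the coarse transversality hypothesis to make sense), the sets $A_i \setminus W$ and $W \sqcup A_j$ are Borel. Since $W \subseteq A_i$ and the original $A_k$ are pairwise disjoint, one readily checks that $A_i \setminus W$, $W \sqcup A_j$, and the unchanged $A_k$ (for $k \neq i, j$) are pairwise disjoint; and $(A_i \setminus W) \cup (W \sqcup A_j) = A_i \cup A_j$, so the total union remains $M$.

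For property (b), fix $r > 0$ and consider the relevant intersection of $r$-neighborhoods. Using $(A_i \setminus W)_r \subseteq (A_i)_r$ and the identity $(W \sqcup A_j)_r = W_r \cup (A_j)_r$, I would bound
\[
\bigcap_{k \neq i,j}(A_k)_r \;\cap\; (A_i \setminus W)_r \;\cap\; (W \sqcup A_j)_r
\;\subseteq\; \bigcap_{k \neq i,j}(A_k)_r \,\cap\, (A_i)_r \,\cap\, \bigl(W_r \cup (A_j)_r\bigr).
\]
Distributing the union over the intersection splits this into two pieces. The piece involving $(A_j)_r$ is contained in $\bigcap_{k=0}^q (A_k)_r$, which is bounded because $(A_0, \ldots, A_q)$ is coarsely transverse by assumption. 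The piece involving $W_r$ is contained in $W_r \cap \bigcap_{k \neq i,j} (A_k)_r$ (after discarding $(A_i)_r$, which contains $W_r$), and this is bounded precisely by the hypothesis \eqref{change.partition.condition} that $W, A_0, \ldots, \hat{A}_i, \ldots, \hat{A}_j, \ldots, A_q$ is coarsely transverse.

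Combining the two bounded pieces shows that the intersection is bounded, establishing coarse transversality of the new collection. I do not anticipate any real obstacle here: the whole argument is a routine set-theoretic unpacking, and the key conceptual point is already encoded in the hypothesis \eqref{change.partition.condition}, which was formulated exactly to make the $W_r$-piece bounded after one discards the now-irrelevant neighborhood $(A_j)_r$.
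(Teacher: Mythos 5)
Your proof is correct and follows essentially the same route as the paper's: both split $(W\sqcup A_j)_r = W_r\cup (A_j)_r$, distribute over the intersection, and bound the $W_r$-piece by hypothesis \eqref{change.partition.condition} and the $(A_j)_r$-piece by coarse transversality of the original partition. The only cosmetic difference is that you also spell out the disjointness/union bookkeeping, which the paper omits as evident.
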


\begin{proof}
For all $r>0$, we have
\begin{align*}
(A_i\setminus W)_r\cap (W\sqcup A_j)_r
&=(A_i\setminus W)_r\cap\big[W_r\cup(A_j)_r\bigr]\\
&\subseteq \big[(A_i\setminus W)_r\cap W_r\big]\cup\big[(A_i\setminus W)_r\cap (A_j)_r\big]\\
&\subseteq W_r\cup\big[(A_i)_r\cap(A_j)_r\big].
\end{align*}
Taking the intersection with $\bigcap_{k \notin\{i, j\}} (A_k)_r$, this is the union of two terms, where the first is bounded by \eqref{change.partition.condition} while the second one is bounded since $(A_0, \dots, A_q)$ is a $q$-partition.
\end{proof}

We call two $q$-partitions \emph{elementary cobordant} if one is obtained from the other by taking a portion of one subset to another as in Lemma~\ref{CobordismLemma}, and \emph{cobordant} if one is obtained from the other by a finite sequence of elementary cobordisms.
This terminology is borrowed from \cite{Higson}, which deals with the case of 1-partitions (of a manifold).

\begin{prop}\label{prop:cobordism}
Let $(A, B, C)$ and $(A^\prime, B^\prime, C^\prime)$ be cobordant 2-partitions of $M$.
Then their pairings with any idempotent $P=P^2\in\sB_{\rm fin}(M)$ coincide,
\begin{equation*}
\langle A, B, C; P\rangle = \langle A^\prime, B^\prime, C^\prime; P\rangle.
\end{equation*}
\end{prop}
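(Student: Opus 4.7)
The plan is to reduce to an elementary cobordism and then compute the difference of the two pairings by a short algebraic manipulation, which will express it as a commutator trace $\Tr[P_W, P_C]$ that vanishes by Lemma~\ref{lem:vanishing.pair.coarsely.transverse}.

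First, by definition a cobordism is a finite composition of elementary ones, so I would reduce to showing $\langle A\setminus W, W\sqcup B, C; P\rangle = \langle A, B, C; P\rangle$ for a single transfer of a subset $W\subseteq A$ into $B$. Since the pairing is antisymmetric in its three subset arguments, an arbitrary elementary cobordism can, after relabeling the three parts of the partition, be brought into this form with the non-participating entry $C$ playing the role of the unaffected $A_k$ in Lemma~\ref{CobordismLemma}; the hypothesis of that lemma then reads precisely that $W$ and $C$ are coarsely transverse.

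Next, by Lemma~\ref{lem:pairing.is.commutator.trace} the two pairings equal $\Tr[P_A, P_B]$ and $\Tr[P_{A\setminus W}, P_{W\sqcup B}]$ respectively. Substituting $P_{A\setminus W} = P_A - P_W$ and $P_{W\sqcup B} = P_B + P_W$ and expanding gives
\[
[P_{A\setminus W},\, P_{W\sqcup B}] - [P_A, P_B] \;=\; [P_A + P_B,\, P_W].
\]
The crucial observation is that $(A, B, C)$ partitions $M$, so $P_A + P_B + P_C = PMP = P$, hence $P_A + P_B = P - P_C$; combined with $[P, P_W] = P^2 W P - P W P^2 = 0$ (from $P^2 = P$), this collapses the right-hand side to $[P_W, P_C]$.

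The final step is to invoke Lemma~\ref{lem:vanishing.pair.coarsely.transverse}: since $W$ and $C$ are coarsely transverse and $P \in \sB_{\rm fin}(M)$, it yields $\Tr[P_W, P_C] = 0$, completing the proof. The one subtle point, which I expect to be the only genuine hurdle, is the trace-class bookkeeping: individual products like $P_A P_W$ are not obviously trace class, so the displayed identity should be read as an equality of bounded operators, with trace-classness of the left side guaranteed by the partition property applied to $(A, B, C)$ and $(A\setminus W, W\sqcup B, C)$, and trace-classness of the right side by coarse transversality of $W$ and $C$ as in Lemma~\ref{lem:vanishing.pair.coarsely.transverse}.
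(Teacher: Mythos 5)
Your proposal is correct and follows essentially the same route as the paper: reduce to an elementary cobordism, identify the difference of the two pairings with $\Tr[P_W,P_C]$, and kill it via Lemma~\ref{lem:vanishing.pair.coarsely.transverse} using the coarse transversality of $W$ and $C$; your trace-class bookkeeping at the end is exactly the care that is needed. The only (cosmetic) difference is that the paper manipulates the trilinear bracket $[A,B,C]_P$ using its additivity and antisymmetry to reach $[W,M,C]_P=-[P_W,P_C]$, whereas you first pass to the bilinear commutators via Lemma~\ref{lem:pairing.is.commutator.trace} and use $P_A+P_B+P_C=P$ together with $[P,P_W]=0$.
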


\begin{proof}
It suffices to consider the case of the elementary cobordism $A^\prime = A \setminus W$, $B^\prime = W\sqcup B$, $C^\prime = C$, where $W \subseteq A$ is such that $W$ and $C$ are coarsely transverse.
For all $P=P^2\in\sB_{\rm fin}(M)$, we now calculate using \eqref{AdditivityGeneralizedCommutator} and \eqref{OneOfTheEntriesIsM}
\begin{equation}
\label{eqn:cobordism.difference.vanish}
\begin{aligned}
[A, B, C]_P - [A^\prime, B^\prime, C^\prime]_P 
&= [A^\prime \sqcup W, B, C]_P - [A^\prime, B \sqcup W, C]_P \\
&= [A^\prime, B, C]_P + [W, B, C]_P - [A^\prime, B, C]_P - [A^\prime, W, C]_P\\
&= [W, B, C]_P + [W, A^\prime, C]_P + \underbrace{[W, W\sqcup C, C]_P}_{=0}\\
&= [W, M, C]_P\\
&= - [P_W, P_C].
%\\
%&= PCPWP - PWPCP.
\end{aligned}
\end{equation}
Since $W$ and $C$ are coarsely transverse, after passing to traces, we may apply Lemma \ref{lem:vanishing.pair.coarsely.transverse}, to obtain that $\Tr[P_W,P_C]=0$.
\end{proof}

\begin{example}\label{ex:2D.cobordism}
By Lemma \ref{lem:partition.from.multipartition}, coarsely-transverse half-spaces $X,Y$ determine the 2-partitions 
\[
(X, X^c\cap Y, X^c\cap Y^c),\qquad  (Y, X\cap Y^c, X^c\cap Y^c),
\]
with the second one obtained by swapping the roles of $X$ and $Y$. Writing
\begin{align*}
(X, X^c\cap Y, X^c\cap Y^c)&=\bigl((X\cap Y^c)\sqcup \overbrace{(X\cap Y)}^{W}, X^c\cap Y, X^c\cap Y^c\bigr),
\end{align*}
we see that the first partition is cobordant to the second one,
\begin{align*}
 (X\cap Y^c, \underbrace{(X\cap Y)}_{W}\sqcup (X^c\cap Y),X^c\cap Y^c)=(X\cap Y^c,Y,X^c\cap Y^c),
\end{align*}
up to a swap (see Fig.\ \ref{fig:subdivision}). 
Here the condition of Lemma~\ref{CobordismLemma} is indeed satisfied, as
\begin{equation}
W_r\cap(X^c\cap Y^c)_r = (X\cap Y)_r\cap(X^c\cap Y^c)_r \subseteq X_r\cap Y_r\cap (X^c)_r\cap(Y^c)_r\label{eqn:cobordism.example}
\end{equation}
is bounded for all $r>0$, since $X,Y$ are coarsely transverse half-spaces. This says that $W,X^c\cap Y^c$ are coarsely transverse, so $W$ satisfies Condition \eqref{change.partition.condition} in Lemma \ref{CobordismLemma} for implementing this elementary cobordism.
\end{example}

\subsection{Integrality of pairing}\label{sec:integral.pairing.fp}

In this section, we prove the following result.

\begin{thm}
\label{thm:integrality.fp}
Let $(A, B, C)$ be a 2-partition of $M$ and let $P=P^2\in\sB_{\rm fin}(M)$ be an idempotent.
Then
\begin{equation*}
4 \pi i \cdot \langle A, B, C; P\rangle \in \ZZ
\end{equation*}
\end{thm}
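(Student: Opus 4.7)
The plan is to reduce to the abstract quantization of Proposition~\ref{prop:minimal.quantization}, applied to coarsely transverse half-spaces constructed from $(A,B,C)$, and then to use cobordism invariance (Proposition~\ref{prop:cobordism}) to identify the resulting commutator trace with \emph{twice} $\langle A,B,C;P\rangle$, which will account for the factor of $4\pi i$ rather than $2\pi i$ in the statement.

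First I would apply Lemma~\ref{lem:partition.to.multiplatition.rd} to produce the coarsely transverse half-spaces $X=A$ and $Y=W\sqcup B$, where $W=\{a\in A:d(a,B)\leq d(a,C)\}$. The main technical step, which I expect to be the principal obstacle, is to check the four trace class conditions in \eqref{eqn:minimal.tc.assumptions} for this $P,X,Y$. Using $P^2=P$, one has the factorization $P_X-P_X^2 = PXP\cdot\bigl((1-X)+X(1-P)\bigr)$, and analogously for $P_Y-P_Y^2$. When expanding each product in \eqref{eqn:minimal.tc.assumptions}, any summand ending in $(1-P)$ annihilates its right neighbor beginning with $P$ (using $(1-P)P=0$), so only products containing a ``coarse boundary'' factor of the form $P(1-X)PY$ or $P(1-Y)PX$ (and their adjoint analogues) sandwiched between $P$'s survive. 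A careful kernel-support analysis using the propagation bound $r$ of $P$ shows that each surviving operator has integral kernel supported in pairs $(u,v)$ with $u,v\in X_{Cr}\cap X^c_{Cr}\cap Y_{Cr}\cap Y^c_{Cr}$ for a constant $C$ depending only on $r$; this set is bounded by coarse transversality of $X,Y$. Writing the operator as $\chi_K\cdot P\cdot(\text{bounded})$ for some bounded Borel $K$, the local trace class property of $\sB_{\rm fin}(M)$ then forces it to be trace class.

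With the hypotheses verified, Proposition~\ref{prop:minimal.quantization} gives $2\pi i \cdot \Tr[P_A,P_{W\sqcup B}]\in\ZZ$. Bilinearly expanding, $\Tr[P_A,P_{W\sqcup B}]=\Tr[P_A,P_W]+\Tr[P_A,P_B]$, and it remains to show the two summands are equal. I would achieve this with two elementary cobordisms, both legal by Lemma~\ref{CobordismLemma}. Moving $W\subseteq A$ into $B$ produces $(A,B,C)\sim(A\setminus W,\,W\sqcup B,\,C)$, valid because $W$ and $C$ are coarsely transverse (from the proof of Lemma~\ref{lem:partition.to.multiplatition.rd}); expanding the resulting identity of pairings via Lemma~\ref{lem:pairing.is.commutator.trace} yields $\Tr[P_A,P_W]=\Tr[P_W,P_B]$. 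Likewise, moving $A\setminus W\subseteq A$ into $C$ produces $(A,B,C)\sim(W,\,B,\,(A\setminus W)\sqcup C)$, valid because the same proof shows $(A\setminus W)_r\cap B_r\subseteq C_{6r}$, which forces $A\setminus W$ and $B$ to be coarsely transverse; expanding yields $\Tr[P_A,P_B]=\Tr[P_W,P_B]$. Combining these identities gives $\Tr[P_A,P_W]=\Tr[P_A,P_B]=\langle A,B,C;P\rangle$, so $\Tr[P_A,P_{W\sqcup B}]=2\langle A,B,C;P\rangle$, and the quantization from Proposition~\ref{prop:minimal.quantization} becomes $4\pi i\cdot\langle A,B,C;P\rangle\in\ZZ$.
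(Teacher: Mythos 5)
Your overall strategy is the paper's: construct $X=A$, $Y=W\sqcup B$ via Lemma~\ref{lem:partition.to.multiplatition.rd}, verify the hypotheses of Proposition~\ref{prop:minimal.quantization} using finite propagation, coarse transversality and the locally trace class property, and then use cobordism invariance to identify $\Tr[P_X,P_Y]$ with $2\langle A,B,C;P\rangle$. Your route to the factor of $2$ (two elementary cobordisms, yielding $\Tr[P_A,P_W]=\Tr[P_W,P_B]=\Tr[P_A,P_B]$) is a legitimate variant of the paper's, which instead invokes Remark~\ref{RemarkBackAndForth} together with Lemma~\ref{LemmaTraceHalfSpaces} and the swap cobordism of Example~\ref{ex:2D.cobordism}; both coarse transversality conditions you need ($W$ versus $C$, and $A\setminus W$ versus $B$) are indeed established in the proof of Lemma~\ref{lem:partition.to.multiplatition.rd}.

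Two steps need repair. First, your factorization $P_X-P_X^2=PXP\bigl((1-X)+X(1-P)\bigr)$ leaves the surviving cross term $PXPX^cPYPY(1-P)$, which contains no $Y^c$ localization; its kernel is supported only in a neighbourhood of $X_{Cr}\cap (X^c)_{Cr}\cap Y_{Cr}$, which is generally unbounded, so your claim that \emph{each} surviving term is supported in the thickened quadruple intersection is false term by term (only the sum is well localized). The clean identity is $P_X-P_X^2=PXP(1-X)P=PXPX^cP$, using $P^2=P$ on the right, whence $(P_X-P_X^2)(P_Y-P_Y^2)=PXPX^cPYPY^cP$ is a single operator of propagation at most $6r$ supported near $X_{6r}\cap(X^c)_{6r}\cap Y_{6r}\cap(Y^c)_{6r}$; the adjoint variants are handled the same way. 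Second, \eqref{eqn:minimal.tc.assumptions} also requires $[P_X,P_Y]$ itself to be trace class, which you never address, and your bilinear splitting $\Tr[P_A,P_{W\sqcup B}]=\Tr[P_A,P_W]+\Tr[P_A,P_B]$ (as well as extracting $\Tr[P_A,P_W]=\Tr[P_W,P_B]$ from the cobordism identity) presupposes that $[P_A,P_W]$ and $[P_W,P_B]$ are \emph{individually} trace class; this does not follow from Lemma~\ref{lem:pairing.is.commutator.trace} since $(A,W,\,\cdot\,)$ is not a $2$-partition. It is true, but needs an argument: for instance, $[P_A,P_W]=[A,W,M]_P=[A\setminus W,W,B]_P+[A,W,C]_P$ by additivity and antisymmetry, and these two operators are supported near $(A\setminus W)_{3r}\cap B_{3r}$ and $W_{3r}\cap C_{3r}$ respectively, both bounded by the proof of Lemma~\ref{lem:partition.to.multiplatition.rd}. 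With these two repairs your argument goes through and coincides in substance with the paper's proof.
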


We need the following lemma.

\begin{lem}
\label{LemmaTraceHalfSpaces}
Let $X,  Y \subseteq M$ be coarsely transverse half-spaces. Then $[P_X, P_Y]$ is trace class, and 
\begin{equation*}
\frac{1}{2} \Tr[P_X,P_Y]= \langle X,X^c\cap Y,X^c\cap Y^c;P\rangle.
\end{equation*}
\end{lem}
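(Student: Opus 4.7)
The plan is to refine both $P_X$ and $P_Y$ via the four-part decomposition of $M$ induced by $X$ and $Y$. Writing
\[
a := P_{X\cap Y}, \quad b := P_{X\cap Y^c}, \quad c := P_{X^c\cap Y}, \quad d := P_{X^c\cap Y^c},
\]
we have $P_X = a+b$, $P_Y = a+c$, and $P = a+b+c+d$, where each of $a,b,c,d$ commutes with $P$ because $P$ is idempotent. Expanding bilinearly,
\[
[P_X, P_Y] \;=\; [a,c] + [b,a] + [b,c],
\]
so the lemma reduces to the two identities $\Tr[b,c] = 0$ and $\Tr[b,a] = \Tr[a,c]$.

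The first identity follows from Lemma~\ref{lem:vanishing.pair.coarsely.transverse}: the inclusion
\[
(X\cap Y^c)_r \cap (X^c\cap Y)_r \;\subseteq\; X_r \cap (X^c)_r \cap Y_r \cap (Y^c)_r
\]
shows that $X\cap Y^c$ and $X^c\cap Y$ are coarsely transverse, and the same argument applies to $X\cap Y$ and $X^c\cap Y^c$, giving $\Tr[a,d] = 0$ as well. Moreover, the triples $(X\cap Y, X\cap Y^c, X^c)$ and $(X\cap Y, X^c\cap Y, Y^c)$ are genuine 2-partitions by essentially the same coarse transversality calculation, so Lemma~\ref{lem:pairing.is.commutator.trace} ensures that $[a,b]$ and $[a,c]$ are individually trace class; in particular $[P_X, P_Y]$ is trace class.

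For the second identity, I would use the fact that $[a, P] = 0$, hence
\[
0 \;=\; [a, b+c+d] \;=\; [a,b] + [a,c] + [a,d],
\]
an identity between trace-class operators. Taking traces and applying $\Tr[a,d] = 0$ gives $\Tr[a,b] = -\Tr[a,c]$, i.e.\ $\Tr[b,a] = \Tr[a,c]$. Combining everything, $\Tr[P_X, P_Y] = 2\Tr[a,c]$, while Lemma~\ref{lem:pairing.is.commutator.trace} applied to the 2-partition $(X, X^c \cap Y, X^c \cap Y^c)$ yields $\langle X, X^c \cap Y, X^c \cap Y^c; P\rangle = \Tr[P_X, P_{X^c\cap Y}] = \Tr[a+b, c] = \Tr[a,c] + \Tr[b,c] = \Tr[a,c]$, as required. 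I do not expect any real obstacle; the only bookkeeping concern is verifying that each three-set decomposition used above is actually a 2-partition, and all such checks reduce immediately to coarse transversality of the half-spaces $X, Y$.
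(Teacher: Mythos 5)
Your proof is correct, and it is at heart the same computation as the paper's, just organized so as to bypass the cobordism machinery. The paper's proof also splits $[P_X,P_Y]$ along the quadrants (its chain of identities in Eq.~\eqref{eqn:subdivided.commutator} is exactly your $[a+b,a+c]=[a,c]+[b,a]+[b,c]$ with $[b,c]$ killed by Lemma~\ref{lem:vanishing.pair.coarsely.transverse}), but it then identifies the two remaining traces as the pairings of the two 2-partitions $(X,X^c\cap Y,X^c\cap Y^c)$ and $(Y,X\cap Y^c,X^c\cap Y^c)$ and invokes cobordism invariance (Prop.~\ref{prop:cobordism} via Example~\ref{ex:2D.cobordism}) to equate them. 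Your replacement step --- the operator identity $0=[a,P]=[a,b]+[a,c]+[a,d]$ together with $\Tr[a,d]=0$ for the coarsely transverse opposite quadrants $X\cap Y$ and $X^c\cap Y^c$ --- is precisely what that cobordism step reduces to in this instance (the elementary cobordism with $W=X\cap Y$ produces the error term $-[P_W,P_{X^c\cap Y^c}]=-[a,d]$ in Eq.~\eqref{eqn:cobordism.difference.vanish}). What your version buys is self-containedness and transparency: every trace you take is of an operator you have separately certified as trace class (via the auxiliary 2-partitions $(X\cap Y, X\cap Y^c, X^c)$ and $(X\cap Y, X^c\cap Y, Y^c)$, which are indeed coarsely transverse by the inclusions you indicate), and no appeal to the general cobordism proposition is needed. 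What the paper's route buys is reusability: the same cobordism argument is recycled verbatim in the proof of Theorem~\ref{thm:quantization.RD} for the rapid-decrease algebra $\sB(M)$, where one must recheck trace-class properties under polynomial excisiveness rather than finite propagation.
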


\begin{proof}
We rewrite
\begin{equation}
\label{eqn:subdivided.commutator}
\begin{aligned}
[P_X,P_Y]&=[P_X, P_{X^c\cap Y}]+[P_X, P_{X\cap Y}]\\
&=[P_X, P_{X^c\cap Y}]+[P_{X\cap Y^c}, P_{X\cap Y}]+[P_{X\cap Y}, P_{X\cap Y}]\\
&=[P_X, P_{X^c\cap Y}]-[P_{X\cap Y},P_{X\cap Y^c}]\\
&=[P_X, P_{X^c\cap Y}]-[P_Y,P_{X\cap Y^c}]+[P_{X^c\cap Y},P_{X\cap Y^c}]
\end{aligned}
\end{equation}
For the last commutator, note that
\[
(X^c\cap Y)_r\cap (X\cap Y^c)_r\subseteq (X^c)_r\cap Y_r\cap X_r\cap(Y^c)_r
\]
is bounded since $X,X^c,Y,Y^c$ are coarsely transverse by assumption. So $X\cap Y^c, X^c\cap Y$ are coarsely transverse, and Lemma \ref{lem:vanishing.pair.coarsely.transverse} gives $\Tr[P_{X^c\cap Y},P_{X\cap Y^c}]=0$.
Also, Lemma \ref{lem:partition.from.multipartition} says that 
\begin{equation}
(X,X^c\cap Y,X^c\cap Y^c), \qquad \text{and} \qquad \qquad(Y,X\cap Y^c,X^c\cap Y^c).\label{eqn:possible.partitions}
\end{equation}
are 2-partitions. By Lemma \ref{lem:pairing.is.commutator.trace}, $[P_X, P_{X^c\cap Y}]$ and $[P_Y,P_{X\cap Y^c}]$ are trace class, with
\begin{align*}
\Tr[P_X, P_{X^c\cap Y}]&=\langle X,X^c\cap Y,X^c\cap Y^c;P\rangle\\
\Tr[P_Y,P_{X\cap Y^c}]&=\langle Y,X\cap Y^c,X^c\cap Y^c;P\rangle.
\end{align*}
Thus $[P_X,P_Y]$ is trace class.

In fact, Example \ref{ex:2D.cobordism} showed how the 2-partitions in Eq.\ \eqref{eqn:possible.partitions} are cobordant, up to a swap. 
So, up to a sign, they have equal pairings with $P$ by Prop.\ \ref{prop:cobordism}. 
Thus
\begin{equation*}
\begin{aligned}
\Tr[P_X,P_Y]&=\Tr[P_X, P_{X^c\cap Y}]-\Tr[P_Y,P_{X\cap Y^c}]\\
&=\langle X,X^c\cap Y,X^c\cap Y^c;P\rangle-\langle Y,X\cap Y^c,X^c\cap Y^c;P\rangle\\
&=2\cdot\langle X,X^c\cap Y,X^c\cap Y^c;P\rangle,
\end{aligned}
\end{equation*}
as desired.
\end{proof}

\begin{proof}[Proof of Thm.\ \ref{thm:integrality.fp}]
Let $X$, $Y$ be the coarsely transverse half-spaces corresponding to the 2-partition $(A, B, C)$, as constructed in Lemma \ref{lem:partition.to.multiplatition.rd}.
Then with a view on Remark~\ref{RemarkBackAndForth}, Lemma~\ref{LemmaTraceHalfSpaces} implies that
\begin{equation*}
  4 \pi i \cdot \langle A, B, C; P \rangle = 2 \pi i \cdot \Tr[P_X, P_Y].
\end{equation*}
Because of Prop.\ \ref{prop:minimal.quantization}, it now suffices to verify that the trace class conditions, Eq.\ \eqref{eqn:minimal.tc.assumptions}, are satisfied, in order to deduce integrality of $2\pi i\cdot\Tr[P_X,P_Y]$.
We already know that $[P_X, P_Y]$ is trace class. It remains to check the trace class  properties of
\begin{align*}
(P_X-P_X^2)(P_Y-P_Y^2)
&= (PXP - PXPXP)(PYP-PYPYP) \\
&=\bigl(PXP(1-X)P\bigr)\bigl(PYP(1-Y)P\bigr)\\
&= PXPX^cPYPY^cP
\end{align*}
and
\begin{equation*}
(P_X-P_X^2)^*(P_Y-P_Y^2)=P^*X^cP^*XP^*PYPY^cP.
\end{equation*}
(The products in reverse order are dealt with in the same way.)
Let $r$ be a propagation bound of $P$, thus also of $P^*$, then the above operator products have propagation at most $6r$ and  are supported within
\[
 X_{6r}\cap(X^c)_{6r}\cap Y_{6r}\cap (Y^c)_{6r},
\]
which is a bounded set as $X,Y$ are coarsely transverse. 
Since $P$ is locally trace class, it follows that $(P_X-P_X^2)(P_Y-P_Y^2)$ and $(P_X-P_X^2)^*(P_Y-P_Y^2)$ are trace class.
This finishes the proof.
\end{proof}

The above proof in fact shows the following result.

\begin{cor}\label{cor:quantization.fp.half.space}
Let $X, Y$ be coarsely transverse half-spaces and let $P=P^2\in\sB_{\rm fin}(M)$ be  an idempotent.
Then
\begin{equation*}
2 \pi i \cdot \Tr[P_X, P_Y] \in \ZZ.
\end{equation*}
\end{cor}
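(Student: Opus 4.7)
The plan is to apply Proposition~\ref{prop:minimal.quantization} directly to the operators $P$, $X$, $Y$. Unlike the proof of Theorem~\ref{thm:integrality.fp}, no intermediate 2-partition is needed: all that is used there about the particular half-spaces produced by Lemma~\ref{lem:partition.to.multiplatition.rd} is that they are coarsely transverse, so the argument transcribes verbatim.

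First I would establish the trace class property of the commutator $[P_X,P_Y]$. This is precisely the content of Lemma~\ref{LemmaTraceHalfSpaces}, whose hypotheses (coarsely transverse half-spaces, $P=P^2\in\sB_{\rm fin}(M)$) are exactly the hypotheses here.

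The remaining task is to check the trace class conditions on the four operators appearing in \eqref{eqn:minimal.tc.assumptions}. The key identity is the one used at the end of the proof of Theorem~\ref{thm:integrality.fp}:
\begin{equation*}
(P_X-P_X^2)(P_Y-P_Y^2) = PXPX^cPYPY^cP,
\end{equation*}
together with the analogous formula
\begin{equation*}
(P_X-P_X^2)^*(P_Y-P_Y^2) = P^*X^cP^*XP^*PYPY^cP,
\end{equation*}
and the two reversed products. If $r$ is a propagation bound for $P$ (hence also for $P^*$), then each of these four operators has propagation at most $6r$ and is supported inside
\begin{equation*}
X_{6r}\cap (X^c)_{6r}\cap Y_{6r}\cap (Y^c)_{6r},
\end{equation*}
which is bounded by coarse transversality of $X,Y$. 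Local trace classness of $P$ then gives trace class of all four operators.

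With \eqref{eqn:minimal.tc.assumptions} verified, Proposition~\ref{prop:minimal.quantization} delivers $2\pi i \cdot \Tr[P_X,P_Y]\in\ZZ$. The only genuine work is the bookkeeping of propagation bounds and supports in the last step, which is routine given the coarse transversality hypothesis; there is no real obstacle beyond noticing that the proof of Theorem~\ref{thm:integrality.fp} never used more than coarse transversality of the half-spaces it produced.
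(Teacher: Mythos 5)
Your proposal is correct and matches the paper's own argument: the paper proves this corollary by observing that the proof of Theorem~\ref{thm:integrality.fp} already works for arbitrary coarsely transverse half-spaces, using Lemma~\ref{LemmaTraceHalfSpaces} for the trace class property of $[P_X,P_Y]$ and the same propagation/support bookkeeping to verify the hypotheses of Proposition~\ref{prop:minimal.quantization}. Nothing further is needed.
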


\section{Extending pairing to rapid decrease idempotents}\label{sec:extend.pairing}

The fact that the results of the previous section are restricted to idempotents in the algebra $\sB_{\rm fin}(M)$ of finite propagation operators is a serious drawback, as $K$-theoretically non-trivial idempotents that one encounters in nature typically do not have finite propagation (for example, the Landau level spectral projections, see \cite{ASS2, EGS,LT-hyp}).
While the (algebraic) $K$-theory group of the algebra $\sB_{\rm fin}(M)$ is non-trivial, interesting elements typically come from index theory and are produced using some boundary map in (algebraic) $K$-theory. 
These $K$-theory elements are represented by formal differences of idempotents in matrix algebras over the \emph{unitization} of $\sB_{\rm fin}(M)$, and not in $\sB_{\rm fin}(M)$ itself.
In fact, it seems likely that $\sB_{\rm fin}(M)$ does not contain any $K$-theoretically non-trivial idempotents at all.

These problems may have been already apparent to Kitaev, who informally formulated his partition pairing for ``quasi-diagonal'' projections, which do not necessarily have finite propagation but whose kernels decay sufficiently away from the diagonal.

\medskip

Observe that the algebra $\sB_{\rm fin}(M)$
contains, for any subset $Z \subseteq M$, a $*$-ideal $\sB_{\rm fin}(M; Z)$ containing those operators $L$ that are \emph{supported near} $Z$, meaning that there exists $r>0$ such that $LA = AL = 0$ whenever $A\subseteq M$ satisfies $d(A, Z) > r$.
In particular, $\sB(M; M) = \sB(M)$.
The collection of $*$-algebras $\sB(M; Z)$, $Z \subseteq M$, has the following abstract properties:
\begin{enumerate}[(i)]
\item For all Borel subsets $Z \subseteq M$ and $L \in \sB_{\rm fin}(M)$, the operators $ZL$ and $LZ$ are contained in $\sB_{\rm fin}(M; Z)$.
\item If $K \subseteq M$ is a bounded Borel subset, then every operator in $\sB_{\rm fin}(M; K)$ is trace class. 
\item If $Z_0, \dots, Z_q \subseteq M$ is a coarsely excisive collection of subsets of $M$, then
\begin{equation*}
  \sB_{\rm fin}(M; Z_0) \cdot \sB_{\rm fin}(M; Z_1) \cdots \sB_{\rm fin}(M; Z_q) \subseteq \sB_{\rm fin}(M; \textstyle\bigcap_{n=0}^qZ_n).
\end{equation*}
\end{enumerate}
  Here a collection of subsets $Z_0, \dots, Z_q$ is \emph{coarsely excisive} if there exists a non-decreasing function $f: [0, \infty) \to [0, \infty)$ such that 
  \begin{equation}
  \label{CoarselyExcisive}
  \bigcap_{n=0}^q (Z_n)_r \subseteq \Big(\bigcap_{n=0}^q Z_n\Big)_{f(r)} \qquad \forall r \geq 0.
\end{equation}
Inspecting the proofs from the previous section, it is easy to check that all results obtained there hold for $\sB_{\rm fin}(M)$ and $\sB_{\rm fin}(M; Z)$ replaced by any abstract $*$-subalgebra $\sB(M)$ of the algebra of bounded operators on $L^2(M)$, together with a collection of $*$-ideals $\sB(M; Z) \subseteq \sB(M)$ for subsets $Z \subseteq M$, which satisfy (i)-(iii). 
The $*$-closure property is needed only for the integrality results of Subsection \ref{sec:integral.pairing.fp}, where the adjoint idempotent $P^*$ is required to be in $\sB_{\rm fin}(M)$ when the idempotent $P$ is.
We therefore seek to find such  $*$-subalgebras $\sB(M; Z)$ where the finite propagation condition is replaced by a suitable decay condition.

\medskip

Our main observation in this context is that condition (iii) has to be relaxed when allowing more general operators with possibly infinite propagation. 
Intuitively, condition (iii) is problematic if the function $f$ in \eqref{CoarselyExcisive} grows faster than the off-diagonal decrease rate of the kernels of operators in $\sB(M;Z)$.
In the context of faster-than-polynomial decay studied in this paper, it turns out that one should only allow those collections of subsets where $f$ can be chosen to be a polynomial function in Eq.~\eqref{CoarselyExcisive}.
We are therefore led to the following notion.

\begin{dfn}
\label{dfn:poly.coarse.subsets}
We say that a collection of sets $Z_0,\ldots,Z_q\subseteq M$ is \emph{polynomially excisive} if there exists some $\mu\geq 1$ such that for all $r \geq 0$, 
\begin{equation*}
\bigcap_{n=0}^{q}(Z_{n})_r\subseteq\Big(\bigcap_{n=0}^q Z_n\Big)_{r^\mu}.
\end{equation*}
\end{dfn}

\begin{example}
Any thickening of the partitions from Example~\ref{ex:Euclidean.simplex} is polynomially excisive.
In $\RR^2$, the subsets $Z_0 = \{0\}\times \RR$ and $Z_1 = \{(x, \exp(x)) : x \in \RR\}$ do not have polynomially excisive thickenings.
\end{example}

The operator algebras we consider below satisfy  the analogue of the property (iii) for polynomially excisive collections of subsets.
This allows us to derive integrality results along the lines of Section \ref{sec:integral.pairing.fp}, for idempotents in such operator algebras and half-spaces satisfying a polynomial excisiveness condition.

\subsection{Algebras of rapid decrease}
In this Subsection, we define certain operator algebras consisting of operators with kernels having rapid off-diagonal decrease.
It turns out that the correct setting for this construction is the following.

\begin{dfn}
Let $M$ be a proper metric measure space.
\begin{enumerate}
\item[(1)] We say that $M$ is of \emph{bounded geometry} if there exists a countable discrete subset $\Gamma \subseteq M$ such that its $r$-thickening $\Gamma_r$ is all of $M$ for some $r>0$, and such that for each $r>0$, we have
\begin{equation*}
    \sup_{\substack{K \subseteq M \\ \diam(K) \leq r}} \# (K \cap \Gamma) < \infty.
\end{equation*}
\item[(2)] We say that $M$ is of \emph{polynomial volume growth} if there exist constants $C, \mu \geq 0$ such that for all $x \in M$,
\begin{equation*}
  \vol(\{x\}_r) \leq C (1+r)^\mu \qquad \forall r \geq 0.
\end{equation*}
\end{enumerate}
\end{dfn}

For the rest of this subsection, we assume that $M$ has bounded geometry and polynomial volume growth.

\begin{dfn}\label{dfn:ThreeBarNorm.algebras}
For any subset $Z \subseteq M$, we define $\sB(M; Z)$ to be the set of all bounded operators $L$ on $L^2(M)$ for which the quantities
\begin{align}
\label{ThreeBarNorm1}
  \threebar L\threebar_\nu &:= \sup_{\substack{\diam(K) \leq 1 \\\diam(K^\prime)\leq 1}}\|KLK^\prime\|_{\Tr} \bigl(1 + d(K, K^\prime)\bigr)^\nu \\
\label{ThreeBarNorm2}
  \threebar L\threebar_{\nu, Z} &:= \sup_{\substack{\diam(K) \leq 1 \\\diam(K^\prime)\leq 1}} \|KLK^\prime\|_{\Tr} \bigl(1 + d(K, Z)\bigr)^\nu\bigl(1 + d(K^\prime, Z)\bigr)^\nu,
\end{align}
are finite, for any $\nu \geq 0$.
Here the supremum is taken over all sets $K, K^\prime \subseteq M$ of diameter not exceeding one.
We also write $\sB(M) := \sB(M; M)$.
\end{dfn}

It follows directly from the definition that the $\sB(M;Z)$ are $*$-closed, and it is easy to check that $\sB(M; Z) = \sB(M; Z_r)$ for each $r>0$.

\begin{example}\label{ex:Schwartz}
Let $M=\RR^d$, and $L$ a smooth integral kernel operator. 
Restricting $L$ to a bounded region, we get a trace class operator (compare \cite{Roe-partition}, Lemma 1.4, for more general $M$). 
If the integral kernel of $L$ and all its derivatives decay sufficiently rapidly away from the diagonal, uniformly over $M$, these local trace norms decay as in Eq.\ \eqref{ThreeBarNorm1}, so that $L$ lies in $\sB(M)$.
\end{example}

\begin{rem}
For $M=\ZZ^d$, and $Z$ a standard coordinate axis, operator spaces like those in Definition \ref{dfn:ThreeBarNorm.algebras} have been considered before, e.g., the \emph{local} and/or \emph{confined} operators of \S3 in \cite{ST}. 
\end{rem}

Using the assumption of bounded geometry and polynomial volume growth, one may show that $\sB(M)$ is an algebra and each $\sB(M; Z)$ is an ideal in $\sB(M)$.
Moreover, the seminorms \eqref{ThreeBarNorm1}-\eqref{ThreeBarNorm2} turn $\sB(M; Z)$ into a Fr\'echet algebra, which is in fact $m$-convex, hence closed under holomorphic functional calculus.
Taking  these properties for granted for now, the crucial properties for this paper are the following.

\begin{thm}
\label{ThmAbstractPropertiesBMZ}
The family of $*$-algebras $\sB(M; Z)$, $Z \subseteq M$, has the following properties.
\begin{enumerate}[(i)]
\item For all Borel subsets $Z \subseteq M$ and $L \in \sB(M)$, the operators $ZL$ and $LZ$ are contained in $\sB(M; Z)$.
\item If $K \subseteq M$ is a bounded Borel subset, then every operator in $\sB(M; K)$ is trace class. 
\item If $Z_0, \dots, Z_q \subseteq M$ is a polynomially excisive collection of subsets of $M$, then
\begin{equation*}
  \sB(M; Z_0) \cdot \sB(M; Z_1) \cdots \sB(M; Z_q) \subseteq \sB\bigl(M; \textstyle\bigcap_{n=0}^qZ_n\bigr).
\end{equation*}
\end{enumerate}
\end{thm}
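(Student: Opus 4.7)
The three parts are proved separately. For each, the main tool is to decompose operators against a fixed partition $\{U_i\}_{i\in I}$ of $M$ into Borel pieces of diameter at most one, chosen so that, thanks to the bounded geometry and polynomial volume growth assumptions, the counting function $i\mapsto \#\{i : d(U_i, x)\leq r\}$ grows at most polynomially in $r$; this gives convergence of all sums of the form $\sum_i (1+d(U_i,x))^{-\nu}$ once $\nu$ exceeds the volume-growth exponent.

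\emph{Part (i).} I would use that multiplication by $\chi_Z$ satisfies $KZ = K\cap Z$ as operators, where $K\cap Z$ is a Borel subset of diameter $\leq 1$. Since $K\cap Z\subseteq K$, one gets $d(K\cap Z,K')\geq d(K,K')$, hence $\|K(ZL)K'\|_{\Tr}\leq \threebar L\threebar_\nu (1+d(K,K'))^{-\nu}$, proving $\threebar ZL\threebar_\nu\leq \threebar L\threebar_\nu$. For the $Z$-weighted seminorm, the only nonzero contributions have $K\cap Z\neq\emptyset$, hence $d(K,Z)=0$, and $(K\cap Z)\subseteq Z$ gives $d(K',Z)\leq d(K\cap Z,K')$; the bound $\threebar ZL\threebar_{\nu,Z}\leq \threebar L\threebar_\nu$ follows. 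The case of $LZ$ is symmetric.

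\emph{Part (ii).} With $K\subseteq M$ bounded and $L\in \sB(M;K)$, I would write $\|L\|_{\Tr}\leq \sum_{i,j}\|U_i L U_j\|_{\Tr}$ and bound each term by $\threebar L\threebar_{\nu,K}\cdot(1+d(U_i,K))^{-\nu}(1+d(U_j,K))^{-\nu}$. Then $\|L\|_{\Tr}\leq \threebar L\threebar_{\nu,K}\bigl(\sum_i (1+d(U_i,K))^{-\nu}\bigr)^2$. Since $K$ is bounded, every $U_i$ with $d(U_i,K)\leq r$ is contained in a fixed thickening of a single point, so polynomial volume growth combined with the partition counting bound makes the sum finite for $\nu$ large.

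\emph{Part (iii).} The main work is the case $q=1$; the general statement then follows by induction, noting that $(Z_0,Z_1\cap\cdots\cap Z_q)$ is polynomially excisive whenever the original family is, since $(Z_1\cap\cdots\cap Z_q)_r\subseteq\bigcap_{n\geq 1}(Z_n)_r$. For $q=1$, insert the partition: $KL_0 L_1 K' = \sum_j KL_0 U_j L_1 K'$, and use $\|AB\|_{\Tr}\leq \|A\|_{\Tr}\|B\|_{\Tr}$ together with the interpolated estimate
\[
\|KL_0 U_j\|_{\Tr}\leq \sqrt{\threebar L_0\threebar_\nu\,\threebar L_0\threebar_{\nu,Z_0}}\,\bigl[(1+d(K,U_j))(1+d(K,Z_0))(1+d(U_j,Z_0))\bigr]^{-\nu/2},
\]
and the analogous estimate for $\|U_j L_1 K'\|_{\Tr}$ with $Z_1, K'$. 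The key geometric input comes from polynomial excisiveness: $(1+d(U_j,Z_0))(1+d(U_j,Z_1))\geq 1+\max\bigl(d(U_j,Z_0),d(U_j,Z_1)\bigr)\geq 1+d(U_j,Z_0\cap Z_1)^{1/\mu}$, which upgrades to $\gtrsim (1+d(U_j,Z_0\cap Z_1))^{1/\mu}$. Combining with the triangle inequalities $(1+d(K,U_j))(1+d(U_j,Z_0\cap Z_1))\geq 1+d(K,Z_0\cap Z_1)$ and its $K'$-analogue, I would split the joint exponent: allocate a small portion $\alpha=\nu/(4\mu)$ of decay to pull out factors of $(1+d(K,Z_0\cap Z_1))^{-\alpha}(1+d(K',Z_0\cap Z_1))^{-\alpha}$, leaving residual powers $(1+d(K,U_j))^{-\beta}(1+d(K',U_j))^{-\beta}$ with $\beta=\nu/2-\alpha$. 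Summing over $j$ via Cauchy--Schwarz reduces to $\sum_j (1+d(K,U_j))^{-2\beta}$, which converges by polynomial volume growth once $\nu$ is large. Since the resulting bound on $\|KL_0 L_1 K'\|_{\Tr}$ decays to arbitrary polynomial order in $d(K,Z_0\cap Z_1)$ and $d(K',Z_0\cap Z_1)$ as $\nu\to\infty$, one obtains $L_0 L_1\in\sB(M;Z_0\cap Z_1)$; the unweighted seminorm $\threebar L_0 L_1\threebar_{\nu'}$ is handled by an analogous, simpler, computation using only $\threebar L_i\threebar_\nu$.

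The main obstacle is the decomposition in (iii): one must simultaneously (a) produce sharp polynomial decay in $d(K,Z_0\cap Z_1)$ and $d(K',Z_0\cap Z_1)$, (b) retain enough residual decay in $d(K,U_j)$ and $d(K',U_j)$ for summability over the partition, and (c) absorb the loss of exponent $1/\mu$ coming from polynomial excisiveness. The interpolation between the two seminorms $\threebar\cdot\threebar_\nu$ and $\threebar\cdot\threebar_{\nu,Z}$ is what makes this three-way split possible.
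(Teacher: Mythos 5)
Parts (i) and (ii) of your proposal are correct and essentially the same as the paper's arguments (Lemmas \ref{LemmaZLinBMZ} and \ref{prop:compact-localized.trace.class}); the paper merely routes the same estimates through tiling-adapted seminorms that it first shows to be equivalent to $\threebar\cdot\threebar_\nu$ and $\threebar\cdot\threebar_{\nu,Z}$.

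For part (iii), your core $q=1$ argument --- inserting a tiling between $L_0$ and $L_1$ and interpolating between $\threebar\cdot\threebar_\nu$ and $\threebar\cdot\threebar_{\nu,Z_n}$ so as to convert decay in $d(U_j,Z_0)$ and $d(U_j,Z_1)$ into decay in $d(K,Z_0\cap Z_1)$ and $d(K',Z_0\cap Z_1)$ --- is workable, but it is a genuinely different and more laborious route than the paper's. The paper first establishes that each $\sB(M;Z)$ is an ideal in $\sB(M)$ (Lemma \ref{LemmaIdeal}), so the full product $L=L_0\cdots L_q$ automatically lies in \emph{every} $\sB(M;Z_n)$ simultaneously; polynomial excisiveness is then used exactly once, to prove the pointwise geometric estimate $1+d(V,Y)\leq C\prod_{n=0}^q\bigl(1+d(V,Z_n)\bigr)^\mu$ for $Y=\bigcap_n Z_n$, after which the bound $[L]_{\nu,Y}\leq C^{2\nu}\prod_n [L]_{(q+1)\nu\mu,Z_n}^{1/(q+1)}$ follows from writing $\|VLW\|_{\Tr}=\prod_n\|VLW\|_{\Tr}^{1/(q+1)}$. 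No tiling is inserted between the factors and no induction on $q$ is needed; the exponent bookkeeping you describe as the main obstacle disappears.

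The genuine gap is in your reduction of general $q$ to $q=1$. You invoke the inductive hypothesis on the pair $(Z_0,\,Z_1\cap\cdots\cap Z_q)$, which is indeed polynomially excisive; but to apply the $q=1$ case you must first know that $L_1\cdots L_q\in\sB(M;Z_1\cap\cdots\cap Z_q)$, and the inductive hypothesis yields this only if the subcollection $(Z_1,\dots,Z_q)$ is itself polynomially excisive. That is \emph{not} implied by polynomial excisiveness of the full collection: take $M=\RR^2$, $Z_1=\{0\}\times\RR$, $Z_2=\{(x,e^x):x\in\RR\}$ (the paper's own example of a pair with no polynomially excisive thickening) and $Z_0=Z_1\cap Z_2$, a single point. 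Then $\bigcap_{n=0}^2(Z_n)_r\subseteq(Z_0)_r=(\bigcap_n Z_n)_r$, so the triple is polynomially excisive with $\mu=1$, while the pair $(Z_1,Z_2)$ is not, and $L_1L_2$ need not lie in $\sB(M;Z_1\cap Z_2)$. (The conclusion of the theorem still holds in this example, via the ideal property applied to $L_0$, so this breaks only your proof, not the statement.) The repair is either to run your interpolation argument directly for general $q$, inserting $q$ tiles and using the $(q+1)$-fold analogue of your geometric inequality, or to switch to the paper's strategy of combining the ideal property with a single H\"older-type splitting.
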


\begin{rem}
 Properties (i)--(ii) of Theorem \ref{ThmAbstractPropertiesBMZ} imply that operators in $\sB(M)$ are locally trace class. Despite this, it is not generally true that $\sB_{\rm fin}(M)\subseteq \sB(M)$, because the locally trace class condition does not \emph{uniformly} bound the local trace norms, in the sense of \eqref{eqn:sum.seminorm}. 
\end{rem}

\subsection{Verification of the properties}

Let $M$ be a proper metric measure space of polynomial growth and bounded geometry.
For verification of the claimed properties of the family $\sB(M; Z)$ of operator algebras above, it is convenient to fix a \emph{tiling} of $M$, by which we mean a countable collection $\mathcal{V}$ of Borel subsets, such that
\begin{enumerate}[(a)]
\item $V \cap W = \emptyset$ for all $V, W \in \mathcal{V}$;
\item $\bigcup_{V \in \mathcal{V}} V = M$; 
\item for some $r_0 \geq 0$, we have $\mathrm{diam}(V) \leq r_0$ for all $V \in \mathcal{V}$;
\item $\mathcal{V}$ is \emph{uniformly locally finite}, meaning that for each $r>0$, we have
\begin{equation*}
    \sup_{\substack{K \subseteq M \\ \diam(K) \leq r}} \# \{V \in \mathcal{V} \mid K \cap V \neq \emptyset\} < \infty.
\end{equation*}
\end{enumerate}

\begin{rem}
The existence of such a tiling follows easily from the assumption of bounded geometry.
To construct one, let $\Gamma = \{\gamma_1, \gamma_2, \dots\}$ be a countable set such that $\Gamma_{r_0} = M$.
Then let $V_n^\prime$ be the ball of radius $r_0$ around $\gamma_n$ and set $V_n = V_n^\prime \setminus \{V_1^\prime, \dots, V_{n-1}^\prime\}$.
Then the collection $\mathcal{V} = \{V_1, V_2, \dots \}$ satisfies (i)-(iv).
\end{rem}

For the rest of this subsection, we fix a tiling $\mathcal{V}$.

\begin{lem}
For any subset $Z \subseteq M$, the seminorms Eq.\ \eqref{ThreeBarNorm1}-\eqref{ThreeBarNorm2} are equivalent to the family of seminorms
\begin{align}
\label{BracketSeminorm1}
  [L]_\nu &:= \sup_{V, W \in \mathcal{V}}\|VLW\|_{\Tr} \bigl(1 + d(V, W)\bigr)^\nu \\
\label{BracketSeminorm2}
  [L]_{\nu, Z} &:= \sup_{V, W \in \mathcal{V}}\|VLW\|_{\Tr} \bigl(1 + d(V, Z)\bigr)^\nu\bigl(1 + d(W, Z)\bigr)^\nu.
 \end{align}
\end{lem}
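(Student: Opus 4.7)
The strategy is to compare the two families of seminorms directly, using the tiling $\mathcal{V}$ as an intermediate discretization. The key geometric inputs are: (a) every set $K$ of diameter at most $1$ meets only a uniformly bounded number of tiles in $\mathcal{V}$ (by uniform local finiteness, property (d)); and (b) every tile $V \in \mathcal{V}$, having diameter at most $r_0$, can be partitioned into a uniformly bounded number of subsets of diameter at most $1$ (by bounded geometry of $M$, applied with discretization parameter $1$ in place of $r_0$). Both bounds are uniform in the set $K$ or tile $V$ in question.

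First I would prove $[L]_\nu \lesssim \threebar L\threebar_\nu$ and $[L]_{\nu,Z} \lesssim \threebar L\threebar_{\nu,Z}$. Given $V, W \in \mathcal{V}$, partition $V = \bigsqcup_i K_i$ and $W = \bigsqcup_j K'_j$ with each $K_i, K'_j$ of diameter at most $1$ and the number of pieces uniformly bounded by some constant $N$. Then $VLW = \sum_{i,j} K_i L K'_j$ and the triangle inequality for trace norm gives
\[
\|VLW\|_{\Tr} \leq \sum_{i,j}\|K_i L K'_j\|_{\Tr}.
\]
Since $K_i \subseteq V$ and $K'_j \subseteq W$, the triangle inequality in $M$ shows $d(K_i, K'_j) \geq d(V, W) - 2 r_0$, so $1 + d(V,W) \leq (1+2r_0)\bigl(1+d(K_i,K'_j)\bigr)$, and similarly $d(K_i, Z) \geq d(V, Z) - r_0$, $d(K'_j, Z) \geq d(W, Z) - r_0$. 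Multiplying by the weights yields $[L]_\nu \leq N^2 (1+2r_0)^\nu \threebar L\threebar_\nu$ and analogously for the $Z$-weighted seminorms.

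Next I would prove the reverse bounds $\threebar L\threebar_\nu \lesssim [L]_\nu$ and $\threebar L\threebar_{\nu,Z} \lesssim [L]_{\nu,Z}$ by a symmetric argument. Given $K, K'$ of diameter at most $1$, let $\mathcal{V}_K = \{V \in \mathcal{V} : V \cap K \neq \emptyset\}$ and $\mathcal{V}_{K'}$ analogously. By uniform local finiteness there is a constant $N'$ (independent of $K, K'$) with $\#\mathcal{V}_K, \#\mathcal{V}_{K'} \leq N'$. Since the tiles partition $M$, we have $K \subseteq \bigsqcup_{V \in \mathcal{V}_K} V$, hence $KLK' = \sum_{V \in \mathcal{V}_K}\sum_{W \in \mathcal{V}_{K'}} (KV)L(WK')$. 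Using that multiplication by characteristic functions has operator norm at most $1$, the ideal property of the trace class gives $\|KVLWK'\|_{\Tr} \leq \|VLW\|_{\Tr}$. Moreover $d(V, W) \leq d(K, K') + 2 r_0$ and $d(V, Z), d(W, Z) \leq d(K, Z) + r_0, d(K', Z) + r_0$ for every $V \in \mathcal{V}_K$, $W \in \mathcal{V}_{K'}$, giving comparability of the weight factors. Summing yields $\threebar L\threebar_\nu \leq (N')^2 (1+2r_0)^\nu [L]_\nu$ and the corresponding bound for the $Z$-weighted versions.

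I do not expect any essential obstacle. The only point requiring a moment of care is step one: one must check that the bounded geometry hypothesis really supplies a partition of each tile into uniformly boundedly many diameter-$1$ pieces. This follows by applying the standard greedy construction of a $1$-net $\Gamma'$ in $M$ (with $\Gamma'_{1/2} = M$) and cutting each tile $V$ along the Voronoi-type pieces indexed by $V \cap \Gamma'$; uniform local finiteness of $\Gamma'$ on sets of diameter at most $r_0$ bounds the number of pieces independently of $V$.
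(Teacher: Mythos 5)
Your argument for the direction $\threebar L\threebar_\nu \lesssim [L]_\nu$ (covering $K$ and $K'$ by the tiles they meet, using uniform local finiteness and $d(K,K')\leq d(V,W)+2r_0$) is exactly the paper's proof of that direction, and it is the only direction the paper treats as non-trivial. Where you diverge is the converse: the paper simply declares $[L]_\nu\leq\threebar L\threebar_\nu$ to be clear, whereas you correctly observe that this is only immediate if the tiles have diameter at most $1$, and when $r_0>1$ you supply the missing step by subdividing each tile into boundedly many diameter-one pieces. This is a genuine improvement in care over the paper. The one caveat in your version is the sub-lemma you defer to the last paragraph: the existence of a $1/2$-net $\Gamma'$ that is uniformly locally finite on sets of diameter $r_0$ does not follow from the paper's literal definition of bounded geometry, which only posits a single coarsely dense, uniformly locally finite $\Gamma$ at a single scale $r$; a greedy maximal $1/2$-separated set need not be uniformly locally finite without a doubling-type or "bounded geometry at all scales" hypothesis. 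Under the standard (stronger) notion of bounded geometry your construction is fine, and the same implicit strengthening is needed to justify the paper's "clearly", so this is best read as a shared hypothesis issue rather than a defect of your proof. One small simplification available to you: since $K_i\subseteq V$ and $K_j'\subseteq W$, you have $d(K_i,K_j')\geq d(V,W)$ and $d(K_i,Z)\geq d(V,Z)$ outright, so no $(1+2r_0)^\nu$ loss is needed in that direction.
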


\begin{proof}
Clearly, we have $[L]_\nu \leq \threebar L\threebar_\nu$ and $[L]_{\nu, Z} \leq\threebar L\threebar_{\nu, Z}$.
%\begin{comment}
%\begin{equation*}
%\begin{aligned}
%d(K, K^\prime) &= \inf_{x \in K} \inf_{x^\prime \in K^\prime} d(x, x^\prime)\\
%&\leq \inf_{x \in K} \inf_{x^\prime \in K^\prime} \bigl( d(x, v) + d(v, w) + d(w, x^\prime)\bigr)\\
%&= d(v, K) + d(v, w) + d(w, K^\prime)\\
%&\leq d(v, K) + d(v, v^\prime) + d(v^\prime, w^\prime) + d(w^\prime, w) + d(w, K).
%\end{aligned}
%\end{equation*}
%\end{comment}
To show the converse, let $K, K^\prime \subseteq M$ be two Borel subsets of diameter not exceeding one.
First we observe that since the members of $\mathcal{V}$ have diameter at most $r_0$, whenever $V, W \in \mathcal{V}$ are such that $K \cap V \neq \emptyset$ and $K^\prime \cap W \neq \emptyset$, then
\begin{equation*}
  d(K, K^\prime) \leq d(V, W) +2r_0.
\end{equation*}
Hence
\begin{equation*}
\begin{aligned}
 \|KLK^\prime\|_{\Tr} \bigl(1 + d(K, K^\prime)\bigr)^\nu
  &\leq 
  \sum_{\substack{V \in \mathcal{V} \\ K \cap V \neq \emptyset}}
  \sum_{\substack{W \in \mathcal{V} \\ K^\prime \cap W \neq \emptyset}}
  \|VLW\|_{\Tr} \bigl(1 + 2 r_0 + d(V, W) \bigr)^\nu\\
  &\leq (1+2r_0)^\nu \cdot N^2 \cdot [L]_\nu,
\end{aligned}
\end{equation*}
where $N$ is a constant such that $\#\{V \in \mathcal{V} \mid V \cap K \neq \emptyset\} \leq N$ for each subset $K \subseteq M$ with $\diam(K) \leq 1$.
Taking the supremum over $K$ and $K^\prime$ yields the desired estimate.
The estimate for the $\threebar \cdot \threebar_{\nu, Z}$ norms is similar.
\end{proof}

\begin{lem}
For any subset $Z \subseteq M$ the seminorms Eq.\ \eqref{BracketSeminorm1}-\eqref{BracketSeminorm2} are equivalent to the family of seminorms
\begin{align}
\|L\|_\nu&:=\sup_{V \in \mathcal{V}}\sum_{W \in \mathcal{V}}\|VLW\|_\Tr\bigl(1+d(V,W)\bigr)^{\nu}\label{eqn:sum.seminorm} \\
\label{eqn:sum.seminorm.localized}
\|L\|_{\nu,Z}&:=\sup_{V\in \mathcal{V}}\sum_{W \in \mathcal{V}}\|VLW\|_\Tr\bigl(1+d(V,Z)\bigr)^{\nu}\bigl(1+d(W,Z)\bigr)^{\nu}.
\end{align}
\end{lem}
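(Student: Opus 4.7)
The direction $[L]_\nu \leq \|L\|_\nu$ and $[L]_{\nu, Z} \leq \|L\|_{\nu, Z}$ is immediate, since each summand $\|VLW\|_\Tr (\cdots)$ is nonnegative and a single term is therefore bounded above by the full sum over $W$, giving $\sup_{V,W}\leq \sup_V\sum_W$.

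For the reverse direction, the essential geometric input is a uniform polynomial bound on the tile-counting function
\[
  n_V(r) := \#\{W \in \mathcal{V} : d(V, W) \leq r\}.
\]
Since each tile has diameter at most $r_0$, every $W$ contributing to $n_V(r)$ is contained in the $(r+r_0)$-thickening of $V$. Combining this observation with the polynomial volume growth of $M$ and the uniform local finiteness of $\mathcal{V}$ yields constants $C, \sigma \geq 0$ such that $n_V(r) \leq C(1+r)^\sigma$ uniformly in $V$. Consequently, for any $s > \sigma + 1$, the quantity
\[
  S_s := \sup_{V \in \mathcal{V}} \sum_{W \in \mathcal{V}} (1+d(V, W))^{-s}
\]
is finite. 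Establishing this polynomial count from the hypotheses on $M$ is the main technical obstacle; everything downstream is routine.

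Given $S_s < \infty$, the unweighted bound $\|L\|_\nu \leq S_s \cdot [L]_{\nu+s}$ follows by factoring
\[
  \sum_W \|VLW\|_\Tr (1+d(V,W))^\nu
  = \sum_W \bigl[\|VLW\|_\Tr (1+d(V,W))^{\nu+s}\bigr] (1+d(V,W))^{-s}
  \leq [L]_{\nu+s} \cdot S_s,
\]
and taking the supremum over $V$.

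For the $Z$-weighted version, I combine the two available pointwise estimates
\[
  \|VLW\|_\Tr \leq \frac{[L]_{2s}}{(1+d(V,W))^{2s}},
  \qquad
  \|VLW\|_\Tr \leq \frac{[L]_{2\nu, Z}}{(1+d(V, Z))^{2\nu}(1+d(W, Z))^{2\nu}}
\]
by taking their geometric mean, giving the single inequality
\[
  \|VLW\|_\Tr \leq \frac{[L]_{2s}^{1/2} \, [L]_{2\nu, Z}^{1/2}}{(1+d(V,W))^s \, (1+d(V,Z))^\nu \, (1+d(W,Z))^\nu}.
\]
Multiplying by the weight $(1+d(V,Z))^\nu(1+d(W,Z))^\nu$ cancels the $Z$-distance factors; summing over $W$ and applying AM-GM then yields
\[
  \sum_W \|VLW\|_\Tr (1+d(V,Z))^\nu (1+d(W,Z))^\nu
  \leq S_s \, [L]_{2s}^{1/2} [L]_{2\nu, Z}^{1/2}
  \leq \tfrac{S_s}{2}\bigl([L]_{2s} + [L]_{2\nu, Z}\bigr).
\]
Taking the supremum over $V$ gives $\|L\|_{\nu, Z} \leq \tfrac{S_s}{2}\bigl([L]_{2s} + [L]_{2\nu, Z}\bigr)$, which together with the unweighted estimate establishes the claimed equivalence of the two families of seminorms.
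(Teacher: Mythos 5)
Your proposal is correct and follows essentially the same route as the paper: the easy inequality is the same, and the converse is obtained by inserting a summable factor $(1+d(V,W))^{-s}$ whose summability $\sup_V\sum_W(1+d(V,W))^{-s}<\infty$ rests on polynomial volume growth, together with the same geometric-mean (Cauchy--Schwarz) combination of the bounds from $[L]_{2s}$ and $[L]_{2\nu,Z}$ for the $Z$-weighted seminorms. The only differences are cosmetic (your explicit tile-counting remark and the final AM--GM step).
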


\begin{proof}
The estimates $[L]_\nu \leq \|L\|_\nu$ and $[L]_{\nu, Z} \leq \|L\|_{\nu, Z}$ are obvious.
For the converse estimate, we get 
\begin{equation*}
\begin{aligned}
  \|L\|_\nu 
  &= \sup_{V \in \mathcal{V}} \sum_{W \in \mathcal{V}} \|VLW\|_{\Tr} \bigl(1+d(V, W)\bigr)^\nu\\
  &=  \sup_{V \in \mathcal{V}} \sum_{W \in \mathcal{V}} \bigl(1+d(V, W)\bigr)^{-\mu} \|VLW\|_{\Tr} \bigl(1+d(V, W)\bigr)^{\nu+\mu}\\
  &\leq [L]_{\nu + \mu} \cdot \sup_{V \in \mathcal{V}} \sum_{W \in \mathcal{V}} \bigl(1+d(V, W)\bigr)^{-\mu}.
\end{aligned}
\end{equation*}
This is finite for $\mu$ sufficiently large, by the polynomial growth assumption.
For any subset $Z \subseteq M$, we have
\begin{equation*}
\begin{aligned}
  \|L\|_{\nu, Z} 
  &= \sup_{V \in \mathcal{V}} \sum_{W \in \mathcal{V}} \|VLW\|_{\Tr} \bigl(1+d(V, Z)\bigr)^\nu\bigl(1+d(W, Z)\bigr)^\nu\\
  &\leq   \sup_{V \in \mathcal{V}} \sum_{W \in \mathcal{V}} \bigl(1+d(V, W)\bigr)^{-\mu} \|VLW\|_{\Tr} \bigl(1+d(V, W)\bigr)^{\mu}\bigl(1+d(V, Z)\bigr)^\nu\bigl(1+d(W, Z)\bigr)^\nu\\
  &\leq [L]_{2\nu, Z}^{1/2} \cdot [L]_{2\mu}^{1/2} \cdot  \sup_{V \in \mathcal{V}} \sum_{W \in \mathcal{V}} \bigl(1+d(V, W)\bigr)^{-\mu},
\end{aligned}
\end{equation*}
which is again finite for $\mu$ sufficiently large, by the polynomial growth assumption.
\end{proof}

The seminorms Eq.\ \eqref{eqn:sum.seminorm}-\eqref{eqn:sum.seminorm.localized} are especially useful to verify  the algebra and ideal properties of $\sB(M; Z)$.

\begin{lem}
\label{LemmaBMnormssubmultiplicative}
Up to a scaling, the quantities Eq.\ \eqref{eqn:sum.seminorm} are submultiplicative seminorms on $\sB(M)$.
Hence $\sB(M)$ is an ($m$-convex) Fr\'{e}chet algebra.
\end{lem}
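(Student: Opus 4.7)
The plan is to insert the resolution of the identity $\sum_{U \in \mathcal{V}} U = \mathrm{id}$ (valid because $\mathcal{V}$ is a Borel partition of $M$) between the two factors of a product $L_1 L_2$, and then combine submultiplicativity of the trace norm with a weight factorization from the triangle inequality. Concretely, for $L_1, L_2 \in \sB(M)$ and $V, W \in \mathcal{V}$, I use $U = U^2$ together with the standard bounds $\|AB\|_{\Tr} \leq \|A\|_{\Tr}\|B\|_{\op}\leq \|A\|_{\Tr}\|B\|_{\Tr}$ to obtain
\[
\|V L_1 L_2 W\|_{\Tr} \;\leq\; \sum_{U \in \mathcal{V}} \|VL_1U\|_{\Tr}\,\|UL_2W\|_{\Tr}.
\]

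Next, since every tile has diameter at most $r_0$, the triangle inequality through $U$ gives $d(V,W)\leq d(V,U)+r_0+d(U,W)$, and the elementary inequality $1+a+b\leq (1+a)(1+b)$ for $a,b\geq 0$ yields
\[
(1+d(V,W))^\nu \;\leq\; (1+r_0)^\nu\,(1+d(V,U))^\nu\,(1+d(U,W))^\nu.
\]
Multiplying the previous bound by this weight and summing over $W \in \mathcal{V}$, the inner sum $\sum_W \|UL_2W\|_{\Tr}(1+d(U,W))^\nu$ is bounded by $\|L_2\|_\nu$ uniformly in $U$; taking the supremum over $V$ of the remaining $U$-sum (bounded by $\|L_1\|_\nu$) yields
\[
\|L_1 L_2\|_\nu \;\leq\; (1+r_0)^\nu\,\|L_1\|_\nu\,\|L_2\|_\nu,
\]
so the rescaled seminorms $L\mapsto (1+r_0)^\nu\|L\|_\nu$ are submultiplicative.

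For the Fr\'echet algebra conclusion, the above bound shows $\sB(M)$ is closed under multiplication and that the product is jointly continuous. The topology is metrizable by the countable subfamily $\{\|\cdot\|_n : n \in \NN\}$ (using monotonicity in $\nu$), and completeness is a routine verification: a sequence Cauchy in every $\|\cdot\|_\nu$ has Cauchy block trace norms $\|VLW\|_{\Tr}$, which assemble into a limit in $\sB(M)$. Having a defining family of submultiplicative seminorms makes $\sB(M)$ $m$-convex. The only technical nuisance is tracking the additive constant $r_0$ from the finite tile diameter, which cannot be avoided when passing through a tile $U$ of positive size and is exactly what forces the ``up to a scaling'' qualifier in the statement.
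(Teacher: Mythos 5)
Your argument is correct and is essentially identical to the paper's proof: both insert the tiling resolution of identity between the factors, use submultiplicativity of the trace norm on each block, and factor the weight via the triangle inequality through a tile $U$ of diameter at most $r_0$, arriving at $\|L_1L_2\|_\nu \leq (1+r_0)^\nu\|L_1\|_\nu\|L_2\|_\nu$. The extra remarks on metrizability and completeness go slightly beyond what the paper records here, but the core argument matches.
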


\begin{proof}
We start with the following preliminary considerations.
Let $V, W, U \in \mathcal{V}$. 
Then for any $u, u^\prime \in M$, we have
\begin{equation*}
\begin{aligned}
d(V, W)
= \inf_{v \in V} \inf_{ w\in W} d(v, w) 
&\leq \inf_{v \in V} \inf_{ w\in W} \bigl(d(v, u) + d(u, u^\prime) + d(u^\prime, w)\bigr)\\
&= d(u, V) + d(u, u^\prime) + d(u^\prime, W) \\
&\leq d(u, V) + \diam(U) + d(u^\prime, W).
\end{aligned}
\end{equation*}
Taking the infimum over all $u, u^\prime \in U$ and using that $\diam(U) \leq r_0$, we get for any $\nu \geq 0$ the estimate
\begin{equation}
\label{TriangleII}
\begin{aligned}
  \bigl(1 + d(V, W)\bigr)^\nu &\leq \bigl(1 + r_0 + d(V, U)+ d(U, W)\bigr)^\nu\\
  &\leq (1+r_0)^\nu \bigl(1 + d(V, U))^\nu(1 + d(U, W)\bigr)^\nu.
\end{aligned}
\end{equation}
Hence for $L,L^\prime\in\sB(M)$, we have
\begin{align*}
\|LL^\prime\|_\nu &\leq %\sum_{j\in\NN}\|(LL^\prime)_{ij}\|_\Tr(1+d(V,W))^{\nu}
\sup_{V \in \mathcal{V}} \sum_{W, U \in \mathcal{V}}\|VLU\|_\Tr\|UL^\prime W\|_\Tr\bigl(1+d(V,W)\bigr)^\nu \\
&\leq (1+r_0)^\nu \sup_{V \in \mathcal{V}}\sum_{U \in \mathcal{V}}\|VLU\|_\Tr\bigl(1+d(V,U)\bigr)^\nu\sum_{W \in \mathcal{V}}\|UL^\prime W\|_\Tr\bigl(1+d(U,W)\bigr)^\nu \\
&\leq  (1+r_0)^\nu \|L\|_\nu\|L^\prime\|_\nu
\end{align*}
Note that this means, in particular, that $LL^\prime$ always has finite $\nu$-seminorm, so $\sB(M)$ is closed under composition. 
\end{proof}

\begin{lem}
\label{LemmaIdeal}
$\sB(M; Z)$ is an ideal in $\sB(M)$. In fact, for $L \in \sB(M)$ and $L^\prime \in \sB(M;Z)$, we have the estimates
\begin{equation*}
  \|L L^\prime\|_{\nu, Z} \leq \|L\|_\nu \|L^\prime\|_{\nu, Z}, \qquad \text{and} \qquad \|L^\prime L\|_{\nu, Z} \leq \|L\|_\nu \|L^\prime\|_{\nu, Z}.\qquad
\end{equation*}
\end{lem}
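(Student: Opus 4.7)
The plan is to imitate the proof of Lemma~\ref{LemmaBMnormssubmultiplicative}, working with the equivalent seminorms $\|\cdot\|_{\nu, Z}$ from Eq.~\eqref{eqn:sum.seminorm.localized}, and to distribute the $Z$-dependent weight factor across an intermediate tile. The key preliminary estimate is a one-sided analogue of \eqref{TriangleII}: for $V, U \in \mathcal{V}$ and points $v \in V$, $u_1, u_2 \in U$, $z \in Z$, the triangle inequality gives
\begin{equation*}
  d(v, z) \leq d(v, u_1) + d(u_1, u_2) + d(u_2, z) \leq d(v, u_1) + r_0 + d(u_2, z),
\end{equation*}
and taking infima over $v, u_1, u_2, z$ yields $d(V, Z) \leq d(V, U) + r_0 + d(U, Z)$, hence
\begin{equation*}
  (1 + d(V, Z))^\nu \leq (1 + r_0)^\nu (1 + d(V, U))^\nu (1 + d(U, Z))^\nu.
\end{equation*}

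Given this, I would insert the resolution of identity $\mathrm{id} = \sum_{U \in \mathcal{V}} U$ between $L$ and $L'$ to get $\|VLL'W\|_{\Tr} \leq \sum_{U} \|VLU\|_{\Tr}\|UL'W\|_{\Tr}$, then plug into the definition of $\|\cdot\|_{\nu,Z}$ and distribute the weight $(1+d(V,Z))^\nu$ through $U$ via the inequality above, while leaving $(1+d(W,Z))^\nu$ paired with the second factor. This produces
\begin{equation*}
\begin{aligned}
\|LL'\|_{\nu, Z} &\leq (1+r_0)^\nu \sup_{V \in \mathcal{V}} \sum_{U \in \mathcal{V}} \|VLU\|_{\Tr}(1+d(V,U))^\nu \\
&\qquad \qquad \cdot \sum_{W \in \mathcal{V}}\|UL'W\|_{\Tr}(1+d(U,Z))^\nu(1+d(W,Z))^\nu,
\end{aligned}
\end{equation*}
where the inner sum over $W$ is bounded by $\|L'\|_{\nu, Z}$ uniformly in $U$, and the remaining sum over $U$ is bounded by $\|L\|_\nu$. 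The estimate for $L' L$ is entirely symmetric: now one distributes $(1+d(W,Z))^\nu$ through $U$ via $d(W, Z) \leq d(W, U) + r_0 + d(U, Z)$, while $(1+d(V,Z))^\nu$ stays with the $L'$ factor.

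After absorbing the constant $(1+r_0)^\nu$ into a rescaling of the seminorms (as is already implicit in the "up to a scaling" convention adopted in Lemma~\ref{LemmaBMnormssubmultiplicative}), the claimed inequalities $\|LL'\|_{\nu,Z} \leq \|L\|_\nu \|L'\|_{\nu,Z}$ and $\|L'L\|_{\nu,Z} \leq \|L\|_\nu \|L'\|_{\nu,Z}$ follow, and both products lie in $\sB(M; Z)$, so this is an ideal. No serious obstacle appears; the only bookkeeping is to make sure the $Z$-weight at each endpoint splits into a piece absorbed by the $\|\cdot\|_\nu$ norm on $L$ and a piece joining the $\|\cdot\|_{\nu,Z}$ norm on $L'$, which is precisely what the one-sided triangle inequality above accomplishes.
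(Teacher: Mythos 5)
Your proposal is correct and follows essentially the same route as the paper's proof: the same one-sided triangle inequality $(1+d(V,Z))^\nu \leq (1+r_0)^\nu(1+d(V,U))^\nu(1+d(U,Z))^\nu$, the same insertion of the tiling resolution of identity, and the same splitting of the endpoint weights between the two factors. You even correctly flag the $(1+r_0)^\nu$ constant that the paper's proof produces but its lemma statement suppresses under the "up to scaling" convention.
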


\begin{proof}
Similar to Eq.\ \eqref{TriangleII}, for any $U, V \in \mathcal{V}$, we have
\begin{equation*}
  \bigl(1 + d(V, Z)\bigr)^\nu 
  \leq (1+r_0)^\nu \bigl(1 + d(V, U)\bigr)^\nu\bigl(1 + d(U, Z)\bigr)^\nu.
\end{equation*}
Hence for $L \in \sB(M)$ and $L^\prime \in \sB(M;Z)$, we calculate
\begin{equation*}
\begin{aligned}
\|LL^\prime\|_{\nu, Z} 
&\leq 
\sup_{V \in \mathcal{V}}\sum_{U,W \in \mathcal{V}} \|VLU\|_{\Tr}\|UL^\prime W\|_{\Tr}\bigl(1+ d(V, Z)\bigr)^\nu\bigl(1+ d(W, Z)\bigr)^\nu\\
&\leq  (1+r_0)^\nu\sup_{V \in \mathcal{V}} \sum_{U \in \mathcal{V}} \|VLU\|_{\Tr}(1+ d(V,U))^\nu \times \\
&\qquad\qquad\qquad\qquad\qquad\times\sum_{W \in \mathcal{V}} \|UL^\prime W\|_{\Tr}\bigl(1+ d(U, Z)\bigr)^\nu\bigl(1+ d(W, Z)\bigr)^\nu\\
& \leq  (1+r_0)^\nu\|L\|_\nu \|L^\prime\|_{\nu, Z}.
\end{aligned}
\end{equation*}
The other estimate is similar.
%
\begin{comment}
\begin{equation*}
\begin{aligned}
\|L^\prime L\|_{\nu, Z} 
&\leq 
\sup_{V \in \mathcal{V}}\sum_{UW \in \mathcal{V}} \|VL^\prime U\|_{\Tr}\|ULW\|_{\Tr}(1+ d(V, Z))^\nu(1+ d(W, Z))^\nu\\
&\leq (1+r_0)^\nu\sup_{V \in \mathcal{V}} \sum_{U \in \mathcal{V}} \|VL^\prime U\|_{\Tr}(1+ d(V,Z))^\nu \sum_{W \in \mathcal{V}} \|ULW\|_{\Tr}(1+ d(U, Z))^\nu(1+ d(U, W))^\nu\\
%&\leq\left(\sup_{V \in \mathcal{V}} \sum_{U \in \mathcal{V}} \|VL^\prime U\|_{\Tr}(1+ d(V,Z))^\nu(1+ d(U, Z))^\nu\right)\cdot \left(\sup_{U \in \mathcal{V}} \sum_{W \in \mathcal{V}} \|ULW\|_{\Tr}(1+ d(U, W))^\nu\right)\\
&= (1+ r_0)^\nu\|L^\prime\|_{\nu, Z}\|L\|_\nu .
\end{aligned}
\end{equation*}
\end{comment}
\end{proof}

The above discussion finishes the proof that $\sB(M)$ is a Fr\'echet algebra and that $\sB(M; Z)$ are ideals in $\sB(M)$.
The following lemma verifies property (iii) of Thm.~\ref{ThmAbstractPropertiesBMZ}.

\begin{lem}[Localization]\label{thm:coarse.transverse.ideals}
Let $Z_0,\ldots, Z_q\subseteq M$ be polynomially excisive. Then
\begin{equation*}
\sB(M;Z_0)\cdot\sB(M;Z_1)\cdot\ldots\cdot\sB(M;Z_q)\subseteq \sB(M;\textstyle\bigcap_{n=0}^qZ_n).
\end{equation*}
\end{lem}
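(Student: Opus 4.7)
The plan is to work with the seminorms $\|\cdot\|_\nu$ and $\|\cdot\|_{\nu, Z}$ from Eqs.~\eqref{eqn:sum.seminorm}--\eqref{eqn:sum.seminorm.localized}, which are well-suited to submultiplicativity arguments in the spirit of Lemmas~\ref{LemmaBMnormssubmultiplicative}--\ref{LemmaIdeal}. Setting $Z := \bigcap_{n=0}^q Z_n$, $U_0 := V$, and $U_{q+1} := W$, and inserting tile partitions of unity into the product,
\begin{equation*}
\|V L_0 L_1 \cdots L_q W\|_\Tr \leq \sum_{U_1, \ldots, U_q \in \mathcal{V}} \prod_{n=0}^q \|U_n L_n U_{n+1}\|_\Tr.
\end{equation*}
The task reduces to weighting this by $(1+d(V,Z))^\nu(1+d(W,Z))^\nu$, summing over $W \in \mathcal{V}$ and taking the supremum over $V \in \mathcal{V}$, and showing that the result is bounded by a finite combination of seminorms of the $L_n$.

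The crucial geometric input comes from polynomial excisiveness: if $d(x, Z_n) \leq r$ for every $n$, then $x \in \bigcap_n(Z_n)_r \subseteq Z_{r^\mu}$, which (together with the tile diameter bound $r_0$) yields a pointwise estimate of the form
\begin{equation*}
(1 + d(V, Z))^\nu \leq C_\nu \prod_{n=0}^q (1 + d(V, Z_n))^{\mu\nu},
\end{equation*}
and likewise for $W$. To transfer $(1 + d(V, Z_n))$ to distances involving the intermediate tiles $U_k$, I would apply the telescoping triangle inequality $1 + d(V, Z_n) \leq C'\prod_{k=0}^{n-1}(1 + d(U_k, U_{k+1})) \cdot (1 + d(U_n, Z_n))$ (and analogously for $W$, telescoping backward through $U_q, \ldots, U_{n+1}$). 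After substitution, the weight becomes a product in which each inter-tile factor $(1+d(U_k, U_{k+1}))^{\mu\nu}$ appears at most $q$ times, while each $(1+d(U_n, Z_n))^{\mu\nu}$ appears at most twice.

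On the operator side, each block $\|U_n L_n U_{n+1}\|_\Tr$ is simultaneously controlled by the seminorms $[L_n]_{\nu'}$ and $[L_n]_{\nu'', Z_n}$; combining them via $\min\{a,b\} \leq \sqrt{ab}$ yields, for any $\alpha, \beta \geq 0$,
\begin{equation*}
\|U_n L_n U_{n+1}\|_\Tr \leq \sqrt{[L_n]_{2\alpha}\,[L_n]_{2\beta, Z_n}}\,(1+d(U_n, U_{n+1}))^{-\alpha}(1+d(U_n, Z_n))^{-\beta}(1+d(U_{n+1}, Z_n))^{-\beta}.
\end{equation*}
Choosing $\alpha, \beta$ large enough so that, after cancelling all polynomial factors inherited from the geometric weight transfer above, each residual sum $\sum_{U_k \in \mathcal{V}}(1 + d(U_{k-1}, U_k))^{-\alpha'}$ still converges uniformly (by the polynomial volume growth of $M$), the multiple sum collapses to a finite quantity. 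This produces the desired finite bound on $\|L_0 \cdots L_q\|_{\nu, Z}$ for every $\nu$, establishing the claim.

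The main obstacle is the combinatorial bookkeeping of exponents: the telescoping causes the inter-tile weight $(1+d(U_k, U_{k+1}))$ to accumulate a power of $\mu\nu$ for every index $n$ it passes through, so the single seminorm exponent $\alpha$ of each $L_n$ must be taken at least of order $q\mu\nu$, plus an additional buffer exceeding the polynomial volume growth rate to guarantee summability. The extra polynomial room demanded by this absorption is precisely what the definition of polynomial excisiveness (as opposed to merely coarse excisiveness) is designed to provide.
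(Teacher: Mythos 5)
Your proposal is correct in outline, but it takes a genuinely different (and heavier) route than the paper. You prove the localized seminorm bound ``from scratch'': you insert tile resolutions of the identity between consecutive factors, telescope the weights $(1+d(V,Z_n))$ and $(1+d(W,Z_n))$ through the chain of intermediate tiles, interpolate each block $\|U_nL_nU_{n+1}\|_{\Tr}$ between its off-diagonal-decay seminorm and its $Z_n$-localized seminorm, and then tune exponents so the residual sums converge by polynomial volume growth. The paper avoids all of this bookkeeping by exploiting modularity: since each $\sB(M;Z_n)$ is an ideal (Lemma~\ref{LemmaIdeal}), the full product $L=L_0\cdots L_q$ already lies in $\sB(M;Z_n)$ for \emph{every} $n$, so all seminorms $[L]_{\nu',Z_n}$ are finite; it then only needs the single-tile geometric inequality $1+d(V,\bigcap_n Z_n)\leq C\prod_n(1+d(V,Z_n))^{\mu}$ from polynomial excisiveness, together with the elementary trick of writing $\|VLW\|_{\Tr}=\prod_{n=0}^q\|VLW\|_{\Tr}^{1/(q+1)}$ to distribute the trace norm among the $q+1$ localized seminorms, yielding $[L]_{\nu,Y}\leq C^{2\nu}\prod_n[L]_{(q+1)\nu\mu,Z_n}^{1/(q+1)}$ with the sup-type seminorms and no multi-index sums at all. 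Both arguments hinge on the same geometric input (your pointwise estimate is exactly the paper's), and both implicitly rely on Lemma~\ref{LemmaBMnormssubmultiplicative} for the unlocalized seminorms of the product; the paper's version buys brevity, yours buys self-containedness at the cost of the exponent accounting you correctly flag as the main obstacle.
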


\begin{proof}
Set $Y = \bigcap_{n=0}^q Z_n$.
Let $V \in \mathcal{V}$ and set $\rho_n = d(V, Z_n)$, $n=0, \dots, q$.
Then for each $n$ there exists $x_n \in V$ such that $d(x_n, Z_n) \leq \rho_n$.
Since $\mathrm{diam}(V) \leq r_0$, there also exists $x \in V$ such that $d(x, Z_n) \leq \rho_n + r_0$.
Then with $\rho = \max\{\rho_0, \dots, \rho_q\}+r_0$, we have $x \in \bigcap_{n=0}^q (Z_n)_\rho$.
Consequently, since $Z_0, \dots, Z_q$ is polynomially excisive,  there exists $\mu \geq 1$ such that $x \in Y_{\rho^\mu}$.
Hence
\begin{equation*}
  d(V, Y) \leq \rho^\mu \leq \sum_{n=0}^q (\rho_n+r_0)^\mu = \sum_{n=0}^q \bigl(r_0+d(V, Z_n) \bigr)^\mu 
\end{equation*}
Therefore, we obtain that
\begin{equation*}
  1 + d(V, Y) \leq  C\prod_{n=0}^q \bigl(1+d(V, Z_n)\bigr)^\mu
\end{equation*}
for some constant $C >0$. Let $L_n \in \sB(M, Z_n)$ and set $L = L_0 \cdots L_q$. 
Then
\begin{equation*}
\begin{aligned}
 [L]_{\nu, Y} 
 &= \sup_{V, W \in \mathcal{V}} \|VLW\|_{\Tr} \bigl(1 + d(V, Y)\bigr)^\nu\bigl(1 + d(W, Y)\bigr)^\nu\\
 &\leq \sup_{V, W \in \mathcal{V}} \|VLW\|_{\Tr} \cdot C^{2\nu}\prod_{n=0}^q\bigl(1 + d(V, Z_n)\bigr)^{\nu\mu}\bigl(1 + d(W, Z_n)\bigr)^{\nu\mu}\\
  &= C^{2\nu}\sup_{V, W \in \mathcal{V}} \prod_{n=0}^q\|VLW\|_{\Tr}^{1/(q+1)}\bigl(1 + d(V, Z_n)\bigr)^{\nu\mu}\bigl(1 + d(W, Z_n)\bigr)^{\nu\mu}\\
 &\leq C^{2\nu}\prod_{n=0}^q [L]_{(q+1)\nu \mu, Z_n}^{1/(q+1)}.
\end{aligned}
\end{equation*}
This is finite as by Lemma~\ref{LemmaIdeal}, we have $L \in \sB(M, Z_n)$ for each $n=0, \dots, q$.
\end{proof}

The following lemma verifies property (i) of Thm.~\ref{ThmAbstractPropertiesBMZ}.

\begin{lem}
\label{LemmaZLinBMZ}
For any $L \in \sB(M)$ and any Borel subset $Z \subseteq M$, the operators $ZL$ and $LZ$ are contained in $\sB(M; Z)$.
\end{lem}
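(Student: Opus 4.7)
The plan is to verify directly that the two families of seminorms of Definition~\ref{dfn:ThreeBarNorm.algebras} are finite on $ZL$, and then reduce the $LZ$ case to this via taking adjoints (using that both $\sB(M)$ and $\sB(M;Z)$ are $*$-closed, as noted after the definition). The core observation throughout is that for any Borel subset $K \subseteq M$, the product $K \cdot Z$ is multiplication by $\chi_{K \cap Z}$; hence $K(ZL)K' = \chi_{K \cap Z} L K'$, which vanishes unless $K \cap Z$ has positive measure.

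From this, finiteness of $\threebar ZL \threebar_\nu$ is immediate: since $\chi_{K \cap Z}$ is a subprojection of $\chi_K$, one has $\|K(ZL)K'\|_{\Tr} \leq \|KLK'\|_{\Tr}$, yielding $\threebar ZL \threebar_\nu \leq \threebar L \threebar_\nu < \infty$ upon taking the supremum in \eqref{ThreeBarNorm1}.

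The substance of the argument is the bound on $\threebar ZL \threebar_{\nu, Z}$. Here I would exploit that only $K$ with $K \cap Z \neq \emptyset$ contribute; for any such $K$ one has $d(K, Z) = 0$, so the factor $(1 + d(K, Z))^\nu$ is automatically $1$. The remaining factor $(1 + d(K', Z))^\nu$ is then tamed by a one-line triangle estimate: picking $z \in K \cap Z$, we have $d(K', Z) \leq d(K', z) \leq d(K, K') + \diam(K) \leq d(K, K') + 1$, which gives $(1 + d(K', Z))^\nu \leq 2^\nu (1 + d(K, K'))^\nu$. Combined with the trace-norm bound of the previous paragraph, this yields $\threebar ZL\threebar_{\nu, Z} \leq 2^\nu \threebar L\threebar_\nu$, finite since $L \in \sB(M)$.

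The case of $LZ$ then follows at once by writing $(LZ)^* = Z L^*$, applying the above to $L^* \in \sB(M)$, and taking adjoints inside the $*$-closed ideal $\sB(M; Z)$. There is no real obstacle here; the argument is a direct unwinding of the definitions, the only mild subtlety being the recognition that the sets $K$ contributing to the $Z$-localized supremum automatically satisfy $d(K, Z) = 0$, which is precisely what converts the off-diagonal decay encoded in $\threebar L \threebar_\nu$ into the $Z$-localized decay required for membership in $\sB(M; Z)$.
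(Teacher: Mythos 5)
Your proof is correct and follows essentially the same route as the paper's: the key point in both is that a nonvanishing block forces $d(K,Z)=0$ (resp.\ $d(V,Z)=0$), after which a triangle inequality converts the off-diagonal decay of $L$ into the $Z$-localized decay. The only difference is cosmetic: you estimate the defining seminorms $\threebar\cdot\threebar_\nu$, $\threebar\cdot\threebar_{\nu,Z}$ directly, while the paper runs the identical argument through the equivalent tiling-based seminorms $\|\cdot\|_\nu$, $\|\cdot\|_{\nu,Z}$; both handle $LZ$ by passing to adjoints.
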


\begin{proof}
As in \eqref{TriangleII}, for any $V, W \in \mathcal{V}$, we have 
\begin{equation*}
  \bigl(1 + d(W, Z)\bigr)^\nu
  \leq (1+r_0)^\nu \bigl(1 + d(W, V)\bigr)^\nu\bigl(1 + d(V, Z)\bigr)^\nu,
\end{equation*}
so
\begin{equation*}
\begin{aligned}
  \|ZL\|_{\nu, Z} &\leq \sup_{V \cap Z \neq \emptyset} \sum_{W \in \mathcal{V}} \|VLW\|_{\Tr} \bigl(1+ d(V, Z)\bigr)^\nu\bigl(1+ d(W, Z)\bigl)^\nu\\
  &\leq (1+r_0)^\nu\sup_{V \cap Z \neq \emptyset} \sum_{W \in \mathcal{V}} \|VLW\|_{\Tr} \bigl(1+ \underbrace{d(V, Z)}_{0}\bigr)^{2\nu}\bigl(1+ d(W, V)\bigr)^\nu \leq \|L\|_{\nu}.
\end{aligned}
\end{equation*}
Hence $ZL \in \sB(M;Z)$.
For $LZ$, the result follows from the fact that $\sB(M)$ and $\sB(M; Z)$ are $*$-closed.
\end{proof}

\begin{lem}\label{lem:automatic.locally.trace.class}
For any bounded Borel subset $K\subseteq M$, and any $L\in\sB(M)$, the operators $KL$ and $LK$ are  trace class, i.e., $L$ is locally trace class.
\end{lem}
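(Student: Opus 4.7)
The plan is to reduce the trace-class question to an absolute summability of trace norms $\|V L W\|_{\Tr}$ over tiles $V, W$, using a bounded-geometry tiling $\mathcal{V}$ of $M$ as constructed in the previous subsection. Since $K$ is bounded and $\mathcal{V}$ is uniformly locally finite, only finitely many tiles $V_1, \ldots, V_n \in \mathcal{V}$ meet $K$, and because the characteristic functions of the tiles form a resolution of the identity, one has the finite decomposition $K = \sum_{i=1}^n K V_i$. It therefore suffices to prove that each $K V_i L$ is trace class.

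For fixed $i$, writing $1 = \sum_{W \in \mathcal{V}} W$ in the strong operator topology expresses $K V_i L$ as a strong limit of partial sums $\sum_{W} K V_i L W$. I would upgrade this to convergence in trace norm via the estimate
$$
\sum_{W \in \mathcal{V}} \|K V_i L W\|_{\Tr} \;\leq\; \sum_{W \in \mathcal{V}} \|V_i L W\|_{\Tr} \;\leq\; \|L\|_0,
$$
with $\|L\|_0$ the seminorm of Eq.~\eqref{eqn:sum.seminorm}. The two preceding lemmas, combined with polynomial volume growth of $M$, show that $\|L\|_0$ is finite whenever $L \in \sB(M)$. Hence the partial sums are Cauchy in trace norm, the limit is a trace-class operator, and it coincides with $K V_i L$ since both are the strong limits of the same partial sums. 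Summing over the finitely many $i$ then yields that $KL$ is trace class, with $\|KL\|_{\Tr} \leq n \cdot \|L\|_0$.

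For $LK$, I would invoke that $\sB(M)$ is $*$-closed, so $L^* \in \sB(M)$, whence the argument just given shows $K L^*$ is trace class, and therefore so is its adjoint $L K$. The only non-routine ingredient in the argument is the finiteness of $\|L\|_0$, which is exactly what the equivalences of seminorms established in the two previous lemmas deliver, with polynomial growth of $M$ controlling the relevant summations over tiles; everything else is bookkeeping about finite decompositions and strong-operator limits.
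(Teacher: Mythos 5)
Your proof is correct and takes essentially the same route as the paper's: decompose $K$ over the finitely many tiles it meets, bound $\|KL\|_{\Tr}$ by $\sum_{V\cap K\neq\emptyset}\sum_{W}\|VLW\|_{\Tr}\leq N\|L\|_0$, and handle $LK$ by taking adjoints. The extra care you take with strong-operator limits and trace-norm Cauchy sequences is sound but not needed beyond the one-line estimate the paper records.
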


\begin{proof}
Since $K$ is bounded, the number $N$ of sets $V \in \mathcal{V}$ such that $V \cap K \neq \emptyset$ is finite.
Therefore
\begin{equation*}
\|KL\|_\Tr \leq \sum_{V \cap K \neq \emptyset} \|VL\|_{\Tr} \leq \sum_{V \cap K \neq \emptyset} \sum_{W \in \mathcal{V}}\|VLW\|_{\Tr} \leq N \|L\|_0.
\end{equation*}
The result for $LK$ follows since $\sB(M; K)$ and the space of trace class operators are $*$-closed.
\end{proof}

Finally, the following lemma verifies property (ii) of Thm.~\ref{ThmAbstractPropertiesBMZ} and hence completes its proof.

\begin{lem}\label{prop:compact-localized.trace.class}
For any bounded subset $K\subseteq M$, the elements of $\sB(M;K)$ are trace class.
\end{lem}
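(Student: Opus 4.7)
The plan is to bound the trace norm of $L \in \sB(M;K)$ directly, using the rapid off-diagonal decay encoded in the seminorms $[\,\cdot\,]_{\nu,K}$ together with the polynomial volume growth of $M$. Fix the tiling $\mathcal{V}$. Since $\sum_{V \in \mathcal{V}} V = \mathrm{id}_{L^2(M)}$ in the strong operator topology, one has the identity $L = \sum_{V, W \in \mathcal{V}} VLW$, with convergence a priori only in the strong operator topology. The strategy is to show that the right-hand side actually converges absolutely in trace norm; the limit is then automatically a trace class operator, and uniqueness of strong-operator limits forces it to equal $L$.

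The pointwise input is the bracket seminorm bound: since $L \in \sB(M;K)$, for every $\nu \geq 0$ and every $V, W \in \mathcal{V}$,
\begin{equation*}
  \|VLW\|_{\Tr} \leq [L]_{\nu, K}\bigl(1 + d(V, K)\bigr)^{-\nu}\bigl(1 + d(W, K)\bigr)^{-\nu}.
\end{equation*}
It therefore suffices to show that $\sum_{V \in \mathcal{V}}(1+d(V,K))^{-\nu} < \infty$ for some $\nu$. This in turn reduces to a polynomial counting estimate: since $K$ is bounded and each $V \in \mathcal{V}$ has diameter at most $r_0$, every $V$ with $d(V,K) \leq n$ is contained in $K_{n+r_0}$, which sits inside a ball whose volume is at most $C(1+n)^\mu$ by polynomial volume growth. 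Combined with uniform local finiteness of $\mathcal{V}$, this yields constants $C_K, \mu > 0$ such that
\begin{equation*}
  \#\{V \in \mathcal{V} : d(V, K) \leq n\} \leq C_K(1+n)^\mu.
\end{equation*}
Summing shell-by-shell then gives $\sum_V (1+d(V,K))^{-\nu} < \infty$ as soon as $\nu > \mu + 1$.

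Choosing such a $\nu$, we obtain
\begin{equation*}
  \sum_{V, W \in \mathcal{V}} \|VLW\|_{\Tr} \leq [L]_{\nu, K}\Bigl(\sum_{V \in \mathcal{V}}(1+d(V,K))^{-\nu}\Bigr)^2 < \infty,
\end{equation*}
so the series $\sum_{V, W} VLW$ converges absolutely in trace norm to a trace class operator, which must coincide with $L$. The only real obstacle is the polynomial counting estimate; once that is in hand, the rest is just the triangle inequality for the trace norm together with matching of strong-operator and trace-norm limits.
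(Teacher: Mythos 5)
Your proof is correct and follows essentially the same route as the paper's: bound $\|L\|_{\Tr}$ by $\sum_{V,W}\|VLW\|_{\Tr}$, insert the decay factors $(1+d(\cdot,K))^{-\nu}$ coming from the defining seminorms of $\sB(M;K)$, and sum using polynomial volume growth. The only cosmetic difference is that the paper uses the $\|\cdot\|_{\mu,K}$ seminorm (sup over $V$, sum over $W$), so it needs a single convergent sum over $V$ rather than your product of two, and it asserts without comment the tile-counting estimate and the inequality $\|L\|_{\Tr}\leq\sum_{V,W}\|VLW\|_{\Tr}$ that you justify explicitly.
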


\begin{proof}
We have
\begin{equation*}
\begin{aligned}
  \|L\|_{\Tr} &\leq \sum_{V \in \mathcal{V}} \sum_{W \in \mathcal{V}}\|VLW\|_{\Tr}\\
  &\leq \sum_{V \in \mathcal{V}} \bigl(1 + d(V, K)\bigr)^{-\mu} \sum_{W \in \mathcal{V}}\|VLW\|_{\Tr}\bigl(1 + d(V, K)\bigr)^\mu\\
  &\leq \sum_{V \in \mathcal{V}} \bigl(1 + d(V, K)\bigr)^{-\mu} \cdot \|L\|_{\mu, K},
\end{aligned}
\end{equation*}
which is finite for $\mu$ sufficiently large as $M$ has polynomial growth.
\end{proof}

\subsection{Integrality of pairing}

We now have the ingredients to improve the pairing, Definition \ref{dfn:partition.idempotent.pairing}, to allow for $P=P^2\in\sB(M)$. 
As in the previous Subsection, let $M$ be a proper metric measure space of bounded geometry and polynomial volume growth.
We start with the following lemma.

\begin{lem}\label{lem:poly.trace.class}
Let $Z_0,\dots,Z_q\subseteq M$ be coarsely transverse and assume that the sets $(Z_0)_r, \dots, (Z_q)_r$ are polynomially excisive for some $r>0$.
Then for any $L_0,\dots ,L_q\in\sB(M)$, the product $Z_0L_0 \cdots Z_qL_q$ is trace class. 
\end{lem}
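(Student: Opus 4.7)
The plan is to combine the three abstract properties of $\sB(M;Z)$ stated in Theorem \ref{ThmAbstractPropertiesBMZ} with the fact, noted just after Definition \ref{dfn:ThreeBarNorm.algebras}, that $\sB(M;Z) = \sB(M;Z_r)$ for every $r>0$. This last identity is what lets us pass from the given polynomial excisiveness of the thickenings $(Z_n)_r$ to a statement about products of operators supported near the $Z_n$ themselves.

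First, by property (i) of Theorem \ref{ThmAbstractPropertiesBMZ}, each $Z_nL_n$ lies in $\sB(M;Z_n)$. Since $\sB(M;Z_n) = \sB(M;(Z_n)_r)$, we may regard $Z_nL_n\in \sB(M;(Z_n)_r)$ for the fixed $r>0$ from the hypothesis. Now the collection $(Z_0)_r,\dots,(Z_q)_r$ is polynomially excisive by assumption, so property (iii) applies and yields
\[
Z_0 L_0 \cdots Z_q L_q \;\in\; \sB(M;(Z_0)_r)\cdots \sB(M;(Z_q)_r) \;\subseteq\; \sB\bigl(M;\textstyle\bigcap_{n=0}^q (Z_n)_r\bigr).
\]

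Next, coarse transversality of $Z_0,\dots,Z_q$ means precisely that $\bigcap_{n=0}^q (Z_n)_r$ is a bounded Borel subset of $M$. Hence by property (ii) of Theorem \ref{ThmAbstractPropertiesBMZ}, every element of $\sB\bigl(M;\bigcap_{n=0}^q (Z_n)_r\bigr)$ is trace class. Applying this to $Z_0L_0\cdots Z_qL_q$ gives the claim.

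There is no real obstacle here: the lemma is essentially the content of properties (i)--(iii) once one recognizes that the thickenings may be inserted for free, since $\sB(M;Z)$ is insensitive to finite thickenings of $Z$. The only point to be careful about is making sure that the same $r>0$ works in both the polynomial excisiveness hypothesis and the identity $\sB(M;Z_n) = \sB(M;(Z_n)_r)$, which is automatic because the latter holds for every $r>0$.
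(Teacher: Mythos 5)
Your proof is correct and follows essentially the same route as the paper: apply property (i) to place $Z_nL_n$ in $\sB(M;Z_n)=\sB(M;(Z_n)_r)$, property (iii) to the polynomially excisive thickenings, and property (ii) to the bounded intersection guaranteed by coarse transversality. Your explicit remark that $\sB(M;Z)=\sB(M;Z_r)$ is what licenses the passage to the thickened sets is a point the paper leaves implicit, but the argument is identical in substance.
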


\begin{proof}
This follows directly from the properties (i)-(iii) of Thm.~\ref{ThmAbstractPropertiesBMZ}: 
By (i), we have  $Z_n L_n \in \sB(M; Z_n)$, $n=0, \ldots, q$.
Then applying (iii) to the polynomially excisive collection $(Z_0)_r, \dots, (Z_q)_r$ we obtain that the product of the $Z_n L_n$ is contained in the subalgebra $\sB(M; \bigcap_{n=0}^q (Z_n)_r)$.
On the other hand, as $Z_0, \dots, Z_n$ are coarsely transverse, the intersection $\bigcap_{n=0}^q (Z_n)_r$ is bounded.
Hence property (ii) applies to give the trace class property of the product.
\end{proof}

\begin{dfn}
A \emph{polynomial $q$-partition} $(A_0,\ldots,A_q)$ of $M$ is a $q$-partition of $M$ such that for some $r>0$, the sets $(A_0)_r, \dots, (A_q)_r$ are polynomially excisive.
\end{dfn}

\begin{prop}\label{prop:poly.partition.idempotent.pairing}
There is a well-defined pairing of $P=P^2\in\sB(M)$ with polynomial 2-partitions,
\[
\langle A,B,C;P\rangle:=\Tr[A,B,C]_P=\Tr[P_A,P_B],
\]
which is continuous in $P$.
\end{prop}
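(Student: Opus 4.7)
The plan is to prove the three claims in sequence: (i) $[A,B,C]_P$ is trace class; (ii) it equals $[P_A,P_B]$ algebraically, so that $[P_A,P_B]$ is trace class with the same trace; (iii) the resulting number depends continuously on $P$ in the Fréchet topology of $\sB(M)$.

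For (i), I would write out the defining formula \eqref{generalizedCommutator} as an alternating sum of six terms of the shape $\sigma(A)P\sigma(B)P\sigma(C)P$, where $\sigma$ runs over permutations. Each such term falls directly under Lemma~\ref{lem:poly.trace.class}: since $(A,B,C)$ is a polynomial $2$-partition, the three sets are coarsely transverse and their thickenings are polynomially excisive, so applying that lemma with $L_0=L_1=L_2=P\in\sB(M)$ yields trace class-ness of the summand. Summing, $[A,B,C]_P$ is trace class.

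For (ii), I would simply invoke the algebraic identity established already in Lemma~\ref{lem:pairing.is.commutator.trace}: that derivation used only additivity \eqref{AdditivityGeneralizedCommutator}, skew-symmetry of \eqref{generalizedCommutator}, the identity \eqref{OneOfTheEntriesIsM} (which in turn used only $P=P^2$), and $A\sqcup B\sqcup C=M$. None of these steps refer to finite propagation, so the identity $[A,B,C]_P=[P_A,P_B]$ holds as operators in $\sB(M)$. In particular $[P_A,P_B]$ is trace class and $\Tr[A,B,C]_P=\Tr[P_A,P_B]$.

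For (iii), which is the main point of this proposition, I would track the seminorm estimates implicit in the proof of Lemma~\ref{lem:poly.trace.class}. Combining Lemma~\ref{LemmaZLinBMZ} (which gives $\threebar AP\threebar_{\nu,A}\lesssim\|P\|_\nu$, and similarly for $B,C$), Lemma~\ref{LemmaIdeal} (ideal estimates propagate the localization and introduce factors of $\|P\|_\nu$), Lemma~\ref{thm:coarse.transverse.ideals} (polynomial excisiveness localizes the triple product to a bounded set), and Lemma~\ref{prop:compact-localized.trace.class} (trace norm controlled by a $\|\cdot\|_{\mu,K}$ seminorm for $K$ bounded), one obtains an estimate of the form
\[
\bigl\|AP_1BP_2CP_3\bigr\|_{\Tr}\;\leq\;C\,\|P_1\|_{\nu}\,\|P_2\|_{\nu}\,\|P_3\|_{\nu}
\]
for some $\nu$ and a constant $C$ depending on $A,B,C$. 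This trilinear estimate, together with the telescoping
\[
AP'BP'CP'-APBPCP=A(P'-P)BP'CP'+AP B(P'-P)CP'+APBP(P'-P)C P,
\]
shows that $P\mapsto\Tr(APBPCP)$ is continuous in the Fréchet topology of $\sB(M)$; summing over the six permutations in \eqref{generalizedCommutator} concludes continuity of $\langle A,B,C;P\rangle$.

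The main obstacle is step (iii): unlike the finite-propagation case where the support of $[A,B,C]_P$ is literally bounded and the trace reduces to a finite-dimensional computation, here one must quantitatively propagate the rapid-decay seminorm bounds through the triple product and verify that the trace norm is controlled by a single product of $\threebar\cdot\threebar_\nu$-seminorms. Once the ideal and localization estimates of the previous subsection are marshalled correctly, however, this is a routine book-keeping argument.
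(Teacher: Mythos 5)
Your proposal is correct and follows essentially the same route as the paper: trace-classness of each permuted term $\sigma(A)P\sigma(B)P\sigma(C)P$ via Lemma~\ref{lem:poly.trace.class}, the purely algebraic identity $[A,B,C]_P=[P_A,P_B]$ carried over from Lemma~\ref{lem:pairing.is.commutator.trace}, and continuity from the seminorm estimates of Lemmas~\ref{LemmaIdeal} and~\ref{thm:coarse.transverse.ideals} together with the fact that the Fr\'echet topology dominates the trace norm. Your explicit trilinear bound and telescoping argument merely spell out what the paper's one-line continuity claim leaves implicit.
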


\begin{proof}
By Lemma \ref{lem:poly.trace.class},
the operator $APBPCP$ and its antisymmetrization $[A,B,C]_P$ are trace class. 
The equality
$
\Tr[A,B,C]_P=\Tr[P_A,P_B]
$
follows from the same calculation as Lemma \ref{lem:pairing.is.commutator.trace}. 
Continuity follows from the seminorm bounds used to establish Lemma \ref{thm:coarse.transverse.ideals} and Lemma \ref{LemmaIdeal}, and the fact that the Fr\'{e}chet topology is stronger than the trace-norm topology.
\end{proof}

\begin{dfn}\label{dfn:poly.coarse}
A pair of coarsely transverse half-spaces $X,Y\subseteq M$ is said to be \emph{polynomially transverse}, if their associated quadrants $X\cap Y, X\cap Y^c, X^c\cap Y, X^c\cap Y^c$ satisfy the extra condition
\begin{enumerate}
\item[$(\star)$]\label{item:quadrant} 
There exists some $r_0>0$ such that any subcollection of
\[
(X\cap Y)_{r_0},\quad (X\cap Y^c)_{r_0},\quad (X^c\cap Y)_{r_0},\quad (X^c\cap Y^c)_{r_0},
\]
is polynomially excisive.
\end{enumerate}
\end{dfn}

\begin{lem}\label{lem:half.spaces.auto.poly}
Let $X,Y\subseteq M$ be subsets such that their associated quadrants satisfy Condition $(\star)$ of Definition \ref{dfn:poly.coarse}. Then some thickening of the sets $X,X^c,Y,Y^c$ is polynomially excisive.
\end{lem}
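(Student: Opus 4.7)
The plan is to show that the thickened collection $\{X_s, (X^c)_s, Y_s, (Y^c)_s\}$ is polynomially excisive with $s = r_0$, where $r_0$ is the constant from Condition $(\star)$. The strategy is to decompose each half-space as a union of two of the four quadrants
\begin{equation*}
A := X\cap Y,\qquad B := X\cap Y^c,\qquad C := X^c\cap Y,\qquad D := X^c\cap Y^c,
\end{equation*}
and then reduce to polynomial excisiveness for pairs of quadrants, which is already supplied by Condition $(\star)$.

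First I would record the elementary identities $X_s = A_s \cup B_s$, $(X^c)_s = C_s \cup D_s$, $Y_s = A_s \cup C_s$, $(Y^c)_s = B_s \cup D_s$, together with the fact that thickening commutes with finite unions, $(U \cup V)_r = U_r \cup V_r$. This rewrites $(X_s)_r = (A_s)_r \cup (B_s)_r$, and likewise for the other three thickened half-spaces. A short Boolean computation with the distributive law $(U\cup V)\cap(U\cup W) = U\cup(V\cap W)$ (applied first to the $A$- and $D$-factors, and then again on the outside) collapses the four-fold intersection to
\begin{equation*}
(X_s)_r \cap ((X^c)_s)_r \cap (Y_s)_r \cap ((Y^c)_s)_r = \bigl((A_s)_r \cap (D_s)_r\bigr) \cup \bigl((B_s)_r \cap (C_s)_r\bigr).
\end{equation*}
Specializing this to $r=0$ gives the same shape for $X_s\cap (X^c)_s\cap Y_s\cap (Y^c)_s$.

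Next I would apply Condition $(\star)$ to the two pairs $\{A_{r_0}, D_{r_0}\}$ and $\{B_{r_0}, C_{r_0}\}$ (any subcollection of the four thickened quadrants is polynomially excisive by hypothesis). Taking a common exponent $\mu\geq 1$, this yields the inclusions
\begin{equation*}
(A_s)_r \cap (D_s)_r \subseteq (A_s \cap D_s)_{r^\mu},\qquad (B_s)_r \cap (C_s)_r \subseteq (B_s \cap C_s)_{r^\mu}.
\end{equation*}
Unioning these and distributing thickening over the union once more delivers
\begin{equation*}
(X_s)_r \cap ((X^c)_s)_r \cap (Y_s)_r \cap ((Y^c)_s)_r \subseteq \bigl(X_s\cap (X^c)_s\cap Y_s \cap (Y^c)_s\bigr)_{r^\mu},
\end{equation*}
which is the polynomial excisiveness we seek.

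The only point of real content is the Boolean identity that recognizes the four-fold intersection as the union of the two \emph{opposite-quadrant} intersections $(A_s)_r\cap (D_s)_r$ and $(B_s)_r\cap (C_s)_r$; it is precisely this decomposition which tells us which pairs Condition $(\star)$ should be applied to. After that, everything is a straightforward manipulation and I do not anticipate any further obstacle.
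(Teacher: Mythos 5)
Your proof is correct. The Boolean identity at its heart checks out: writing $a=(A_{r_0})_r$, $b=(B_{r_0})_r$, $c=(C_{r_0})_r$, $d=(D_{r_0})_r$, one has
\begin{equation*}
(a\cup b)\cap(a\cup c)=a\cup(b\cap c),\qquad (c\cup d)\cap(b\cup d)=d\cup(b\cap c),
\end{equation*}
and intersecting these gives exactly $(a\cap d)\cup(b\cap c)$; applying Condition $(\star)$ to the two opposite-quadrant pairs and re-assembling then yields the claim. The paper argues along the same general lines --- decompose the half-spaces into quadrants and invoke $(\star)$ --- but instead of recognizing the exact collapse to the two opposite-pair intersections, it distributes the four-fold intersection into a union of double, triple and quadruple intersections of thickened quadrants and bounds each term separately, which is why it needs $(\star)$ for \emph{all} subcollections rather than just the two opposite pairs. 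Your version is tighter in a second respect: by putting the $r_0$-thickening on the quadrants \emph{before} applying the outer $r$-thickening (using only $(U\cup V)_r=U_r\cup V_r$, which is an exact identity), you never need to compare $Z_{r_0+s}$ with $(Z_{r_0})_s$; the paper's reduction passes through an inclusion of the form $Z_{r_0+s}\subseteq(Z_{r_0})_s$, which is automatic in geodesic spaces but not in a general proper metric space. The only cosmetic imprecision is your remark that the $r=0$ specialization gives the untickened identity (since $Z_0$ is the closure of $Z$, not $Z$ itself); but the identity $X_{r_0}\cap(X^c)_{r_0}\cap Y_{r_0}\cap(Y^c)_{r_0}=(A_{r_0}\cap D_{r_0})\cup(B_{r_0}\cap C_{r_0})$ follows directly from the same Boolean computation applied to the sets $A_{r_0},\dots,D_{r_0}$ themselves, so nothing is lost.
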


\begin{proof}
For any $r \geq 0$,
\begin{align*}
&X_r\cap (X^c)_r\cap Y_r\cap (Y^c)_r\\
&\quad=\big[(X\cap Y)_r\cup (X\cap Y^c)_r\big]\cap\big[(X^c\cap Y)_r)\cup(X^c\cap Y^c)_r\big]\\
&\qquad\quad\cap\big[(X\cap Y)_r\cup(X^c\cap Y)_r\big]\cap \big[(X\cap Y^c)_r\cup(X^c\cap Y^c)_r\big].
\end{align*}
expands into a union of either double, triple, or quadruple intersections of thickenings of ``quarter spaces''.
If now $r \geq r_0$ where $r_0$ is the thickening radius from Definition \ref{dfn:poly.coarse}, say $r = r_0 + s$, then
due to polynomial excisiveness, any of the triple intersections satisfies, e.g.,
\begin{align*}
&(X\cap Y)_r\cap(X^c\cap Y)_r\cap (X^c\cap Y^c)_r\\
&\qquad\qquad\qquad \subseteq\big((X\cap Y)_{r_0}\cap(X^c\cap Y)_{r_0}\cap(X^c\cap Y^c)_{r_0}\big)_{s^\mu}\\
&\qquad\qquad\qquad \subseteq \big(X_{r_0}\cap (X^c)_{r_0}\cap Y_{r_0}\cap(Y^c)_{r_0}\big)_{s^\mu},
\end{align*}
for some $\mu \geq 1$; similarly for the other triple and quadruple intersections.
The double intersections are
\[
(X\cap Y)_r\cap (X^c\cap Y^c)_r\;\;\;{\rm and}\;\;\;(X\cap Y^c)_r\cap(X^c\cap Y)_r,
\]
and polynomial excisiveness ensures that they are each contained within
\[
\big(X_{r_0}\cap (X^c)_{r_0}\cap Y_{r_0}\cap(Y^c)_{r_0}\big)_{s^\mu}
\]
for some $\mu \geq 1$.
So we conclude that for all $s>0$,
\[
(X_{r_0})_s\cap((X^c)_{r_0})_s\cap(Y_{r_0})_s\cap((Y^c)_{r_0})_s\subseteq\big(X_{r_0}\cap (X^c)_{r_0}\cap Y_{r_0}\cap(Y^c)_{r_0}\big)_{s^\mu},
\]
meaning that $X_{r_0},(X^c)_{r_0},(Y)_{r_0},(Y^c)_{r_0}$ is polynomially excisive.
\end{proof}

We may obtain polynomial 2-partitions from polynomially transverse half-spaces:

\begin{lem}\label{lem:partition.from.multipartition.rd}
Let $X,Y\subseteq M$ be polynomially transverse half-spaces. Then
\begin{equation*}
(X, X^c\cap Y, X^c\cap Y^c)\qquad \text{and} \qquad (Y, X\cap Y^c, X^c\cap Y^c)
\end{equation*}
are polynomial 2-partitions of $M$.
\end{lem}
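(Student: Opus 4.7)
The plan is to inherit the 2-partition property from Lemma~\ref{lem:partition.from.multipartition} (since polynomial transversality includes coarse transversality of $X,X^c,Y,Y^c$) and to reduce the required polynomial excisiveness of thickenings of the partition sets to the polynomial excisiveness of thickenings of the quadrants, which is exactly Condition~$(\star)$ in Definition~\ref{dfn:poly.coarse}.

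Concretely, I would fix the radius $r_0$ furnished by $(\star)$ and aim to show that $X_{r_0}$, $(X^c\cap Y)_{r_0}$, $(X^c\cap Y^c)_{r_0}$ is polynomially excisive; the analogous statement for $(Y, X\cap Y^c, X^c\cap Y^c)$ then follows by interchanging the roles of $X$ and $Y$. The key decomposition is $X=(X\cap Y)\sqcup(X\cap Y^c)$, which, combined with the identity $(A\cup B)_s=A_s\cup B_s$, expands
\[
X_{r_0+s}\cap (X^c\cap Y)_{r_0+s}\cap (X^c\cap Y^c)_{r_0+s}
\]
into a union of two triple intersections of quadrant thickenings of the form $(X\cap Y)_{r_0+s}\cap (X^c\cap Y)_{r_0+s}\cap(X^c\cap Y^c)_{r_0+s}$ and $(X\cap Y^c)_{r_0+s}\cap (X^c\cap Y)_{r_0+s}\cap(X^c\cap Y^c)_{r_0+s}$.

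To each such triple intersection I would apply polynomial excisiveness of the corresponding three quadrant thickenings, supplied by~$(\star)$, to obtain containment in an $s^{\mu}$-thickening of the triple intersection of the un-thickened quadrant thickenings $(X\cap Y)_{r_0}\cap(X^c\cap Y)_{r_0}\cap(X^c\cap Y^c)_{r_0}$ (respectively with $Y^c$ in the first slot). Since each quadrant is contained in the appropriate partition set (e.g.\ $X\cap Y\subseteq X$ and $X\cap Y^c\subseteq X$), these sets are in turn contained in $X_{r_0}\cap(X^c\cap Y)_{r_0}\cap(X^c\cap Y^c)_{r_0}$, and taking the maximum of the two exponents $\mu$ produced by $(\star)$ gives a single polynomial bound valid for the union.

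The main obstacle here is not depth but bookkeeping: one must track which finite subcollections of the four quadrant thickenings are invoked and make sure a single choice of $r_0$ and $\mu$ handles all the triple intersections that appear after expansion; this is where the full strength of $(\star)$ (applied to arbitrary subcollections of quadrant thickenings) is needed, rather than just polynomial excisiveness of the complete four-tuple.
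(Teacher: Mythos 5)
Your proposal is correct and follows essentially the same route as the paper: inherit the 2-partition property from Lemma~\ref{lem:partition.from.multipartition}, decompose $X=(X\cap Y)\sqcup(X\cap Y^c)$ so that the triple intersection of thickenings splits into a union of two triple intersections of quadrant thickenings, apply Condition~$(\star)$ to each, and absorb the results using the containment of each quadrant in the corresponding partition set. One small caution: the required containment has left-hand side $\bigl(X_{r_0}\bigr)_s\cap\bigl((X^c\cap Y)_{r_0}\bigr)_s\cap\bigl((X^c\cap Y^c)_{r_0}\bigr)_s$ with \emph{iterated} thickenings, and $(\star)$ likewise controls only $\bigl((X\cap Y)_{r_0}\bigr)_s\cap\cdots$; since in a general proper metric space $(Z_{r_0})_s$ may be strictly smaller than $Z_{r_0+s}$, you should carry out the expansion with $\bigl(\,\cdot\,_{r_0}\bigr)_s$ throughout (using $X_{r_0}=(X\cap Y)_{r_0}\cup(X\cap Y^c)_{r_0}$ and $(A\cup B)_s=A_s\cup B_s$) rather than with $\,\cdot\,_{r_0+s}$, which $(\star)$ does not directly control.
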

\begin{proof}
By assumption, $X,X^c,Y,Y^c$ are coarsely transverse, and Lemma \ref{lem:partition.from.multipartition} says that $(X,X^c\cap Y,X^c\cap Y^c)$ is a 2-partition.

Let $r_0>0$ be the thickening radius in Definition \ref{dfn:poly.coarse}. 
We shall verify that the collection \mbox{$X_{r_0}, (X^c\cap Y)_{r_0}, (X^c\cap Y^c)_{r_0}$} is polynomially excisive. 
By the polynomial excisiveness of the $r_0$-thickened quadrants, there exists $\mu\geq 1$ such that for all $s>0$,
\begin{align*}
&(X_{r_0})_s\cap((X^c\cap Y)_{r_0})_s\cap((X^c\cap Y^c)_{r_0})_s\\
&\quad=\big[((X\cap Y)_{r_0})_s\cup((X\cap Y^c)_{r_0})_s\big]\cap((X^c\cap Y)_{r_0})_s\cap((X^c\cap Y^c)_{r_0})_s\\
&\quad=\big[((X\cap Y)_{r_0})_s\cap((X^c\cap Y)_{r_0})_s\cap((X^c\cap Y^c)_{r_0})_s\big]\\
&\qquad\qquad\qquad \cup\big[((X\cap Y^c)_{r_0})_s\cap ((X^c\cap Y)_{r_0})_s\cap((X^c\cap Y^c)_{r_0})_s\big]\\
&\quad\subseteq\big((X\cap Y)_{r_0}\cap(X^c\cap Y)_{r_0}\cap (X^c\cap Y^c)_{r_0}\big)_{s^\mu}\\
&\qquad\qquad\qquad \cup \big((X\cap Y^c)_{r_0}\cap(X^c\cap Y)_{r_0}\cap (X^c\cap Y^c)_{r_0}\big)_{s^\mu}\\
&\quad=\big((X)_{r_0}\cap(X^c\cap Y)_{r_0}\cap (X^c\cap Y^c)_{r_0}\big)_{s^\mu},
\end{align*}
as required.
Swapping the roles of $X,Y$ gives the result for the other partition. 
\end{proof}

\begin{thm}\label{thm:quantization.RD}
Let $X,Y\subseteq M$ be polynomially transverse half-spaces. 
For any idempotent $P=P^2\in\sB(M)$,
\[
2\pi i\cdot\Tr[P_X,P_Y]=4\pi i\cdot\langle X,X^c\cap Y,X^c\cap Y^c;P\rangle\;\in\;\ZZ,
\]
and is locally constant in $P$.
\end{thm}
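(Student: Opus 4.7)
The plan is to mimic the proof of Theorem~\ref{thm:integrality.fp} word-for-word, replacing the role of $\sB_{\rm fin}(M)$ by $\sB(M)$, the role of coarse transversality by polynomial transversality, and the role of the ``propagation bound $r$'' argument by invocations of Lemma~\ref{lem:half.spaces.auto.poly}, Lemma~\ref{lem:poly.trace.class}, and the three abstract properties of Theorem~\ref{ThmAbstractPropertiesBMZ}. Everything reduces to Proposition~\ref{prop:minimal.quantization}: once we verify that the operators in \eqref{eqn:minimal.tc.assumptions} are trace class, integrality of $2\pi i\cdot\Tr[P_X,P_Y]$ is free, and the factor of $2$ in the identity with $\langle X,X^c\cap Y,X^c\cap Y^c;P\rangle$ comes from the cobordism-plus-swap argument of Example~\ref{ex:2D.cobordism}.

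\textbf{Step 1: trace class conditions.} First I would expand, exactly as in the proof of Theorem~\ref{thm:integrality.fp},
\[
(P_X-P_X^2)(P_Y-P_Y^2)=P\cdot(XP)\cdot(X^cP)\cdot(YP)\cdot(Y^cP),
\]
and similarly for the three variants with $*$'s and reversed order. By property (i) of Theorem~\ref{ThmAbstractPropertiesBMZ}, the four middle factors lie in $\sB(M;X)$, $\sB(M;X^c)$, $\sB(M;Y)$, $\sB(M;Y^c)$ respectively. By Lemma~\ref{lem:half.spaces.auto.poly}, some common thickening of $X,X^c,Y,Y^c$ is polynomially excisive, and by coarse transversality $X_r\cap(X^c)_r\cap Y_r\cap(Y^c)_r$ is bounded. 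Applying Lemma~\ref{lem:poly.trace.class} (or directly properties (ii)--(iii) of Theorem~\ref{ThmAbstractPropertiesBMZ}) to $XP,X^cP,YP,Y^cP$ and multiplying by the bounded operator $P$ on both sides shows each of the four operators in \eqref{eqn:minimal.tc.assumptions} is trace class. Proposition~\ref{prop:minimal.quantization} then delivers $2\pi i\cdot\Tr[P_X,P_Y]\in\ZZ$.

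\textbf{Step 2: identification with the partition pairing.} Here I would simply rerun Lemma~\ref{LemmaTraceHalfSpaces}. The algebraic identity \eqref{eqn:subdivided.commutator} is purely formal, and each term on its right-hand side is individually trace class in our setting: $\Tr[P_X,P_{X^c\cap Y}]$ and $\Tr[P_Y,P_{X\cap Y^c}]$ are the polynomial-partition pairings of Proposition~\ref{prop:poly.partition.idempotent.pairing} for the polynomial $2$-partitions provided by Lemma~\ref{lem:partition.from.multipartition.rd}, while $\Tr[P_{X^c\cap Y},P_{X\cap Y^c}]=0$ follows from the polynomial analogue of Lemma~\ref{lem:vanishing.pair.coarsely.transverse}: the quadrants $X^c\cap Y$ and $X\cap Y^c$ are coarsely transverse and (after $r_0$-thickening) polynomially excisive by Definition~\ref{dfn:poly.coarse}, so by Lemma~\ref{lem:poly.trace.class} each of $P_{X^c\cap Y}P_{X\cap Y^c}$ and $P_{X\cap Y^c}P_{X^c\cap Y}$ is individually trace class, and Lidskii's theorem gives vanishing of the trace of their commutator. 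To equate the two partition pairings up to a sign, I invoke Example~\ref{ex:2D.cobordism} together with the cobordism invariance of Proposition~\ref{prop:cobordism}, whose proof extends verbatim to $\sB(M)$ once we note that in the key step one only uses trace class-ness of $[P_W,P_C]$, which here follows from the polynomial excisiveness of the quadrant thickenings. This yields $\Tr[P_X,P_Y]=2\cdot\langle X,X^c\cap Y,X^c\cap Y^c;P\rangle$.

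\textbf{Step 3: local constancy.} This is immediate: Proposition~\ref{prop:poly.partition.idempotent.pairing} states that $P\mapsto\langle X,X^c\cap Y,X^c\cap Y^c;P\rangle$ is continuous on idempotents in the Fr\'echet algebra $\sB(M)$, and an integer-valued continuous function is locally constant.

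The main obstacle is bookkeeping in Step~1: one has to keep track of which ideal $\sB(M;Z)$ each factor lives in and verify that the relevant four-fold collection of ``quadrant'' thickenings is polynomially excisive --- but this is precisely what Lemma~\ref{lem:half.spaces.auto.poly} is designed to supply, so the argument is essentially mechanical once that lemma is in hand.
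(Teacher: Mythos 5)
Your proposal is correct and follows essentially the same route as the paper's own proof: the same factorization $P(XP)(X^cP)(YP)(Y^cP)$ fed into Lemma~\ref{lem:half.spaces.auto.poly} and Lemma~\ref{lem:poly.trace.class}, the same decomposition \eqref{eqn:subdivided.commutator} with Lidskii for the cross term, the same cobordism-plus-swap from Example~\ref{ex:2D.cobordism}, and continuity from Proposition~\ref{prop:poly.partition.idempotent.pairing}. The only cosmetic point is that Proposition~\ref{prop:minimal.quantization} also requires $[P_X,P_Y]$ itself to be trace class, so its invocation belongs after your Step~2 rather than at the end of Step~1 --- but since Step~2 supplies exactly that, the argument is complete.
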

\begin{proof}
The proof is similar to that of Theorem \ref{thm:integrality.fp}--Corollary \ref{cor:quantization.fp.half.space}. 

As before, we first note that $P_X-P_X^2=P(XP)(X^cP)$ and $P_Y-P_Y^2= P(YP)(Y^cP)$,
\[
(P_X-P_X^2)(P_Y-P_Y^2)= P(XP)(X^cP)(YP)(Y^cP).
\]
The sets $X,X^c,Y,Y^c$ are coarsely transverse, and Lemma \ref{lem:half.spaces.auto.poly} says that they have polynomially excisive thickenings. 
By Lemma \ref{lem:poly.trace.class},
\begin{align*}
(XP)(X^cP)(YP)(Y^cP)
\end{align*}
is trace class, thus $(P_X-P_X^2)(P_Y-P_Y^2)$ is trace class. 
Likewise for $(P_Y-P_Y^2)(P_X-P_X^2)$.
The trace class property of
\[
(P_X-P_X^2)^*(P_Y-P_Y^2)= P^*(X^cP^*)(XP^*P)(YP)(Y^cP)
\]
also follows from Lemma \ref{lem:poly.trace.class}, since $P^*, P^*P\in\sB(M)$.
Likewise for $(P_Y-P_Y^2)(P_X-P_X^2)^*$.

As for $[P_X,P_Y]$, recall from Eq.\ \eqref{eqn:subdivided.commutator} that
\[
[P_X,P_Y]=[P_X, P_{X^c\cap Y}]-[P_Y, P_{X\cap Y}]+[P_{X^c\cap Y},P_{X\cap Y^c}].
\]
Note that $X^c\cap Y, X\cap Y^c$ are coarsely transverse, and they have polynomially excisive thickenings by Condition $(\star)$ of Definition \ref{dfn:poly.coarse}. 
So the last commutator in the above equation, 
\[
[P_{X^c\cap Y},P_{X\cap Y^c}]=P(X^c\cap Y)P(X\cap Y^c)P-P(X\cap Y^c)P(X^c\cap Y)P,
\]
is trace class (Lemma \ref{lem:poly.trace.class}) and traceless (Lidskii's theorem).
 Also, by Lemma \ref{lem:partition.from.multipartition.rd}, we have polynomial 2-partitions
\begin{equation}
(X,X^c\cap Y,X^c\cap Y^c),\qquad(Y,X\cap Y^c,X^c\cap Y^c),\label{eqn:poly-partitions}
\end{equation}
and Prop.\ \ref{prop:poly.partition.idempotent.pairing} says that
\begin{align*}
\Tr[P_X, P_{X^c\cap Y}]&=\langle X,X^c\cap Y, X^c\cap Y^c;P\rangle\\
\Tr[P_Y, P_{X\cap Y}]&=\langle Y,X\cap Y^c, X^c\cap Y^c;P\rangle
\end{align*}
make sense. Therefore $[P_X,P_Y]$ is trace class as well. 
We may now apply Prop.\ \ref{prop:minimal.quantization} to conclude that $2\pi i\cdot\Tr[P_X,P_Y]\in\ZZ$.

In fact, Example \ref{ex:2D.cobordism} furnishes a cobordism (up to a swap) between the polynomial 2-partitions in Eq.\ \eqref{eqn:poly-partitions}, implemented by $W=X\cap Y$. The sets $W=X\cap Y$ and $C=X^c\cap Y^c$ are coarsely transverse, and have polynomially excisive thickenings (Condition $(\star)$ of Definition \ref{dfn:poly.coarse}), so Lemma \ref{lem:poly.trace.class} and Lidskii's theorem imply $\Tr[P_W,P_C]=0$. 
This means that the cobordism invariance result, Eq.\ \eqref{eqn:cobordism.difference.vanish} of Prop.\ \ref{prop:cobordism}, still applies. 
Thus,
\begin{align*}
\Tr[P_X,P_Y]&=\Tr[P_X, P_{X^c\cap Y}]-\Tr[P_Y, P_{X\cap Y}]\\
&=\langle X,X^c\cap Y, X^c\cap Y^c;P\rangle-\langle Y,X\cap Y^c, X^c\cap Y^c;P\rangle\\
&=2\cdot\langle X,X^c\cap Y, X^c\cap Y^c;P\rangle.
\end{align*}
Continuity of the pairing, Prop.\ \ref{prop:poly.partition.idempotent.pairing}, implies that $\Tr[P_X,P_Y]$ is continuous in $P$, thus locally constant.
\end{proof}

\begin{rem}
If we start with an arbitrary polynomial 2-partition $(A,B,C)$, the bisection construction of Lemma \ref{lem:partition.to.multiplatition.rd} may not result in \emph{polynomially} transverse half-spaces, so Theorem \ref{thm:quantization.RD} may not be applicable. Because of this, we do not know whether \mbox{$4\pi i\cdot\langle A,B,C;P\rangle\in\ZZ$} holds for all  idempotents $P$ in $\sB(M)$, as was the case for $P\in \sB_{\rm fin}(M)$ (Theorem \ref{thm:integrality.fp}).
\end{rem}

\section{Coarse cohomology viewpoint}\label{sec:coarse.cohomology.viewpoint}

The reader familiar with coarse cohomology and index theory, as developed in \cite{Roe-cohom}, may recognize that our constructions and results can be placed in this abstract framework.

\subsection{Pairing with $K$-theory}\label{sec:additivity}
A \emph{coarse $q$-cochain} on $M$, in the sense of \S 2.2 of \cite{Roe-cohom}, is a locally bounded Borel map $\varphi:M^{q+1}\to\RR$, such that $\Supp(\varphi)\cap (\Delta)_r$ is bounded for all $r>0$, where $\Delta$ denotes the diagonal in $M^{q+1}$. 
The coboundary map is
\[
\delta\varphi(x_0,\ldots,x_{q+1})=\sum_{i=0}^{q+1}(-1)^i \varphi(x_0,\ldots,\hat{x_i},\ldots,x_{q+1}),
\]
where $\hat{x_i}$ denotes omission of the $i$-th argument. Restricting to antisymmetric cochains, we obtain a cochain complex, and the \emph{coarse cohomology groups} $HX^\bullet(M)$.

A $q$-partition $(A_0,\ldots,A_q)$ in the sense of our Definition \ref{dfn:partition} determines an antisymmetric coarse $q$-cochain
\[
\varphi_{A_0,\ldots,A_q}:(x_0,\ldots,x_q)\mapsto \sum_{\sigma \in S_{0, \dots, q}} \mathrm{sgn}(\sigma) \chi_{A_{\sigma_0}}(x_0)\ldots \chi_{A_{\sigma_q}}(x_q),
\]
where $S_{0, \dots, q}$ denotes the group of permutations of $\{0, \dots, q\}$.
 Furthermore, $\delta\varphi_{A_0,\ldots,A_q}=0$, since it is antisymmetric in $q+2$ arguments, but depends only on the question of which of the mutually disjoint $A_i, i=0,\ldots,q$ its arguments belong to. 
 Thus, a $q$-partition defines a coarse cohomology class. 
 It is not hard to check that the partition of a Euclidean space described in Example \ref{ex:Euclidean.simplex} determines a non-trivial coarse cohomology class. We may also verify that cobordant $q$-partitions determine the same coarse cohomology class.

As explained in \S 4.2 of \cite{Roe-cohom}, a coarse cohomology class $[\varphi]$ determines a \emph{cyclic cohomology class} $\chi[\varphi]$ on $\sB_{\rm fin}(M)$. (Here $\chi$ denotes Roe's Connes character map, not a characteristic function.) 
In even degree, $\chi[\varphi]$ has a pairing with idempotents in $\sB_{\rm fin}(M)$.
 For example, our partition-idempotent pairing formula, Definition \ref{dfn:partition.idempotent.pairing}, is
\[
\langle A,B,C;P\rangle\equiv \Tr[A,B,C]_P=\chi[\varphi_{A,B,C}](P,P,P),
\]
i.e., the evaluation of the associated cyclic 2-cocycle on $P$.

By standard constructions (II.2 of \cite{Connes}), the pairing is canonically extended to idempotents coming from matrix algebras over the unitization $\sB_{\rm fin}(M)^+$. 
Then one uses a general result that inner automorphisms on an algebra induce the identity morphism on its cyclic cohomology (Prop.\ II.5 of \cite{Connes}).
The upshot is that the pairing descends to the \emph{algebraic} $K_0$-class of $P$.

Let $P_1,P_2$ be orthogonal idempotents, $P_1P_2=0=P_2P_1$. The identity
\[
\begin{pmatrix}P_1+P_2 & 0 \\ 0 & 0 \end{pmatrix}=\underbrace{\begin{pmatrix} P_1 & 1-P_1\\ P_1-1 & P_1\end{pmatrix}}_{u}\begin{pmatrix}P_1 & 0 \\ 0 & P_2\end{pmatrix}\underbrace{\begin{pmatrix} P_1 & P_1-1\\ 1-P_1 & P_1\end{pmatrix}}_{u^{-1}}
\]
shows that $[P_1+P_2]=[P_1]+[P_2]=[P_1\oplus P_2]$ in $K_0(\mathscr{B}_{\rm fin}(M))$, so the pairing is actually additive with respect to \emph{internal} orthogonal sum of idempotents, not just external direct sum.
This additivity is not obvious since $P$ appears three times in each term of the pairing formula. The internal additivity is important because we often utilize functional calculus of a Hamiltonian operator on a fixed Hilbert space.

\subsection{Relation to Roe's index theorems}\label{sec:Roe.index}
In \S4 of \cite{Roe-cohom}, Roe studied the abstract \emph{coarse indices} ${\rm Ind}(D)\in K_0(\sB_{\rm fin}(M))$ of Dirac-type operators $D$ on $M$, see (4.32) of \cite{Roe-cohom}. In his Prop.\ 5.29, the pairing of $\chi[\varphi]$ with ${\rm Ind}(D)$ was identified, via an index theorem (\S4.42 of \cite{Roe-cohom}), with a $K$-homology--$K$-theory pairing on the so-called \emph{Higson corona} $N$ of $M$ (modulo technical assumptions on $[\varphi]$ satisfied by those built from partitions). It follows that $\langle \chi[\varphi];{\rm Ind}(D)\rangle$ is integral, up to an appropriate normalization factor. In particular, for $D$ the spin Dirac operator on $M=\RR^{2m}$, and $\varphi$ a generator of $HX^{2m}(\RR^{2m})$, the result
\begin{equation*}
\langle \chi[\varphi];{\rm Ind}(D)\rangle=\frac{m!}{(2m)!(2\pi i)^m}
\end{equation*}
was calculated in \S4.4 of \cite{Roe-cohom}. For $m=1$, this coincides with the normalization factor $\frac{1}{4\pi i}$ in our Theorem \ref{thm:integrality.fp}. 

\medskip

On the one hand, Roe's results imply integrality for the pairing of certain coarse cohomology classes with the Dirac index class in $K_0(\sB_{\rm fin}(M))$. However, ${\rm Ind}(D)$ is represented as a formal \emph{difference} $[P]-[1_k]$, with $P=P^2=P^*\in{\rm M}_\infty(\sB_{\rm fin}^+(M))$ coming from the \emph{unitization}. 
For projections in unitizations, the pairing formula ignores the scalar part (see pp.\ 276 of \cite{Connes}), and consequently, the derivation of the commutator-trace formula, Eq.\ \eqref{eqn:commutator.trace.class}, may not hold. So Roe's Dirac index class may not provide an example of a projection $P\in{\rm M}_\infty(\sB_{\rm fin}(M))$ having non-trivial $\Tr[P_X,P_Y]$. It is unclear to us whether ${\rm Ind}(D)$ admits a representation as ${\rm Ind}(D)=[P^\prime]$ for some projection $P^\prime$ in ${\rm M}_\infty(\sB_{\rm fin}(M))$.

On the other hand, our Theorem \ref{thm:integrality.fp}--Corollary \ref{cor:quantization.fp.half.space} give the integrality of the quantities \mbox{$2\pi i\cdot\Tr[P_X,P_Y]$} and $4\pi i\cdot[P_A,P_B]$, for \emph{all} idempotents $P$ in (matrix algebras over) $\sB_{\rm fin}(M)$, and \emph{all} 2-partitions $(A,B,C)$ and coarsely transverse half-spaces $X,Y$.

\medskip

 As explained at the beginning of Section \ref{sec:extend.pairing}, it is unclear to us whether non-zero integer pairings can be realized with the finite-propagation idempotents from $\sB_{\rm fin}(M)$, without passing to the unitization. 
 This difficulty is avoided with $\sB(M)$. 
For example, on $M=\RR^2$, the Dirac operator coupled to a magnetic vector potential for uniform magnetic field has coarse index represented by a projection in $\sB(M)$ --- the spectral projection for the lowest Landau level. This is briefly recalled in Section \ref{sec:physics}. So our Theorem \ref{thm:quantization.RD} for $\sB(M)$ is not vacuous.

\section{Physics applications}\label{sec:physics}

\subsection{Hall conductance}\label{sec:Hall.conductance}
For applications to physics, we have $M$ a Riemannian manifold, and a self-adjoint magnetic Schr\"{o}dinger operator $H$ as the Hamiltonian. If $H$ models non-interacting fermions on $M$, then the low energy states of $H$ are successively occupied by the fermions in the sample in order of increasing energy, until the Fermi energy $E\in \RR$ is reached. 

Assuming that $E$ lies in a spectral gap of $H$, we can obtain the spectral projection onto energies below $E$ by functional calculus, $P=P_E=\psi(H)$, with $\psi$ a smooth compactly-supported real-valued function which equals 1 on the part of the spectrum of $H$ below $E$. This $P$ is called a \emph{Fermi projection} at Fermi level $E$. It is generally not finite propagation, but quite often, one may prove that $P$ has rapid decrease (e.g.\ Appendix A of \cite{ASS2}, \cite{LT-cobordism}), so $P=P^*=P^2\in\sB(M)$. Then these Fermi projections have integral pairings $2\pi i\cdot\Tr[P_X,P_Y]$ with polynomially transverse half-spaces, by Theorem \ref{thm:quantization.RD}.

Let us rewrite
\begin{equation}\label{eqn:Kubo.formula}
\begin{aligned}
[P_X,P_Y]&\equiv PXPYP-PYPXP\nonumber\\
&=P(XP-PX)(YP-PY)-P(YP-PY)(XP-PX)\nonumber\\
&=P[[X,P],[Y,P]].
\end{aligned}
\end{equation}
The last expression in Eq.\ \eqref{eqn:Kubo.formula} would be familiar to physicists as a \emph{Kubo formula} for the Hall conductance (see Eq.\ 1.4 of \cite{ES}, Eq.\ 128--129 of \cite{Kitaev}, \S 6 of \cite{ASS2}), typically used when $M=\RR^2$ or $\ZZ^2$, $X$ is the right half-plane, and $Y$ is the upper half-plane. More specifically, one applies an electric potential difference between $X^c$ and $X$, and considers the induced current from $Y^c$ to $Y$, taken in an adiabatic limit so that linear response may be justified (see, e.g., \cite{ES} for details). We mention that in the homogeneous Euclidean geometry, with uniform magnetic field, one sometimes takes $X,Y$ in Eq.\ \eqref{eqn:Kubo.formula} to be, respectively, the multiplication operators by the global $x$ and $y$ coordinate functions. 
Then a trace-per-unit-volume (thermodynamic limit) is used.
 In this simplified geometry, the trace class property and integrality of the trace of Eq.\ \eqref{eqn:Kubo.formula} were investigated in \cite{EGS}, and in \cite{ASS2} using the relative index of projections, as well as in \cite{BES} using noncommutative geometry and cyclic cohomology of the magnetic Brillouin torus. 

The expression $4\pi i\cdot[P_A,P_B]$, with $(A,B,C)$ a (polynomial) 2-partition, also has the interpretation as the response to a magnetic flux localized at the intersection of the (thickened) $A,B,C$, as discussed in the Supplementary Information of \cite{Mitchell} in the context of amorphous topological phases.

Our paper places such quantization results in a vastly more general and rigorous context, and makes their large-scale geometric origin manifest.

\subsection{Localization and generic triviality of pairing}\label{sec:trivial.pairing.localized}

For a large class of projections $P\in\mathscr{B}(M)$, we automatically have trivial $\Tr[P_X,P_Y]$. A non-exhaustive list is:
\begin{enumerate}[(1)]
\item $P$ is finite-rank.
\item $P$ is supported in a bounded set, or more generally, within any half-space $X$. In this case, by choosing $X$ to be slightly larger, we have $P_X\equiv P\chi_{X}P=P$, which commutes with $P_Y=P\chi_YP$.
\item\label{point:error} Either $P_X$ or $P_Y$ differs from a projection only by a trace class operator. This is because the commutator of a projection with any operator has vanishing trace.
\item $M$ is an orientable manifold, or CW-complex, graph, etc, and $P$ is invariant under orientation reversal, which would effect $[\varphi_{A,B,C}]\mapsto -[\varphi_{A,B,C}]$. 
\item As $P$ is self-adjoint, the commutator $[P_X,P_Y]$ is skew-adjoint, with purely imaginary trace. So if $P$ is \emph{real}, then this commutator-trace would vanish.
\end{enumerate}
We note that in the quantum Hall effect, the external magnetic field breaks the parity symmetry of the Schr\"{o}dinger operator, as well as its invariance under complex conjugation (time-reversal), thereby avoiding the mandatory trivialization of the pairing. The idea of Chern insulators is essentially based on models which break these fundamental symmetries without using magnetic fields.

``Generic vanishing'' of the pairing is actually a desirable feature. Additivity (see Section \ref{sec:additivity}) says that a non-trivial pairing is stable against the orthogonal addition/formal subtraction of any trivial $P^\prime$ of the above form. This means that any $P$ possessing a non-trivial pairing with \emph{some} partition must be ``extremely delocalized'' over all of $M$, in the sense of not admitting \emph{any} orthogonal splitting into trivial projections; compare the Wannier localizability problem studied in \cite{LT-wannier}.

\subsection{Nontrivial pairing for Landau level spectral projection}\label{sec:algebra.comparison}
It was explained in \cite{LT-cobordism,LT-hyp,KLT-coarse} that for a Landau Hamiltonian on a rather general complete Riemannian 2-manifold $M$, the spectral projection $P_{\rm Lan}$ for an isolated Landau level is the kernel projection for a twisted Dirac operator $D$, and lies in $\sB(M)$, so it represents the Dirac index class  ${\rm Ind}(D)$ considered in $K_0(\sB(M))$.

Now let $X,Y$ be polynomially transverse half-spaces, and $(A,B,C)$ be the associated polynomial 2-partition (Lemma \ref{lem:partition.from.multipartition.rd}),
\[
A=X,\qquad B=X^c\cap Y, \qquad C=X^c\cap Y^c.
\]
The coarse 2-cochain $\varphi_{A,B,C}$ determines a cyclic cohomology class of $\sB(M)$, not just of $\sB_{\rm fin}(M)$. Including a subscript to clarify where pairings are taken, we have for $P=P_{\rm Lan}$,
\begin{align*}
\tfrac{1}{2}\Tr[P_X,P_Y]=\langle A,B,C;P_{\rm Lan}\rangle_{\sB(M)}&=\langle\chi[\varphi_{A,B,C}];[P_{\rm Lan}]\rangle_{\sB(M)}\\
&=\langle\chi[\varphi_{A,B,C}];{\rm Ind}(D)\rangle_{\sB(M)}\\
&=\langle\chi[\varphi_{A,B,C}];{\rm Ind}(D)\rangle_{\sB_{\rm fin}(M)}.
\end{align*}
As mentioned in Section \ref{sec:Roe.index}, Roe provides a formula for the last pairing in \cite[Theorem 4.22]{Roe-cohom}, which reduces to $\tfrac{1}{4\pi i}$ for the case of $M=\RR^2$ and $[\varphi_{A,B,C}]$ a generator of $HX^2(M)$.

For $M=\RR^2$, the non-triviality of the Hall conductance of a Landau level projection is well-known (e.g.\ \cite{ASS2, BES}), and the above discussion provides an independent verification.

\subsection{Large finite-sized sample}\label{sec:finite.size}
An actual physical sample $M$ is bounded. 
Let us consider $M$ as a compact subspace inside a larger unbounded $\tilde{M}$. 
Consider on $\tilde{M}$, a fictitious gapped Hamiltonian $\tilde{H}$ (e.g.\ Landau Hamiltonian), with a Fermi projection $\tilde{P}\in\sB(\tilde{M})$ whose Hall conductance is, by Theorem \ref{thm:quantization.RD}, necessarily exactly quantized, 
\[
\sigma_{\tilde{M}}(\tilde{P}):=4\pi i\cdot\Tr[A,B,C]_{\tilde{P}}=4\pi i\cdot\Tr[\tilde{P}_A,\tilde{P}_B]\in \ZZ.
\]
As by Theorem \ref{ThmAbstractPropertiesBMZ}, $[A, B, C]_P$ decays rapidly away from $A_{r_0} \cap B_{r_0} \cap C_{r_0}$ for some sufficiently large $r_0>0$, we expect that
\[
\sigma_{\tilde{M}}(\tilde{P})\approx \sigma_{K}(\tilde{P}):=4\pi i\cdot\Tr[A \cap K, B \cap K, C \cap K]_{\widetilde{P}},
\]
where $K=A_{r}\cap B_{r}\cap C_{r}$ with $r\gg r_0$.

Assume that the sample $M$ is very large, in the sense that $M\supseteq K_{s}$ for some $s\gg  0$. We may think of $K$ as the ``bulk'' of the sample, far from the boundary $\partial M$. 
On $M$, the true Hamiltonian $H$ is formally the same operator as $\tilde{H}$, but with some local boundary conditions imposed on $\partial M$, and its Fermi projection is now $P\in\sB(M)$ (we may need to shift the cut-off energy $E$ slightly into a gap of the discrete spectrum of $H$). 
We have the 2-partition $(A \cap M, B \cap M, C \cap M)$ of $M$, but because $M$ is bounded, we would obtain
\[
\sigma_M(P)=4\pi i\cdot\Tr[A \cap M, B \cap M, C \cap M]_P=4\pi i\cdot \Tr[P_{A \cap M},P_{B \cap M}]=0.
\]
Actually, we are interested in the \emph{bulk} Hall conductance of $P$,
\[
\sigma_K(P):=4\pi i\cdot \Tr[A \cap K, B \cap K, C \cap K]_P.
\]
Note that $(A \cap K, B \cap K, C \cap K)$ is not a partition of $M$, so $\sigma_K(P)$ need not vanish. 

Indeed, a brief calculation gives
\begin{align*}
\sigma_K(P)&=12\pi i\cdot\Tr\big((KPK)A(KPK)B(KPK)-(KPK)B(KPK)A(KPK)\big)\\
&=12\pi i\cdot\Tr\big(K[PKAKP,PKBKP]K\big),
\end{align*}
which is \emph{not} the trace of a commutator. The underlying reason is that $KPK$ is no longer idempotent. Similarly for $\sigma_K(\tilde{P})$. So the quantization result does not apply to $\sigma_K(P)$ or $\sigma_K(\tilde{P})$.

Because $K$ is far from the boundary of $M$, under suitable assumptions on $H$ and $\tilde{H}$, we expect that $KPK\approx K\tilde{P}K$, and so
\[
\sigma_K(P)\approx\sigma_K(\tilde{P})\approx\sigma_{\tilde{M}}(\tilde{P})\in\ZZ.
\]
In other words, for a large finite-sized sample, the bulk Hall conductance should be \emph{approximately} integral. Indeed, as $K$ and $M$ are increased in size, we expect the approximation to get better.

The technical details of the above scheme has quite a different analytic flavour to this paper, so we postpone them to a future work. The reader is referred to \cite{Mitchell} for some calculations of approximately quantized $\sigma_K(P)$ in large finite-sized amorphous lattice systems.

\subsection{Plateaux}
The QHE expert will be aware that there are plateaux in the integral Hall conductivity even as the Fermi level $E$ is varied in certain intervals $I$. This is an important experimental signature, on top of the quantization. A general explanation is that the part of the spectrum of the Hamiltonian in the interval $I$ comprises only \emph{localized} eigenstates which do not contribute to the Hall conductance, so their inclusion/exclusion from the Fermi projection is immaterial. If the spectrum in $I$ is discrete, then the change of the Fermi projection $P=P_E$ as $E$ is varied in $I$ does not affect the pairing, as discussed in Section \ref{sec:trivial.pairing.localized}. 

However, if the whole interval $I$ has spectrum, there is no spectral gap available in $I$ for the construction of the Fermi projection by smooth functional calculus, only a so-called mobility gap. This could arise due to random disorder potentials inducing Anderson localization (dense pure point spectrum in $I$); see \cite{KLT-coarse} for a different mechanism involving ``geometrical dirt''. Nevertheless, for $M=\ZZ^2$ or $\RR^2$, the mobility gap condition is known to be closely related to a certain exponential decay of the integral kernel of $P_E$ (on average), e.g., \cite{AG}, albeit non-uniformly, in the ``weakly-local'' sense of \S3 in \cite{ST}. 
This would suggest that for general $M$, variants of our $\sB(M)$ can be used for the $P_E$ when $E$ varies in such a mobility gap. The technicalities involved are of a somewhat different nature to the main ideas of this paper, so we also leave this investigation for future work.

\section*{Acknowledgements}
G.C.T.\ thanks Y.\ Avron for pointing out \cite{Kitaev} during a discussion on coarse geometry.
G.C.T.\ and M.L.\ would like to thank Ulrich Bunke and Alexander Engel for helpful discussions.
M.L.\ acknowledges support from SFB 1085 “Higher invariants” funded by the German Research
Foundation (DFG).

\end{document}